\numberwithin{equation}{section}
\renewcommand{\a}{\alpha}
\renewcommand{\c}{\gamma}
\renewcommand{\d}{\delta}
\newcommand{\e}{\varepsilon}
\newcommand{\f}{\varphi}
\newcommand{\g}{\psi}
\newcommand{\m}{\mu}
\newcommand{\s}{\sigma}
\newcommand{\x}{\xi}
\newcommand{\y}{\eta}
\newcommand{\z}{\zeta}
\renewcommand{\L}{\Lambda}
\newcommand{\co}{\mathbb{C}}
\newcommand{\re}{\mathbb{R}}
\newcommand{\ze}{\mathbb{Z}}
\renewcommand{\O}{\Omega}
\def\pa{\partial}
\newcommand{\supp}{\mathrm{{supp}}}
\newcommand{\edges}{\mathcal{E}}
\newcommand{\fourier}{\mathcal{F}}
\newcommand{\bra}[1]{\{#1\}}
\newcommand{\bigpare}[1]{\bigl(#1\bigr)}
\newcommand{\biggpare}[1]{\biggl(#1\biggr)}
\newcommand{\bigset}[2]{\bigl\{#1\bigm|#2\bigr\}}
\newcommand{\biggset}[2]{\biggl\{#1\biggm|#2\biggr\}}
\newcommand{\norm}[1]{\| #1 \|}
\newcommand{\bignorm}[1]{\bigl\| #1 \bigr\|}
\newcommand{\bigabs}[1]{\bigl| #1 \bigr|}
\newcommand{\jap}[1]{\langle #1 \rangle}
\newtheorem{thm}{Theorem}[section]
\newtheorem{lem}[thm]{Lemma}
\newtheorem{cor}[thm]{Corollary}
\theoremstyle{definition}
\newtheorem{ass}{Assumption}
\theoremstyle{remark}
\newtheorem{rem}{Remark}[section]
\newtheorem{example}{Examples}[section]
\title{Continuum limit for Laplace and Elliptic operators on lattices}
\author{
Keita Mikami\footnote{iTHEMS, RIKEN, 2-1 Hirosawa, Wako, Saitama, 351-0198, Japan, 
Email: keita.mikami@riken.jp}
\and
Shu Nakamura\footnote{Department of Mathematics, Faculty of Sciences, Gakushuin University, 1-5-1, Mejiro, Toshima, Tokyo, Japan 171-8588, Email: shu.nakamura@gakushuin.ac.jp}
\and
Yukihide Tadano\footnote{Department of Mathematics, Faculty of Science Division I, Tokyo University of Science, 1-3,Kagurazaka, Shinjuku, Tokyo, Japan 162-8601, E-mail: y.tadano@rs.tus.ac.jp}
}
\begin{document}
\maketitle

\begin{abstract}
Continuum limits of Laplace operators on general lattices are considered, and it is shown that these 
operators converge to elliptic operators on the Euclidean space in the sense of the generalized norm 
resolvent convergence. We then study operators on the hexagonal lattice, which does not apply the above general 
theory, but we can show its Laplace operator converges to the continuous Laplace operator in the continuum limit. We also study discrete operators on the square lattice corresponding to second order strictly elliptic operators with variable coefficients, and prove the generalized norm resolvent convergence in the continuum limit. 
\end{abstract}

\section{Introduction}\label{sec-intro}

In \cite{NaTad}, the continuum limit of Schr\"odinger operators on the square lattice was studied, 
and the generalized norm convergence was proved under rather general conditions on the 
perturbations. After the publication, several generalizations of the result have been appeared, 
for example the continuum limit of discrete Fourier multipliers \cite{C-G-J}, 
discrete Dirac operators \cite{C-G-J-2}, discrete operators on the half space \cite{C-G-J-3}, 
discrete Hodge-Dirac operators \cite{Miranda-Parra}, and so on. 
However, these researches concern discrete operators on the square lattice only (however, see 
\cite{Isozaki-Jensen}). In this note, 
we first investigate a generalization to operators on more general lattices, including the triangular lattice, 
the tetrahedral lattice etc. The main novelty in this part is a generalized formulation of the model, 
but most arguments are modified to accommodate the more general setting, and applications 
to Laplace operators on several lattices are also discussed. 

Then we discuss the continuum limit of the hexagonal lattice, which does not satisfy the usual definition 
of the \emph{lattices}\/ above. However, with some additional constructions and modifications, 
we can prove that the continuum limit of the Laplace operator on hexagonal lattice is in fact 
the usual Laplace operator on $\re^2$, and the convergence holds in the sense of the generalized
norm convergence. 

Finally we discuss discrete operators on the square lattice corresponding to a second order strictly elliptic operator on $\re^d$. We show that in the continuum limit, these operators also converge to the elliptic operator in the generalized norm resolvent sense. 

In mathematical physics, discrete Schr\"odinger operators on lattices (with random potentials)  have been studied extensively in the context of random Schr\"odinger operators (see, e.g., \cite{Aizenman-Warzel} and references therein). Discrete Schr\"odinger operators have also been studied in the context of the scattering theory, mostly in the case of square lattices (see, e.g., \cite{BdM-Sa}). 
However, Ando and others studied spectral properties and inverse scattering theory for various lattices (\cite{Ando13}, \cite{AIM-1}, \cite{AIM-2}), and they are relevant to the current work (see also Tadano \cite{Tadano19} for the long-range scattering on the hexagonal lattice). 
In solid state physics, such discrete Schr\"odinger operators are called \emph{the tight binding model}, and used to approximate quantum physics of electrons in solids (see, e.g., \cite{AshMer}, Chapter~10 for the physics background). In particular, we refer Semenoff \cite{Sem84} for the analysis of the electron in the hexagonal lattice. 
On the other hand, discrete \emph{Dirac}\/ operators were extensively used in the lattice field theory, and the continuum limit has been, and still is, a central subject of the research (see, e.g., \cite{Rothe} for an introduction to the lattice field theory. See also \cite{C-G-J-2}, \cite{Schmidt-Umeda},
\cite{Miranda-Parra} and \cite{Na23} for recent development on rigorous results). 
As mentioned above, mathematically rigorous proof of the continuum limit of discrete Schr\"odinger operators were scarce, and several results appeared only recently (\cite{NaTad}, \cite{C-G-J}, \cite{C-G-J-2}, \cite{Isozaki-Jensen}. See also \cite{ExNaTad}). In this note, we extend the results of Nakamura-Tadano \cite{NaTad} in various directions. The proofs are based on the same ideas as \cite{NaTad}, but we need generalized formulations and several new ideas for each cases. 

This note is constructed as follows: In Section~2, we explain our main results. In Subsection~2.1, we give an explicit formulation of general lattices and state the convergence theorem for the continuum limit in Subsection~2.2. In Subsection~2.3, we give the definition of the hexagonal lattice and the Schr\"odinger operator on it, and state the convergence theorem. We then describe our elliptic operator on the square lattice and state the convergence theorem in Subsection~2.4. 
Section~3 is devoted to the discussion of the general lattice. We give detailed construction of the embedding operator and the proof of main theorems. The argument is similar to \cite{NaTad}, but the formulation is substantially generalized. We also give explicit examples: The triangular lattice in $\re^2$; the tetrahedral lattice and the octahedral lattice in $\re^3$, and compute the limiting operators. 
In Section~4, we prove the convergence theorem for the hexagonal lattice. The embedding operator is slightly more complicated in this case, and we use, in addition to the standard 2-component representation of the hexagonal lattice, a 3-component representation of the triangular lattice which includes the hexagonal lattice. We remark that the convergence rate (for the free operators) is $O(h^2)$, which is the same as the square lattice case, and is optimum. Section~5 is the proof of convergence for the continuum limit of discrete elliptic operators. Here the key lemma is an elliptic estimate for the discrete elliptic operators, Lemma~\ref{lem-elliptic-estimate}. 

\noindent
\textbf{Acknowledgement:} SN was partly supported by JSPS Kakenhi Grant Number 21K03276. YT was partly supported by JSPS Kakenhi Grant Number 23K12991.

\section{Main results} \label{sec-main}
\subsection{Lattices and graphs, Laplace operators}\label{subsec-intro-defs}
A set $\L\subset\re^d$, $d\geq 1$, is called a \textit{lattice}\/ 
if there is a non-singular $d\times d$-matrix $L$ such that $\L=L\ze^d$. 
A lattice $\L$ has a natural Abelian group structure, and acts on $\re^d$ by addition. 
We denote the volume of the fundamental domain by $\omega_0>0$, i.e., $\omega_0=|\det L|$. 
This definition applies to the triangular lattice, the tetrahedral lattice, the octahedral lattice, etc., 
but does not apply to the hexagonal lattice, the kagome lattice, etc. 

In order to define a Laplace operator on a lattice $\L$, we need to introduce a graph structure. 
Let $\mathbf{G}=(\L,\edges)$ be a connected graph where the set of vertices is $\L$ and the set of the directed 
edges is denoted by $\edges$. We suppose the graph is invariant under the action of $\L$, 
i.e., $[a,b]\in\edges$ if and only if $[a+x,b+x]\in \edges$ for all $x\in\L$.  We also suppose 
$[a,b]\in\edges$ if and only if $[b,a]\in\edges$.

Let $\m$ : $\edges \to (0,\infty)$ be a weight function, and we suppose $\m$ is invariant 
under the action of $\L$, i.e., $\m([a,b])=\m([a+x,b+x])$ for all $[a,b]\in\edges$, $x\in\L$, 
and symmetric in the sense $\m([a,b])=\m([b,a])$ for $e=[a,b]\in\edges$. 
We denote the initial vertex of an edge $e$ by $i(e)$, and the terminal vertex by $t(e)$, i.e., 
$i(e)=a$ and $t(e)=b$ when $e=[a,b]$. 

Then, for $u\in\ell^2(\L)$, the Laplace operator is defined by 
\[
-\triangle_\L u(x)=\sum_{i(e)=x} \m(e)(u(x)-u(t(e))). 
\]
It is easy to see $-\triangle_\L$ 
is a bounded positive operator on $\ell^2(\L)$. 

We note we can write 
$\bigset{e\in\edges}{i(e)=0} =\{[0,\pm f^j]$, $j=1,\dots, K\}$
with $f^1,\dots,f^K\in\L$. Since $(\L,\edges)$ is connected, 
$\mathrm{Span}(\{f^j\}_{j=1}^K)=\re^d$. We denote
\[
\m_j=\m([0,f^j])=\m([0,-f^j]), \quad j=1,\dots, K.
\]
Then the Laplace operator can be written in a more familiar form: 
\[
-\triangle_\L u(x) = \sum_{j=1}^K \m_j (2u(x)-u(x+f^j)-u(x-f^j)).
\]

\subsection{Continuum limit and the norm resolvent convergence}\label{subsec-intro-cont-lim}
Now we introduce a scaling in order to study the continuum limit. 
Let $h>0$ and we set 
\[
\mathbf{G}_h =(h \L, h\edges)\ \text{ where } \ h\L=\bigset{hx}{x\in\L}, \ h\edges=\bigset{[ha,hb]}{[a,b]\in\edges}.
\]
The norm of $\ell^2(h \L)$ is given by 
\[
\|u\|_{\ell^2(h\L)}^2 = \omega_0 h^d \sum_{x\in h\L} |u(x)|^2, 
\quad u\in\ell^2(h\L). 
\]
The scaled Laplace operator is defined by 
\[
H_{0,h} u(x) = h^{-2} \sum_{j=1}^K \m_j (2u(x)-u(x+h f^j)-u(x-hf^j))
\]
for $u\in\ell^2(h\L)$. We consider the behavior of $H_{0,h}$ as $h\to 0$.

In order to state our main result, we set a differential operator on $L^2(\re^d)$ 
to which $H_{0,h}$ converges as $h \to 0$. We set a quadratic form 
\[
p_0(2\pi\x)= \sum_{j=1}^K \m_j (f^j\cdot 2\pi\x)^2, \quad \x\in\re^d.
\]
Since $\{f^j\}_{j=1}^K$ spans $\re^d$, this is a non degenerate positive quadratic form, and we set
\[
H_0 = p_0(D_x) =\sum_{j=1}^K \m_j (f^j\cdot D_x)^2\quad \text{on }\re^d, 
\]
where $D_x=(-i\pa_{x_1}, \dots, -i\pa_{x_d})$. 
We note that $H_0$ is a second order elliptic operator with constant coefficients, and 
hence $H_0$ is self-adjoint with $\mathcal{D}(H_0)=H^2(\re^d)$. It is also easy to show 
its spectrum $\s(H_0)=[0,\infty)$ and is absolutely continuous. 

We will construct an isometry $J_h$ : $\ell^2(h\L) \to L^2(\re^d)$ in 
Subection~\ref{sec-const-J-lattice}.
Our first result claims $H_{0,h}$ converges to $H_0$ as $h\to 0$
in the generalized norm convergence. 

\begin{thm}\label{thm-free-hamiltonian}
For any $\m\in\co\setminus\re$, there is $C>0$ such that 
\[
\bignorm{J_h (H_{0,h}-\m)^{-1}J_h^* -(H_0-\m)^{-1}}_{\mathcal{B}(L^2(\re^d))} \leq Ch^2, 
\quad h>0. 
\]
\end{thm}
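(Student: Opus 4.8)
The plan is to reduce the comparison of two resolvents to a comparison of two self-adjoint operators via the intertwining identity, and then to exploit the explicit Fourier-analytic structure of both $H_{0,h}$ and $H_0$. The decisive observation is that the isometry $J_h$ (to be constructed in Subsection~\ref{sec-const-J-lattice}) should intertwine the discrete and continuous Fourier transforms, so that in the frequency picture the discrete operator $H_{0,h}$ acts as multiplication by the symbol
\[
p_h(\x)=h^{-2}\sum_{j=1}^K \m_j\,\bigpare{2-2\cos(h f^j\cdot 2\pi\x)},
\]
while $H_0$ acts as multiplication by $p_0(2\pi\x)=\sum_{j=1}^K \m_j(f^j\cdot 2\pi\x)^2$. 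Using $2-2\cos t = t^2 + O(t^4)$, a Taylor expansion gives the pointwise symbol bound $\abs{p_h(\x)-p_0(2\pi\x)}\leq C h^2 \jap{\x}^4$ on the relevant frequency range, which is the quantitative source of the $O(h^2)$ rate.

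The main structural step is the resolvent difference identity. First I would write, following the standard generalized-norm-convergence scheme,
\[
J_h(H_{0,h}-\m)^{-1}J_h^* -(H_0-\m)^{-1}
= (H_0-\m)^{-1}\bigpare{(H_0-\m)J_h - J_h(H_{0,h}-\m)}(H_{0,h}-\m)^{-1}J_h^*,
\]
valid because $J_h^*J_h=\mathrm{Id}$ on $\ell^2(h\L)$. The two outer resolvents are bounded by $\abs{\Im\m}^{-1}$, so everything reduces to controlling the intertwining error $(H_0-\m)J_h-J_h(H_{0,h}-\m)=H_0 J_h - J_h H_{0,h}$, sandwiched appropriately. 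Because $H_{0,h}(H_{0,h}-\m)^{-1}$ is uniformly bounded, it suffices to bound $\norm{(H_0 J_h - J_h H_{0,h})(H_{0,h}-\m)^{-1}}$ together with a term controlled by $\norm{(H_0J_h-J_hH_{0,h})(H_{0,h}+1)^{-1}}$ or similar, where the extra regularizing factor absorbs the two powers of $\jap{\x}^2$ coming from the symbol error. Concretely, on the range of $J_h$ one passes to Fourier variables, where $H_0J_h-J_hH_{0,h}$ becomes multiplication by the symbol difference $p_h-p_0$ restricted to the first Brillouin zone, and the bound $\abs{p_h-p_0}\leq Ch^2\jap{\x}^4 \leq Ch^2 (p_0(2\pi\x)+1)^2$ (using ellipticity, i.e. $p_0(2\pi\x)\sim\jap{\x}^2$) lets the two resolvent factors absorb the $\jap{\x}^4$ growth and leave the clean factor $h^2$.

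I expect the main obstacle to be bookkeeping around the isometry $J_h$ rather than the symbol estimate: one must verify that $J_h$ maps $\ell^2(h\L)$ into the domain on which these Fourier manipulations are legitimate, and in particular that the range of $J_h^*$ in frequency space is exactly the Brillouin zone $B_h$ (a fundamental domain of the dual lattice scaled by $h^{-1}$) so that the comparison of $p_h$ and $p_0$ takes place only where both symbols are defined and the truncation-to-$B_h$ incurs no error beyond $O(h^2)$. A secondary subtlety is uniformity: all constants must be independent of $h$, which requires that the ellipticity constant in $p_0(2\pi\x)\geq c\jap{\x}^2 - C$ and the remainder constant in the Taylor expansion be tracked uniformly over the Brillouin zone as it expands like $h^{-1}$. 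Once $J_h$ and its Fourier description are pinned down (this is precisely the content deferred to Subsection~\ref{sec-const-J-lattice}), the remaining estimates are the routine symbol calculus sketched above, and the claimed bound $Ch^2$ follows.
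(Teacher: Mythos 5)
Your Fourier-side analysis is the same as the paper's: pass to the momentum representation via $T_h=\fourier J_h F_h^*$, observe that $H_{0,h}$ becomes multiplication by the periodic symbol $p_{0,h}(2\pi\x)=2h^{-2}\sum_j\m_j(1-\cos(hf_j\cdot2\pi\x))$, Taylor-expand to get $p_{0,h}(2\pi\x)=p_0(2\pi\x)+O(h^2|\x|^4)$, and let the two resolvent symbols absorb the $|\x|^4$ growth on the (rescaled) Brillouin zone. That part of your sketch is sound, modulo the uniformity bookkeeping you yourself flag.

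However, your structural identity is wrong, and the error hides the one genuinely non-routine step. Expanding your right-hand side gives
\[
(H_0-\m)^{-1}\bigpare{(H_0-\m)J_h-J_h(H_{0,h}-\m)}(H_{0,h}-\m)^{-1}J_h^*
= J_h(H_{0,h}-\m)^{-1}J_h^*-(H_0-\m)^{-1}J_hJ_h^*,
\]
which differs from the quantity you must estimate by $(H_0-\m)^{-1}(1-J_hJ_h^*)$. The hypothesis $J_h^*J_h=\mathrm{Id}$ does not help here: $J_h$ is an isometry \emph{into} $L^2(\re^d)$, so $J_hJ_h^*$ is the orthogonal projection onto $\Ran J_h$, not the identity, and $\norm{1-J_hJ_h^*}=1$. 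The missing term is exactly the content of the paper's Lemma~\ref{lem-1-JJ}, and bounding it by $Ch^2$ is not a truncation triviality: it uses the specific design requirement in Assumption~\ref{ass-ONS} that $|\hat\f(\x)|^2\equiv\omega_0$ on a full neighborhood of $0$ (not merely the orthonormality condition \eqref{eq-ons}), so that the defect symbol $1-\omega_0^{-1}|\hat\f(h\x)|^2$ and the off-lattice terms $\hat\f(h\x)\overline{\hat\f(h\x+n)}$, $n\in\L'\setminus\{0\}$, are supported where $|g_\m(\x)|=|(p_0(2\pi\x)-\m)^{-1}|\leq Ch^2$. Without invoking this property your argument yields at best $o(1)$ for the projection defect, which would destroy the $O(h^2)$ rate. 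To repair the proof you should add the term $(H_0-\m)^{-1}(1-J_hJ_h^*)$ to your decomposition and prove the analogue of Lemma~\ref{lem-1-JJ} for it; the rest of your outline then goes through as in the paper.
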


This result, combined with Lemma~\ref{lem-1-JJ}, also implies the following: 
\[
\bignorm{J_h^*(H_0-\m)^{-1}J_h - (H_{0,h}-\m)^{-1}}_{\mathcal{B}(\ell^2(h\L))} \leq Ch^2, 
\quad h>0. 
\]

We also consider perturbed operators
\[
H=H_0+V(x) \quad \text{on } L^2(\re^d),
\]
where $V(x)$ is a real-valued function, and we define
\[
H_h=H_{0,h}+V_h, \quad\text{on }\ell^2(h\L),  
\]
where $V_h(z)=V(z)$ for $z\in h\L$. 

We suppose $V$ satisfies the same assumption as in \cite{NaTad}. 
\begin{ass}\label{ass-V}
$V$ is a real-valued continuous function on $\re^d$, and bounded from below. 
$(V(x)+M)^{-1}$ is uniformly continuous with some $M>0$, and there is $c_1>0$ such that 
\[
c_1^{-1} (V(x)+M) \leq V(y)+M\leq c_1(V(x)+M), \quad \text{if }|x-y|\leq 1. 
\]
\end{ass}
The above assumption implies $H$ is essentially self-adjoint on $C_c^\infty(\re^d)$.

\begin{thm}\label{thm-hamiltonian-with-potential}
Suppose Assumption~\ref{ass-V}, 
and let $H$ and $H_h$ be defined as above. 
Then for any $\m\in\co\setminus\re$, 
\[
\bignorm{J_h (H_{h}-\m)^{-1}J_h^* -(H-\m)^{-1}}_{\mathcal{B}(L^2(\re^d))} \to 0
\quad \text{as } h\to 0. 
\]
\end{thm}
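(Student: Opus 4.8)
The plan is to prove the convergence at the single real point $\m=-M$, where $M$ is the constant from Assumption~\ref{ass-V} chosen so large that $V+M\ge 1$, and then to propagate to all $\m\in\co\setminus\re$. The propagation is purely algebraic: writing $S_h(\m)=J_h(H_h-\m)^{-1}J_h^*$ and using only that $J_h$ is an isometry ($J_h^*J_h=1$), the resolvent identity gives $S_h(\m)-S_h(-M)=(\m+M)S_h(\m)S_h(-M)$, hence $S_h(\m)=S_h(-M)\bigpare{1-(\m+M)S_h(-M)}^{-1}$, and the same identity holds for $(H-\m)^{-1}$ in terms of $(H+M)^{-1}$. Since $-M$ lies in the connected common resolvent set (note $H_h+M\ge V_h+M\ge 1$ and $H+M\ge 1$), norm convergence at $-M$ propagates to the whole component, in particular to $\co\setminus\re$. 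Thus it suffices to estimate $\bignorm{J_h(H_h+M)^{-1}J_h^*-(H+M)^{-1}}$.

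For this I would use the sandwiched form of the resolvent. Set $T=(V+M)^{-1/2}$ and $T_h=(V_h+M)^{-1/2}$, which are bounded multiplication operators with $\norm{T},\norm{T_h}\le 1$. A direct computation gives
\[
(H+M)^{-1}=T(1+TH_0T)^{-1}T,\qquad (H_h+M)^{-1}=T_h(1+T_hH_{0,h}T_h)^{-1}T_h,
\]
so that, writing $G=(1+TH_0T)^{-1}$ and $G_h=(1+T_hH_{0,h}T_h)^{-1}$ (both of norm $\le 1$), every operator in sight is uniformly bounded even though $V$ itself may be unbounded. The key geometric input is a commutator estimate: since $(V+M)^{-1}$ is uniformly continuous and bounded, so is $T$, and because $J_h$ acts locally at scale $h$ one has $\norm{J_hT_h-TJ_h}\to0$ and, by adjunction, $\norm{T_hJ_h^*-J_h^*T}\to0$ as $h\to0$. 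Inserting these into the sandwiched form reduces the problem to the middle factor, namely to proving
\[
\bignorm{J_hG_hJ_h^*-G}\to0\qquad(h\to0),
\]
after which conjugation by the bounded operator $T$ gives the theorem.

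To establish the middle convergence I would first treat bounded $V$ and then pass to the general case by truncation. If $V$ is bounded, enlarging $M$ so that $\norm{V}_\infty/M<1$, the second resolvent identity expands $(H_h+M)^{-1}$ as a norm-convergent Neumann series in $V_h(H_{0,h}+M)^{-1}$; inserting $J_h^*J_h=1$ between consecutive factors and using Theorem~\ref{thm-free-hamiltonian} (extended to the point $-M$ by the propagation above) together with the commutator estimate $J_hV_h=VJ_h+o(1)$, each term converges in operator norm to the corresponding continuum term, while the geometric bound $\norm{V_h(H_{0,h}+M)^{-1}}\le\norm{V}_\infty/M$ is uniform in $h$, so the series converges uniformly and yields $J_h(H_h+M)^{-1}J_h^*\to(H+M)^{-1}$. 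For general $V$ bounded below I would approximate by the bounded truncations $V^{(n)}=\min(V,n)$, which still satisfy Assumption~\ref{ass-V} uniformly in $n$; the bounded case applies to each $H^{(n)}$ and $H_h^{(n)}$, and it remains to control the truncation error uniformly in $h$.

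The main obstacle is exactly this last, uniform-in-$h$, estimate. By the second resolvent identity the truncation error equals $(H_h^{(n)}+M)^{-1}(V_h-V_h^{(n)})(H_h+M)^{-1}$, and since $0\le V-V^{(n)}\le(V+M)\mathbf 1_{\{V>n\}}$ it is bounded, after a Cauchy--Schwarz step in the positive multiplier $V_h-V_h^{(n)}$, by a constant times
\[
\bignorm{\mathbf 1_{\{V>n\}}(V_h+M)^{1/2}(H_h+M)^{-1}}.
\]
Here the naive bound is only $O(1)$: the smallness as $n\to\infty$ must come from the fact that the resolvent $(H_h+M)^{-1}$ carries little weight in the high-potential region $\{V>n\}$. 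Proving this decay \emph{uniformly in $h$}\/ for the \emph{nonlocal}\/ operators $H_{0,h}$ is the heart of the matter; this is where the finite hopping range $\{f^j\}_{j=1}^K$ and, crucially, the doubling condition $c_1^{-1}(V(x)+M)\le V(y)+M\le c_1(V(x)+M)$ for $\abs{x-y}\le 1$ from Assumption~\ref{ass-V} enter, allowing a discrete Combes--Thomas/Agmon-type localization insensitive to $h$. Once this uniform truncation estimate is in hand, combining it with the bounded-potential case through the triangle inequality completes the proof.
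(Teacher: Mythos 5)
Your reduction to the single point $\m=-M$, the symmetrized formula $(H+M)^{-1}=T(1+TH_0T)^{-1}T$, and the commutator input $\norm{J_hT_h-TJ_h}\to0$ (which is indeed covered by the paper's lemma on bounded uniformly continuous multipliers, applied to $T=(V+M)^{-1/2}$) are all reasonable. The genuine gap is the truncation step, which you yourself flag as ``the heart of the matter'' but do not prove, and two things go wrong there. First, even your preliminary Cauchy--Schwarz reduction is unjustified: it requires $\norm{(V_h-V_h^{(n)})^{1/2}(H_h^{(n)}+M)^{-1}}\leq C$ uniformly in $n$ and $h$, but the form inequality for $H_h^{(n)}$ only controls the weight $(V_h^{(n)}+M)^{1/2}=(\min(V,n)+M)^{1/2}$, whereas on $\{V>n\}$ the weight $(V_h-V_h^{(n)})^{1/2}$ can be arbitrarily larger than $(n+M)^{1/2}$; nothing in Assumption~\ref{ass-V} keeps this factor bounded. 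Second, the decisive estimate $\norm{\mathbf{1}_{\{V>n\}}(V_h+M)^{1/2}(H_h+M)^{-1}}\to0$ as $n\to\infty$ uniformly in $h$ is only asserted, via an unproven discrete Combes--Thomas/Agmon localization. So the argument is incomplete precisely where it must handle unbounded $V$, which is the only nontrivial case.

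The paper avoids truncation altogether. Following Subsection~2.3 of \cite{NaTad}, one writes
\[
(H_h-\m)^{-1}J_h^*-J_h^*(H-\m)^{-1}=(H_h-\m)^{-1}\bigpare{J_h^*H-H_hJ_h^*}(H-\m)^{-1},
\]
splits $J_h^*H-H_hJ_h^*$ into a free part, controlled by Theorem~\ref{thm-free-hamiltonian} after sandwiching with $(H_{0,h}-\m)$ and $(H_0-\m)$, and a potential part, which is factored as
\[
(H_h-\m)^{-1}(V_h+M)\bigbrac{(V_h+M)^{-1}J_h^*-J_h^*(V+M)^{-1}}(V+M)(H-\m)^{-1}.
\]
The outer factors are bounded uniformly in $h$ by the relative-boundedness Lemma~\ref{lem-rel-bdd}, and the middle bracket tends to $0$ by the commutator lemma applied to $G=(V+M)^{-1}$; the remaining term $(1-J_hJ_h^*)(H-\m)^{-1}$ is handled by Lemma~\ref{lem-1-JJ} together with the $H$-boundedness of $H_0$. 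This is exactly the mechanism your sandwiched formula is groping toward, but Lemma~\ref{lem-rel-bdd} lets one absorb the unbounded weights $(V_h+M)$ and $(V+M)$ directly into the resolvents, making the bounded and unbounded cases identical and rendering both your Neumann series and your truncation unnecessary. If you replace your steps for bounded $V$ and the truncation by this single factorization, your argument closes.
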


\begin{cor}\label{cor-spectrum-lattice}
Under the assumptions of Theorem~1.2, 
\[
\mathbf{d}_H(\s(H_h),\s(H)) \to 0, \quad\text{as } h\to 0, 
\]
where $\mathbf{d}_H(\cdot,\cdot)$ denotes the Hausdorff distance on the subsets of $\re$, 
and $\s(A)$ denotes the spectrum of an operator $A$. 
\end{cor}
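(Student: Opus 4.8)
The plan is to deduce the Hausdorff convergence of spectra from the generalized norm resolvent convergence established in Theorem~\ref{thm-hamiltonian-with-potential}. The essential point is that for self-adjoint operators, convergence in the spectrum (in Hausdorff distance) is controlled by the behavior of the resolvents, provided one accounts for the fact that $H_h$ and $H$ act on different Hilbert spaces and the identification is only asymptotic. First I would fix a reference value, say $\m = i$, and recall that the spectrum of a self-adjoint operator $A$ is encoded in its resolvent through the identity
\[
\dist(\l,\s(A)) = \bignorm{(A-\m)^{-1}}^{-1}_{\text{when acting on the spectral projection}},
\]
more precisely through the fact that $z\in\s(A)$ if and only if $(z-\m)^{-1}\in\s((A-\m)^{-1})$, which reduces the problem to comparing the spectra of the bounded operators $R_h := J_h(H_h-\m)^{-1}J_h^*$ and $R := (H-\m)^{-1}$.

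The key step is to pass from norm-resolvent convergence to spectral convergence of the associated bounded resolvents. Since $\norm{R_h - R}\to 0$ by Theorem~\ref{thm-hamiltonian-with-potential}, and since the spectrum of bounded self-adjoint (here, normal) operators is Hausdorff-continuous with respect to the operator norm, we obtain $\mathbf{d}_H(\s(R_h),\s(R))\to 0$. The remaining work is to translate this convergence under the Möbius map $z\mapsto (z-\m)^{-1}$ back to the original spectra. Because this map is a homeomorphism of $\re\cup\{\infty\}$ onto the circle $\{(t-\m)^{-1}: t\in\re\}\cup\{0\}$, and because $\s(H)\subset[c,\infty)$ is bounded below (as $V$ is bounded below and $H_0\geq 0$), the relevant spectra all lie in a fixed bounded region away from the point $z=\m$; on such compact sets the Möbius map is bi-Lipschitz, so Hausdorff convergence is preserved in both directions.

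The one technical subtlety I would need to address carefully is that $R_h = J_h(H_h-\m)^{-1}J_h^*$ is not exactly a resolvent of a self-adjoint operator on $L^2(\re^d)$, because $J_h$ is an isometry but not unitary, so $J_hJ_h^*$ is only a projection. Consequently $\s(R_h)$ may contain the spurious eigenvalue $0$ coming from the orthogonal complement of $\Ran J_h$, which corresponds to $z=\infty$ under the Möbius map and hence to no finite spectral point of $H_h$. I would handle this by noting that $0\in\s(R)$ as well (since $R=(H-\m)^{-1}$ is compact-free but $H$ is unbounded, so $0$ is in the spectrum of its resolvent), so the spurious point is harmless: it appears in both $\s(R_h)$ and $\s(R)$ and does not affect the Hausdorff distance computation once we restrict attention, via the bi-Lipschitz Möbius correspondence on compacta, to the finite spectra $\s(H_h)$ and $\s(H)$.

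The main obstacle I anticipate is precisely this bookkeeping with the non-unitarity of $J_h$ and the point at infinity: one must verify that the generalized (rather than ordinary) norm resolvent convergence still yields genuine two-sided spectral inclusions, i.e., that no spectrum of $H$ is missed in the limit and no spurious spectrum of $H_h$ survives. This is controlled by the companion estimate stated right after Theorem~\ref{thm-free-hamiltonian}, together with Lemma~\ref{lem-1-JJ} (which quantifies how close $J_h^*J_h$ and $J_hJ_h^*$ are to the respective identities), ensuring that the map $H_h\mapsto R_h$ faithfully transports spectral information in both directions as $h\to 0$.
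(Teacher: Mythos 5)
Your overall strategy --- pass to the bounded operators $R_h=J_h(H_h-\m)^{-1}J_h^*$ and $R=(H-\m)^{-1}$, use that these are normal (here it matters that $J_h^*J_h=1$ on $\ell^2(h\L)$, which makes $R_h$ commute with $R_h^*$), invoke the estimate $\mathbf{d}_H(\s(A),\s(B))\le\norm{A-B}$ valid for normal operators, and pull back through the M\"obius map while accounting for the spurious point $0\in\s(R_h)$ coming from $(\Ran J_h)^\perp$ --- is exactly the route the paper intends: the paper gives no independent proof and simply states that the corollary follows "by the same argument as Appendix~A in \cite{NaTad}", which runs along these lines. Your observation that $0\in\s(R)$ as well is also correct and relevant.

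There is, however, a genuine gap in your final step. You assert that ``the relevant spectra all lie in a fixed bounded region away from $z=\m$'' and then invoke bi-Lipschitz continuity of the M\"obius map on compacta. This is false: $\s(H)$ is always unbounded above (a semibounded, unbounded self-adjoint operator has $\sup\s(H)=+\infty$), and $\s(H_h)$ reaches up to energies of order $h^{-2}$ (indeed it is contained in $[\inf V,\,4h^{-2}\sum_j\m_j+\sup V]$ when $V$ is bounded). The inverse M\"obius map $w\mapsto\m+w^{-1}$ degenerates precisely at $w=0$, which is the image of the point at infinity and is also where the spurious spectrum of $R_h$ sits. Concretely, a point $(\l-\m)^{-1}\in\s(R)$ with $\l$ large is approximated within $\e_h:=\norm{R_h-R}$ by some $w_h\in\s(R_h)$ that is either $0$ (yielding no information about $\s(H_h)$) or of the form $(\l_h-\m)^{-1}$ with $|\l-\l_h|\le|\l-\m|\,|\l_h-\m|\,\e_h$, which is not small once $\l\gtrsim\e_h^{-1/2}$. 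So Hausdorff convergence of $\s(R_h)$ to $\s(R)$ in $\co$ does not transfer to Euclidean Hausdorff convergence of $\s(H_h)$ to $\s(H)$ by a compactness argument; the spurious $0$ is not mere bookkeeping but sits exactly at the degenerate point. To close the argument you must either (i) read $\mathbf{d}_H$ with respect to the metric of the one-point compactification of $\re$ (equivalently, the chordal distance on the image circle), in which case your argument goes through once you add the observation that $\sup\s(H_h)\to\infty$, so that the point at infinity is approached by both spectra; or (ii) prove the two-sided $\e$-inclusions on a fixed compact energy window, where the M\"obius correspondence is bi-Lipschitz and the relevant part of $\s(R_h)$ is uniformly separated from $0$, and treat the high-energy region separately. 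Supplying this uniformity is the actual content of \cite{NaTad} Appendix~A.
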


\subsection{The hexagonal lattice}\label{subsec-intro-hex}
Now we consider the hexagonal lattice defined as follows. 
We set 
\[
e_1=(1/2,\sqrt{3}/2),\quad e_2=(1/2,-\sqrt{3}/2)\quad \text{and}\quad e_3=(-1,0)
\]
in $\re^2$. 
We also set 
\[
f_1=e_1-e_3 =(3/2,\sqrt{3}/2)\quad \text{and}\quad f_2=e_2-e_3=(3/2,-\sqrt{3}/2).
\] 


We write 
\[
\Gamma_h=\ze(h f_1)+\ze(h f_2)=\bigset{n_1hf_1+n_2hf_2}{n_1,n_2\in\ze},
\] 
where $h>0$ is the lattice spacing. 
We consider a graph $\mathbf{G}^{\mathrm{hex}}_h=(\Lambda^{\mathrm{hex}}_h, \mathcal{E}^{\mathrm{hex}}_h)$ embedded in $\re^2$, where 
\[
\Lambda^{\mathrm{hex}}_h=\Lambda^+_h+\Lambda^-_h
=(0+\Gamma_h)\cup (he_3+\Gamma_h),
\]
is the set of vertices, and 
\[
\mathcal{E}^{\mathrm{hex}}_h=\bigcup_{j=1,2,3} \bigset{[0,he_j]+\c}{\c\in\Gamma_h}
\cup \bigcup_{j=1,2,3}\bigset{[0,-he_j]+(he_3+\c)}{\c\in\Gamma_h}
\]
is the set of directed edges. $\mathbf{G}^{\mathrm{hex}}_h$ is our hexagonal lattice, and $\Gamma$ 
is the symmetry group. The Laplace operator is defined as follows: 
\[
H^{\mathrm{hex}}_h u(z)= \begin{cases} h^{-2}\sum_{j=1}^3 (u(z)-u(z+he_j)) \quad & \text{if }z\in \Lambda_h^+, \\
h^{-2}\sum_{j=1}^3 (u(z)-u(z-he_j)) \quad & \text{if }z\in \Lambda_h^-, \end{cases}
\]
where $u\in\ell^2(\Lambda^{\mathrm{hex}}_h)$. 
We also set
\[
H^{\mathrm{hex}}_0=-\frac{3}{4}\triangle \quad \text{on }L^2(\re^2).
\]
We will construct an identification operator $J^{\mathrm{hex}}_h : \ell^2(\Lambda^{\mathrm{hex}}_h)\to L^2(\re^2)$ in Subsection~\ref{subsec-intro-hex} such that
$J^{\mathrm{hex}}_h$ is asymptotically isometry in some sense (see Lemma~\ref{lem-hex-key1}). 
Then we can show 

\begin{thm}\label{thm-hexa}
For $\m\in\co\setminus\re$, 
\[
\norm{(H^{\mathrm{hex}}_0-\m)^{-1}-J^{\mathrm{hex}}_h (H^{\mathrm{hex}}_h-\m)^{-1} 
(J^{\mathrm{hex}}_h)^*}_{\mathcal{B}(L^2(\re^2))}\leq Ch^2 \quad \text{with }C>0, 
\]
and 
\[
\norm{(J^{\mathrm{hex}}_h)^*(H^{\mathrm{hex}}_0-\m)^{-1}J^{\mathrm{hex}}_h
- (H^{\mathrm{hex}}_h-\m)^{-1}}_{\mathcal{B}(\ell^2(\Lambda^{\mathrm{hex}}_h))}\leq Ch^2 \quad \text{with }C>0. 
\]
\end{thm}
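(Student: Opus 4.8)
The plan is to diagonalize both operators by the Floquet–Bloch transform adapted to the triangular symmetry lattice $\Gamma_h=\ze(hf_1)+\ze(hf_2)$, exploiting that, although $\Lambda^{\mathrm{hex}}_h$ is not a Bravais lattice, it is the union of the two $\Gamma_h$-orbits $\Lambda^\pm_h$. First I would identify $\ell^2(\Lambda^{\mathrm{hex}}_h)$ with $L^2(\mathbb{B}_h;\co^2)$, where $\mathbb{B}_h$ is a fundamental domain of the dual lattice of $\Gamma_h$ (of diameter $\sim h^{-1}$, so $\mathbb{B}_h\nearrow\re^2$), and the two components record the values on $\Lambda^+_h$ and $\Lambda^-_h$. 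Writing $\g(\y)=1+e^{if_1\cdot\y}+e^{if_2\cdot\y}$ and $\g_h(\x)=\g(h\x)$, a direct computation using $e_1=f_1+e_3$, $e_2=f_2+e_3$ (so that, from a $+$-site, the three neighbours land on $-$-sites at offsets $hf_1,hf_2,0$) turns $H^{\mathrm{hex}}_h$ into the Hermitian matrix multiplier
\[
\widehat H^{\mathrm{hex}}_h(\x)=h^{-2}\begin{pmatrix}3 & -\g_h(\x)\\ -\overline{\g_h(\x)} & 3\end{pmatrix},
\]
whose eigenvalues are the two bands $\l_\pm(h,\x)=h^{-2}\bigpare{3\pm\abs{\g_h(\x)}}$.

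Next I would carry out the band analysis. From $\abs{\g_h(\x)}^2=3+2\cos(hf_1\cdot\x)+2\cos(hf_2\cdot\x)+2\cos(h(f_1-f_2)\cdot\x)$, a Taylor expansion near the center of $\mathbb{B}_h$ gives $\l_-(h,\x)=\tfrac34\abs{\x}^2+O(h^2\abs{\x}^4)$, matching the leading symbol of $H^{\mathrm{hex}}_0=-\tfrac34\triangle$; the quadratic part collapses to the isotropic form $\tfrac94\abs{\x}^2$ because $\abs{f_1}^2=\abs{f_2}^2=3$ and $f_1\cdot f_2=\tfrac32$. The upper band satisfies $\l_+(h,\x)\ge 3h^{-2}$, so for small $h$ one has $\abs{(\l_+-\m)^{-1}}=O(h^2)$ uniformly. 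I would also record the lower bound $3-\abs{\g(\y)}\ge c\abs{\y}^2$ on the fixed Brillouin zone; the two bands meet only at the Dirac corners, where their common value is $3h^{-2}$, far from $\m$.

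In this representation the identification operator $J^{\mathrm{hex}}_h$ is the composition of the projection onto the normalized lower-band eigenvector $v_-(\x)$ with the extension-by-zero from $\mathbb{B}_h$ to $\re^2$; its asymptotic isometry, namely $J^{\mathrm{hex}}_h(J^{\mathrm{hex}}_h)^*=\mathbf{1}_{\mathbb{B}_h}\to I$ and $(J^{\mathrm{hex}}_h)^*J^{\mathrm{hex}}_h=P_-$ (the lower-band projection), is the content of Lemma~\ref{lem-hex-key1}. Since $v_-$ is an eigenvector, $\bigjap{v_-,(\widehat H^{\mathrm{hex}}_h-\m)^{-1}v_-}=(\l_--\m)^{-1}$, so $J^{\mathrm{hex}}_h (H^{\mathrm{hex}}_h-\m)^{-1}(J^{\mathrm{hex}}_h)^*$ reduces on the Fourier side to multiplication by $\mathbf{1}_{\mathbb{B}_h}(\x)\,(\l_-(h,\x)-\m)^{-1}$, and the first resolvent difference becomes the single scalar multiplier
\[
\mathbf{1}_{\mathbb{B}_h}\Bigpare{(\l_-(h,\x)-\m)^{-1}-(\tfrac34\abs{\x}^2-\m)^{-1}}-\mathbf{1}_{\mathbb{B}_h^c}\,(\tfrac34\abs{\x}^2-\m)^{-1},
\]
whose operator norm is its $L^\infty_\x$ norm. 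Outside $\mathbb{B}_h$ one has $\abs{\x}\gtrsim h^{-1}$, so the tail is $O(h^2)$; inside, I would write the difference as $\tfrac{\l_--\tfrac34\abs{\x}^2}{(\l_--\m)(\tfrac34\abs{\x}^2-\m)}$ and verify, in the variable $\y=h\x$, that it equals $h^2$ times a bounded factor, both in the regime $\y\to0$ (where the numerator $O(\abs{\y}^4)$ cancels the two quadratically vanishing denominators) and near $\pa\mathbb{B}_1$ (where both denominators are $\sim h^{-2}$). The second estimate follows the same way from
\[
(J^{\mathrm{hex}}_h)^*(H^{\mathrm{hex}}_0-\m)^{-1}J^{\mathrm{hex}}_h-(H^{\mathrm{hex}}_h-\m)^{-1}=P_-\brac{(\tfrac34\abs{\x}^2-\m)^{-1}-(\l_--\m)^{-1}}-P_+(\l_+-\m)^{-1},
\]
the last term being $O(h^2)$ by the upper-band bound.

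I expect the main obstacle to be the uniform symbol estimate over the whole Brillouin zone: the approximation $\l_-\approx\tfrac34\abs{\x}^2$ holds only near the center, so one must control the crossover to the boundary regime and rule out small denominators there. This is exactly where the lower bound $3-\abs{\g(\y)}\ge c\abs{\y}^2$ is needed, together with the observation that the band crossings sit at the high energy $3h^{-2}$; consequently $\abs{\g}$ is bounded away from $0$ wherever $\l_-$ is close to $\m$, so the singularity of $v_-$ at the Dirac points never reaches the resolvent.
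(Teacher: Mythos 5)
Your band analysis is sound and reaches the same spectral picture as the paper: the two-component Floquet--Bloch reduction over $\Gamma_h$, the matrix multiplier $\hat H^{\mathrm{hex}}_h(\x)$, the bands $h^{-2}(3\pm|\a|)$, the expansion $E^{\mathrm{hex}}_-(\x)=\tfrac34|2\pi\x|^2+O(h^2|\x|^4)$, the uniform bound $(E^{\mathrm{hex}}_+-\m)^{-1}=O(h^2)$, and the quadratic lower bound on the lower band away from the zone center all appear in Section~4 (Lemmas~\ref{lem-spec-hex}, \ref{lem-tr-proj}, \ref{lem-hex-ellipticity}). But your route diverges from the paper's in a way that leaves a genuine gap: you silently replace the identification operator. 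The theorem is stated for the concrete operator $J^{\mathrm{hex}}_h=J^{\mathrm{tr}}_h\Theta_h$, where $\Theta_h$ is an explicit local interpolation from $\Lambda^{\mathrm{hex}}_h$ to the triangular lattice and $J^{\mathrm{tr}}_h$ is built from a Schwartz function $\f$ as in Assumption~\ref{ass-ONS}. For that operator the identities you rely on --- $J^{\mathrm{hex}}_h(J^{\mathrm{hex}}_h)^*=\mathbf{1}_{\mathbb{B}_h}$, $(J^{\mathrm{hex}}_h)^*J^{\mathrm{hex}}_h=P_-$, and the exact reduction of $J^{\mathrm{hex}}_h(H^{\mathrm{hex}}_h-\m)^{-1}(J^{\mathrm{hex}}_h)^*$ to a scalar multiplier --- are simply false; Lemma~\ref{lem-hex-key1} does not say this, it only says $(1-\Theta_h\Theta_h^*)$ and $(1-\Theta_h^*\Theta_h)$ are $O(h^2)$ \emph{after} composition with a resolvent. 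The entire technical content of the paper's Subsections~4.2--4.3 (the Taylor expansions of $\hat\Theta_h$, $w^{\mathrm{tr}}_0$, $w^{\mathrm{hex}}_-$ and the resulting $O(h^2|\x|^2)$ intertwining errors in \eqref{eq-ThwZ}--\eqref{eq-ThSTh} and Lemma~\ref{lem-hex-key2}) exists precisely to control the failure of $J^{\mathrm{tr}}_h\Theta_h$ to diagonalize $H^{\mathrm{hex}}_h$ exactly, and your proposal never estimates those error terms.

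If instead one accepts your own construction --- exact lower-band eigenprojection composed with extension by zero from $\mathbb{B}_h$ --- your argument does prove estimates of the stated form, and it is cleaner because everything collapses to scalar symbol bounds. What you lose is exactly what the paper's construction buys: $J^{\mathrm{tr}}_h\Theta_h$ is a genuinely local, geometrically meaningful embedding (needed, e.g., if one wants to add potentials as in Theorem~\ref{thm-hamiltonian-with-potential}), whereas your $v_-(\x)=\bigl(1,\overline{\a(\x)}/|\a(\x)|\bigr)$ is discontinuous at the Dirac points and yields a nonlocal, purely spectral identification. To repair the proposal as a proof of the theorem actually stated, you would need to insert the comparison between your idealized $J$ and the paper's $J^{\mathrm{tr}}_h\Theta_h$, which amounts to reproving Lemmas~\ref{lem-hex-key1}--\ref{lem-hex-key3}.
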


\subsection{Elliptic operators on the square lattice}\label{subsec-intro-elliptic}

We now consider an elliptic operator $P$ on $L^2(\re^d)$ defined by
\[
Pu(x)=\sum_{j,k=1}^d D_j a_{j,k}(x) D_k u(x)+V(x)u(x), \quad \text{for }u\in C_c^2(\re^d),
\]
where $D_j =-i\pa/\pa x_j$, and $\{a_{j,k}(x)\}_{j,k=1}^d$ are sufficiently smooth coefficients. 
We always suppose they are symmetric: $a_{j,k}(x)=a_{k,j}(x)$, real-valued, and $V(x)$ is real-valued
so that $P$ is symmetric. 

\begin{ass}\label{ass-elli}
We assume $a_{j,k}\in C^1(\re^d)$ for each $j,k=1,\dots, d$. 
$a_{j,k}(x)$, $\pa_x a_{j,k}(x)$ are bounded on $\re^d$. 
Moreover, $\{a_{j,k}(x)\}_{j,k=1}^d$ is uniformly elliptic, i.e., there is $c_0>0$ such that
\[
\sum_{j,k=1}^d a_{j,k}(x) \x_j \x_k \geq c_0|\x|^2, \quad x, \x\in\re^d.
\]
\end{ass}

A corresponding operator on the square lattice $h\ze^d$, with the lattice spacing $h>0$, may be define as follows. Let $D_{h;j}^\pm$ be the forward/backward difference operators defined by 
\[
D_{h;j}^+u(x) =\frac1{ih}(u(x+he_j)-u(x)), \quad 
D_{h;j}^-u(x) =\frac1{ih}(u(x)-u(x-he_j)), 
\]
for $u\in \ell^2(h\ze^d)$, $j=1,\dots, d$ and $x\in h\ze^d$, 
where $\{e_j\}_{j=1}^d$ is the standard basis of $\re^d$, i.e., 
$(e_j)_k=\d_{j,k}$, $j, k=1,\dots,d$. We note $(D_{h;j}^\pm)^*=D_{h;j}^\mp$. 
We also denote $V_hu(x)=V(x)u(x)$, $x\in h\ze^d$. 
Then we set 
\[
P_h^\pm u(x) =\sum_{j,k=1}^d D_{h;j}^\mp a_{j,k}(x) D_{h;k}^\pm u(x)
+V_hu(x) , 
\quad x\in h\ze^d.
\]

\begin{rem}
$P_h^\pm$ are slightly more complicated than the definition may suggest. 
If we compute them explicitly, we have
\begin{align*}
P_h^\pm u(x) &= \pm\frac1{ih} \sum_{j,k=1}^d D_{h;j}^\mp\bigpare{ a_{j,k}(x)(u(x\pm he_k)-u(x))}
+V_hu(x)\\
&=- \frac{1}{h^2}\sum_{j,k=1}^d \bigpare{ a_{j,k}(x)u(x\pm he_k) -a_{j,k}(x) u(x) \\
&\quad -a_{j,k}(x\mp h e_j)u(x\pm he_k\mp he_j)+a_{j,k}(x\mp h e_j)u(x\mp he_j)}+V_hu(x).
\end{align*}
If $a_{j,k}=0$ for $j\neq k$, it becomes
\begin{align*}
P_h^\pm u(x) 
&= \frac{1}{h^2}\sum_{j=1}^d \bigpare{  (a_{j,j}(x)+a_{j,j}(x\mp h e_j)) u(x) \\
&\qquad - a_{j,j}(x)u(x\pm h{e_j})-a_{j,j}(x\mp h e_j)u(x\mp he_j)}+V_hu(x)\\
&= \frac{1}{h^2}\sum_{j=1}^d \bigpare{  (a_{j,j}(x)(u(x)-u(x\pm h{e_j})) \\
&\qquad +a_{j,j}(x\mp h e_j)(u(x)-u(x\mp he_j))}+V_hu(x),  
\end{align*}
and the coefficients $a_{j,j}(x)$ may be considered as a function on the set of edges, 
specifically $a_{j,j}(x)= \tilde a((x,x+h e_j))$ (for $+$-case). In particular, we have
\[
\jap{u, P_h^+ u} = 2\sum_{e : \text{edges}} \tilde a(e) |u(i(e)-u(t(e))|^2 +\sum_{x\in h\re^d} V(x)|u(x)|^2, 
\]
where $i((a,b))=a$, $t((a,b))=b$. 
But this is a special case, and in general we cannot use this expression. 
\end{rem}

\begin{thm}\label{thm-elliptic-convergence}
Suppose Assumptions~\ref{ass-V}~and~\ref{ass-elli}, and $(V(x)+M)^{-1}$ is Lipschitz continuous with some $M>0$. Then for any $z\in\co\setminus \re$ and $0<\a<1$, 
there is $C>0$ such that
\[
\bignorm{(P-z)^{-1} - J_h (P_h^\pm-z)^{-1} J_h^* } \leq Ch^\a, 
\quad h\in (0,1],
\]
where $J_h : \ell^2(h\ze^d)\to L^2(\re^d)$ is the identification operator as in Section~\ref{sec-standard-lattices}.  
\end{thm}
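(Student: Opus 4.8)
The plan is to establish the generalized norm resolvent convergence in Theorem~\ref{thm-elliptic-convergence} by combining the embedding machinery already developed for the square lattice with an \emph{elliptic estimate} for the discrete operators $P_h^\pm$, as announced before Lemma~\ref{lem-elliptic-estimate}. First I would record the basic properties of the identification operator $J_h$ carried over from Section~\ref{sec-standard-lattices}: $J_h$ is an isometry (or asymptotic isometry) intertwining multiplication and difference operators with their continuum counterparts up to $O(h)$ errors, and crucially $J_h^* D_{h;j}^{\pm}$ should approximate $D_j J_h^*$ with a controlled error. Since $P$ and $P_h^\pm$ are self-adjoint (the symmetry of $a_{j,k}$ and reality of $V$ guarantee this, and Assumption~\ref{ass-V} gives essential self-adjointness of $P$ on $C_c^\infty$), both resolvents are bounded by $1/|\Im z|$, which lets me run the standard resolvent-difference identity.

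The core of the proof is the algebraic identity
\[
(P-z)^{-1} - J_h (P_h^\pm - z)^{-1} J_h^*
= (P-z)^{-1}\bigpare{ J_h (P_h^\pm - z) J_h^* - (P-z) } J_h (P_h^\pm - z)^{-1} J_h^*
+ R_h,
\]
where $R_h$ collects the terms arising from the failure of $J_h J_h^*$ to be exactly the identity (controlled by the analogue of Lemma~\ref{lem-1-JJ} for $h\ze^d$). The main work is then to bound, in operator norm, the \emph{consistency error}
\[
B_h := J_h (P_h^\pm - z) J_h^* - (P-z)
\]
when applied to $(H_0-z_0)^{-1}$-type smoothing, i.e.\ to estimate $B_h$ as a map from $H^2(\re^d)$ into $L^2(\re^d)$ with an $O(h^\alpha)$ bound. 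The differentiated terms of $P_h^\pm$ produce, besides the principal symbol $\sum a_{j,k}\x_j\x_k$ matching that of $P$, remainder terms of two types: (i) a Taylor-expansion error from replacing finite differences by derivatives, and (ii) a commutator error from the fact that the coefficients $a_{j,k}(x)$ are evaluated at shifted points $x\mp he_j$ rather than at $x$. Here the $C^1$ regularity with bounded $a_{j,k}$ and $\pa_x a_{j,k}$ (Assumption~\ref{ass-elli}), together with the Lipschitz control on $(V+M)^{-1}$, gives precisely an $O(h)$ bound on these remainders at the level of the symbol, but only after absorbing a derivative onto the test function; this is why the exponent degrades from $2$ to any $\alpha<1$.

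The decisive step, and the main obstacle, is the elliptic a~priori estimate of Lemma~\ref{lem-elliptic-estimate}: to convert the symbol-level $O(h)$ bound into a genuine operator-norm bound on the full resolvent, I must control the \emph{discrete second derivatives} $D_{h;j}^\mp D_{h;k}^\pm (P_h^\pm - z)^{-1}$ uniformly in $h$. This is the discrete counterpart of the elliptic regularity estimate $\norm{u}_{H^2}\leq C\norm{(P-z)u}_{L^2} + C\norm{u}_{L^2}$, and it is delicate because the variable coefficients force one to commute $a_{j,k}(x)$ through the difference operators while retaining uniform ellipticity $\sum a_{j,k}\x_j\x_k\geq c_0|\x|^2$. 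I would prove it by forming $\jap{u, (P_h^\pm + M')u}$ for large $M'$, using summation by parts (the relation $(D_{h;j}^\pm)^* = D_{h;j}^\mp$) to expose the positive quadratic form $\sum_{j,k}\jap{D_{h;k}^\pm u, a_{j,k} D_{h;j}^\pm u}\geq c_0\sum_j\norm{D_{h;j}^\pm u}^2$, and then handling the lower-order commutators generated by the shifted coefficients via the boundedness of $\pa_x a_{j,k}$ and a Cauchy--Schwarz absorption argument. Once this uniform discrete $H^2$-bound is in hand, I combine it with the $O(h^\alpha)$ consistency bound on $B_h$ and the uniform resolvent bounds to close the estimate and obtain the claimed rate $O(h^\alpha)$ for every $0<\alpha<1$.
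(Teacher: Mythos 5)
Your overall architecture is the same as the paper's: write the resolvent difference as a consistency error sandwiched between resolvents, split the consistency error according to the factorizations $P=\pmb{D}^*\pmb{A}\pmb{D}+V$ and $P_h^\pm=(\pmb{D}_h^\pm)^*\pmb{A}_h\pmb{D}_h^\pm+V_h$, get $O(h)$ from the Fourier-symbol comparison of $D_{h;j}^\pm$ with $D_j$, get $O(h^\a)$ from the commutators of $J_h^*$ with the Lipschitz coefficients $a_{j,k}$ and with $(V+M)^{-1}$, and close the argument with uniform relative boundedness and a discrete elliptic estimate. All of that matches Section~5.

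However, there is a genuine gap in your treatment of the decisive step, the discrete elliptic estimate. You correctly state that what is needed is a uniform bound on the discrete \emph{second} differences $D_{h;j}^\mp D_{h;k}^\pm(P_h^\pm-z)^{-1}$, i.e.\ the discrete analogue of $\norm{u}_{H^2}\leq C\norm{(P-z)u}+C\norm{u}$. But the quadratic form you propose to start from, $\jap{u,(P_h^\pm+M')u}$, only exposes $\sum_{j,k}\jap{D_{h;k}^\pm u,a_{j,k}D_{h;j}^\pm u}\geq c_0\norm{\pmb{D}_h^\pm u}^2$ after one summation by parts; that is a first-order (discrete $H^1$) bound and no amount of absorbing lower-order commutators will upgrade it to control of $H_{0;h}u$. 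The paper's Lemma~\ref{lem-elliptic-estimate} instead pairs $P_h^\pm u$ against $H_{0;h}u$: the term $\sum_j\jap{D_{h;j}^-u,P_h^+D_{h;j}^-u}$ produces, after a second summation by parts and uniform ellipticity, the lower bound $c_0\norm{H_{0;h}u}^2$ (using $\sum_{j,k}\norm{D_{h;k}^+D_{h;j}^-u}^2=\norm{H_{0;h}u}^2$), while the commutator $[\pmb{D}_h^-,P_h^+]=\pmb{D}_h^-[\pmb{D}_h^-,\pmb{A}_h]\pmb{D}_h^+$ is controlled by the boundedness of $\pa_xa_{j,k}$ and absorbed by Cauchy--Schwarz. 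You have the right ingredients (double summation by parts, ellipticity, $C^1$ commutator bounds), but the test function must be $H_{0;h}u$, not $u$; as written, your key lemma would not be proved. A secondary, more minor point: your resolvent identity involving $J_h(P_h^\pm-z)J_h^*-(P-z)$ requires separate control of the defect $1-J_hJ_h^*$ on both sides; the paper's intertwining form $(P_h^\pm-z)^{-1}(P_h^\pm J_h^*-J_h^*P)(P-z)^{-1}$ avoids half of that bookkeeping and only needs $(1-J_hJ_h^*)(P-z)^{-1}=O(h^2)$ at the very end.
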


\section{Standard lattices}\label{sec-standard-lattices}

\subsection{Dual lattices and the Fourier transform}

Concerning the Fourier transform, we use the notation
\[
\fourier u(\x) = \int_{\re^d} e^{-2\pi i x\cdot\x} u(x)dx, \quad \x\in\re^d 
\]
for a function (or distribution) $u(x)$ on $\re^d$. 

The dual lattice $\L'\subset\re^d$ is defined by
\[
\L'=\bigset{\x\in\re^d}{\x\cdot x\in \ze \text{ for all } x\in\\L}, 
\]
and it plays a crucial role in the following argument. The dual lattice may be written as 
$\L'={}^t L^{-1}\ze^d$ if $\L=L\ze^d$, and the volume of an arbitrary fundamental domain of $\L'$, described below,
is given by $\omega_0^{-1}=|\det L|^{-1}$. 

Let $\O\subset\re^d$ be a fundamental domain with respect to $\L'$, i.e., 
\[
\O\cap (\O+e) \text{ is measure zero for any $e\in \L'\setminus\{0\}$}, 
\quad \bigcup_{e\in\L'} (\O+e) =\re^d.
\]
For example, let $\{b_j\}_{j=1}^d \subset \L'$ be a basis of $\L'$, i.e., 
$\L'=b_1\ze+\cdots +b_d\ze$,  and in the following we may set
\[
\O=\biggset{\sum_{j=1}^d  t_j b_j}{0\leq t_j\leq 1, j=1,\dots, d}.
\]

If we identify $v\in \ell^2(h\L)$ with 
\[
\tilde v(x) = \omega_0 h^d \sum_{y\in h\L} v(y)\d(x-y), 
\]
then we have
\[
\fourier \tilde v(\x) =\omega_0 h^d \sum_{y\in h \L} e^{-2\pi i y\cdot\x}v(y).
\]
We note that $\fourier \tilde v$ is periodic with respect to the action of the dual lattice: 
$(h\L)'=h^{-1}\L'$. 
Motivated by these observations, for $v\in\ell^2(h \L)$, we define 
\[
F_h v(\x)=\omega_0 h^d \sum_{y\in h \L} e^{-2\pi i y\cdot\x}v(y), 
\quad \x\in h^{-1}\O.
\]
$F_h$ is a unitary from $\ell^2(h\L)$ to $L^2(h^{-1}\O)$, 
and the inverse is given by 
\[
F_h^* w(x) =\int_{h^{-1}\O} e^{2\pi i x\cdot\x} w(\x)d\x, \quad x\in h\L,
\]
for $w\in L^2(h^{-1}\O)$. We recall $\mathrm{vol}(h^{-1}\O)= \omega_0^{-1}h^{-d}$.

\subsection{The construction of identification operators}\label{sec-const-J-lattice}
In order to construct $J_h$, we need to find a function $\f\in\mathcal{S}(\re^d)$ such that 
$\{ \f(\cdot-z)\}_{z\in\L}$ is an orthonormal system. 
This condition is equivalent to the claim $\{ e^{-2\pi i \x\cdot z}\hat\f(\x)\}_{z\in\L}$ 
is an orthonormal system, where $\hat \f =\fourier\f$.  
We compute the inner product of these functions: 
\begin{align*}
&\int_{\re^d} \overline{e^{-2\pi i \x\cdot z}\hat\f(\x)} e^{-2\pi i \x\cdot y}\hat\f(\x) d\x
=\int_{\re^d} e^{2\pi i(z-y)\cdot\x} |\hat\f(\x)|^2 d\x \\
&\quad =\sum_{\y\in\L'}\int_{\O} e^{2\pi i(z-y)\cdot(\x+\y)} |\hat\f(\x+\y)|^2 d\x 
=\int_{\O} e^{2\pi i(z-y)\cdot\x} \sum_{\y\in\L'}|\hat\f(\x+\y)|^2 d\x.
\end{align*}
Here we have used the fact $(z-y)\cdot\y\in\ze$ if $z,y\in \L$ and $\y\in\L'$. 
Since $\{e^{2\pi i z\cdot\x}\}_{z\in\L}$ is an orthogonal basis of $L^2(\O)$, 
$\sum_{\y\in\L'} |\hat\f(\x+\y)|^2$ must be constant in order for the inner product to vanish 
when $z\neq y$. By the condition for $z=y=0$, we also learn that the constant  must be
$\mathrm{vol}(\O)^{-1}=\omega_0$. Thus we have 

\begin{lem}\label{lem-ONC}
Let $\f\in\mathcal{S}(\re^d)$. Then $\{\f(\cdot-z)\}_{z\in\L}$ is an 
orthonormal system if and only if 
\begin{equation}\label{eq-ons}
\sum_{\y\in\L'}|\hat\f(\x+\y)|^2 =\omega_0, \quad \x\in\re^d. 
\end{equation}
\end{lem}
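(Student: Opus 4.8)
The plan is to prove the equivalence in Lemma~\ref{lem-ONC} by reducing the orthonormality condition to a statement about Fourier coefficients of a single periodic function, exactly along the lines already carried out in the displayed computation preceding the statement. First I would invoke the Plancherel theorem to rewrite the $L^2(\re^d)$ inner product of $\f(\cdot-z)$ and $\f(\cdot-y)$ as the inner product of their Fourier transforms; since $\fourier[\f(\cdot-z)](\x)=e^{-2\pi i\x\cdot z}\hat\f(\x)$, this converts the orthonormality of the translates $\{\f(\cdot-z)\}_{z\in\L}$ into the orthonormality of the modulated family $\{e^{-2\pi i\x\cdot z}\hat\f(\x)\}_{z\in\L}$ in $L^2(\re^d)$. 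This is the first sentence of the excerpt's computation and is purely formal.

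Next I would fold the integral over $\re^d$ into an integral over the fundamental domain $\O$ by decomposing $\re^d=\bigcup_{\y\in\L'}(\O+\y)$ and translating each piece back to $\O$. The crucial arithmetic fact, already noted in the excerpt, is that $(z-y)\cdot\y\in\ze$ whenever $z,y\in\L$ and $\y\in\L'$, so the phase $e^{2\pi i(z-y)\cdot(\x+\y)}$ collapses to $e^{2\pi i(z-y)\cdot\x}$ and is independent of $\y$. This lets me interchange the sum over $\L'$ and the integral, producing
\[
\int_\O e^{2\pi i(z-y)\cdot\x}\,\Phi(\x)\,d\x,\qquad
\Phi(\x):=\sum_{\y\in\L'}|\hat\f(\x+\y)|^2.
\]
Here $\f\in\mathcal{S}$ guarantees the rapid decay of $\hat\f$, so $\Phi$ is a well-defined $\L'$-periodic $L^1(\O)$ function and all the interchanges are justified.

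The heart of the argument is then the observation that $\{e^{2\pi i z\cdot\x}\}_{z\in\L}$ is an orthogonal basis of $L^2(\O)$ (with $\O$ a fundamental domain of the dual lattice $\L'$), so the displayed integral is, up to normalization, the $(z-y)$-th Fourier coefficient of $\Phi$. Orthonormality of the translates is the requirement that this vanish for $z\neq y$ and equal $1$ for $z=y$. By uniqueness of Fourier coefficients, the off-diagonal vanishing is equivalent to $\Phi$ being constant a.e.\ on $\O$, and the diagonal normalization fixes that constant: evaluating at $z=y$ gives $\mathrm{vol}(\O)\cdot(\text{const})=1$, and since $\mathrm{vol}(\O)=\omega_0^{-1}$, the constant is $\omega_0$. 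This yields the stated condition \eqref{eq-ons}, and running the implications in reverse gives the converse.

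The only genuine subtlety—the ``hard part''—is justifying the termwise operations (Fubini/Tonelli to interchange the sum over $\L'$ with the integral, and the identification of the periodized sum with a legitimate element of $L^2(\O)$ whose Fourier coefficients are determined uniquely). Both are handled cleanly by the Schwartz hypothesis $\f\in\mathcal{S}(\re^d)$, which forces $\sum_{\y\in\L'}|\hat\f(\cdot+\y)|^2$ to converge absolutely and uniformly to a continuous periodic function; no further regularity input is needed. Everything else is the bookkeeping of Poisson-type periodization already displayed in the excerpt, so the proof is essentially a matter of asserting that the computation shown is reversible and that the vanishing of all nonzero Fourier coefficients of a continuous periodic function forces it to be constant.
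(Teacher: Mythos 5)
Your proposal is correct and follows essentially the same route as the paper: Plancherel, periodization of the integral over the fundamental domain $\O$ using $(z-y)\cdot\y\in\ze$, and the identification of the resulting expression as a Fourier coefficient of $\Phi(\x)=\sum_{\y\in\L'}|\hat\f(\x+\y)|^2$ on $L^2(\O)$, with the constant fixed to $\omega_0=\mathrm{vol}(\O)^{-1}$ by the diagonal case. The only addition is your explicit justification of the interchange of sum and integral via the Schwartz hypothesis, which the paper leaves implicit.
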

\begin{ass}\label{ass-ONS}
$\f\in\mathcal{S}(\re^d)$ satisfies the condition \eqref{eq-ons}. Moreover, $\hat\f\in C_0^\infty(\re^d)$, 
and $|\hat\f(\x)|^2=\omega_0$ in a neighborhood of $0$.
\end{ass}

\begin{rem}
We can easily show functions satisfying Assumption~A exist. 
We sketch a  typical construction: Let $\O_0$ be the first Brillouin zone:  
\[
\O_0=\bigset{x\in\re^d}{|x|\leq \mathrm{dist}(x,y) \text{ for all }y\in\L'\setminus \{0\}}.
\]
$\O_0$ is a fundamental domain for $\L'$, and $\sqrt{\omega_0}$ times the characteristic function of 
$\O_0$, $\chi_{\O_0}(\x)$, satisfies the conditions except for the smoothness. 
Let $\g_\e*$, $\e>0$, be a mollifier and let $\hat \f_0=\g_\e*\chi_{\O_0}$
with small $\e>0$ so that $\supp[\hat\f_0]\cap\L' =\{0\}$. Then 
\[
\hat\f(\x) =\sqrt{\omega_0}\biggpare{\sum_{\y\in\L'} |\hat\f_0(\x+\y)|^2}^{-1/2}\hat\f_0(\x)
\]
satisfies Assumption~A, and $\hat\f$ is supported in an arbitrarily close neighborhood of $\O_0$. 
\end{rem}

We can now define the identification operator $J_h$. 
Suppose $\f$ satisfies Assumption~A and we set
\[
\f_{h;z}(x)=\f(h^{-1}(x-z)), \quad x\in\re^d, z\in h\L.
\]
Then we define 
\[
J_h v(x) = \sqrt{\omega_0}\sum_{z\in h\L} \f_{h;z}(x) v(z)
\]
for $v\in\ell^2(h\L)$.
Assumption~\ref{ass-ONS} implies it is an isometry from $\ell^2(h\L)$ to $L^2(\re^d)$. 
The adjoint operator is given by 
\[
J_h^* u(z) = \frac{1}{\sqrt{\omega_0}h^d} \int_{\re^d} \overline{\f_{h;z}(x)}u(x) dx, \quad z\in h\L, 
\]
for $u\in L^2(\re^d)$.


\subsection{Proof of Theorems~\ref{thm-free-hamiltonian} and \ref{thm-hamiltonian-with-potential}}
\label{subsec-lattice-proof}
At first, we recall $H_0$ is a Fourier multiplier with the symbol $p_0(2\pi\x)$, i.e., 
\[
\fourier H_0 \fourier^* =p_0(2\pi\x)\cdot\quad \text{for }\x\in \re^d. 
\]
Similarly, the symbol of $H_{0,h}$ is given by 
\[
F_h H_{0,h} F_h^* = p_{0,h}(2\pi\z)\cdot = 2h^{-2} \sum_{j=1}^K \m_j (1-\cos(h f_j\cdot2\pi \z))\cdot
\quad \text{for }\z\in h^{-1}\L'. 
\]
We denote
\[
T_h = \fourier J_h F_h^* \ :\ L^2(h^{-1}\O) \to L^2(\re^d).
\]

\begin{lem}
For $g\in L^2(h^{-1}\O)$, 
\[
T_h g(\x) =\omega_0^{-1/2} \hat\f(h\x)\tilde g(\x), \quad \x\in\re^d, 
\]
where $\tilde g$ is the periodic extension of $g$,
and for $f\in\mathcal{S}(\re^d)$, 
\[
T_h^* f(\x) = \omega_0^{-1/2}\sum_{\y\in h^{-1}\L'} \overline{\hat\f(h(\x+\y))}f(\x+\y), 
\quad \x\in h^{-1}\O.
\]
\end{lem}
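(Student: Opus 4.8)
The plan is to compute $T_h=\fourier J_h F_h^*$ directly by composing the three operators one after another, and then to obtain the formula for $T_h^*$ by a duality pairing against the first formula, rather than by composing adjoints.

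First I would record the Fourier transform of the building blocks $\f_{h;z}$. With the substitution $y=h^{-1}(x-z)$ one finds
\[
\fourier\f_{h;z}(\x) = h^d e^{-2\pi i z\cdot\x}\hat\f(h\x).
\]
Applying this termwise to $J_h(F_h^* g)=\sqrt{\omega_0}\sum_{z\in h\L}\f_{h;z}\,(F_h^* g)(z)$ gives
\[
T_h g(\x) = \sqrt{\omega_0}\,\hat\f(h\x)\sum_{z\in h\L} h^d e^{-2\pi i z\cdot\x}(F_h^* g)(z).
\]
The key observation is that the remaining sum is, up to the factor $\omega_0$, exactly $F_h(F_h^* g)$ evaluated at $\x$; since $F_h F_h^*=\mathrm{Id}$ on $L^2(h^{-1}\O)$ this equals $\omega_0^{-1}g(\x)$ for $\x\in h^{-1}\O$. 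The point I would stress is that the sum itself is defined for \emph{all} $\x\in\re^d$ and is periodic with respect to the dual lattice $(h\L)'=h^{-1}\L'$, because $e^{-2\pi i z\cdot\y}=1$ whenever $z\in h\L$ and $\y\in h^{-1}\L'$. Hence the sum equals $\omega_0^{-1}\tilde g(\x)$, the periodic extension of $g$, for every $\x$, which yields the first formula $T_h g(\x)=\omega_0^{-1/2}\hat\f(h\x)\tilde g(\x)$.

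For $T_h^*$ I would avoid composing adjoints and instead verify the defining relation $\langle f, T_h g\rangle_{L^2(\re^d)}=\langle T_h^* f, g\rangle_{L^2(h^{-1}\O)}$ for $f\in\mathcal{S}(\re^d)$ and $g\in L^2(h^{-1}\O)$. Substituting the first formula into the left-hand inner product and decomposing $\re^d=\bigcup_{\y\in h^{-1}\L'}(h^{-1}\O+\y)$, I would shift each tile back to $h^{-1}\O$ via $\x\mapsto\x+\y$ and use the periodicity $\tilde g(\x+\y)=g(\x)$ on $h^{-1}\O$. Collecting the sum over $\y$ inside the integral over $h^{-1}\O$ and comparing with $\langle T_h^* f, g\rangle$ then reads off
\[
T_h^* f(\x)=\omega_0^{-1/2}\sum_{\y\in h^{-1}\L'}\overline{\hat\f(h(\x+\y))}f(\x+\y),\quad \x\in h^{-1}\O.
\]

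The computation is mechanical; the only genuinely delicate point is the one flagged above, namely that the discrete sum appearing after the Fourier transform produces the \emph{periodic extension} $\tilde g$ rather than $g$ itself, precisely because $\fourier$ is taken on all of $\re^d$ whereas $F_h^*$ inverts $F_h$ only on the fundamental domain $h^{-1}\O$. A secondary technical matter is justifying the interchange of the summation over $z\in h\L$ with the Fourier integral and with the integral over $h^{-1}\O$; this is routine given $\hat\f\in C_0^\infty(\re^d)$ and $g\in L^2(h^{-1}\O)$ (for instance by first taking $g$ in a dense class and passing to the limit), so I would not dwell on it.
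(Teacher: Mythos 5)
Your argument is correct and coincides with the paper's own proof: the first formula is obtained by the same interchange of the sum over $z\in h\L$ with the Fourier integral (the integral $\int e^{-2\pi i(x-z)\cdot\x}\f(h^{-1}(x-z))\,dx=h^d\hat\f(h\x)$ in the paper is exactly your Fourier transform of $\f_{h;z}$), followed by recognizing the remaining sum as $\omega_0^{-1}$ times the periodic extension of $F_hF_h^*g=g$; the second formula is likewise derived in the paper by pairing $\jap{T_hg,f}$ and tiling $\re^d$ by translates of $h^{-1}\O$. No gaps.
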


\begin{proof}
We have 
\begin{align*}
T_h g(\x) &= \int_{\re^d} e^{-2\pi ix\cdot\x}\sqrt{\omega_0} \sum_{z\in h\L} \f(h^{-1}(x-z)) 
F_h^*g(z) dx\\
&= \sqrt{\omega_0}\sum_{z\in h\L} e^{-2\pi i z\cdot\x}  F_h^*g(z)
\int_{\re^d}e^{-2\pi i(x-z)\cdot \x}\f(h^{-1}(x-z))dx\\
&= \sqrt{\omega_0} h^d\sum_{z\in h\L} e^{-2\pi i z\cdot\x}  F_h^*g(z) \hat\f(h\x)
=\omega_0^{-1/2} \hat\f(h\x)\tilde g(\x),
\end{align*}
where we have used the inversion formula at the last line, and we identify $g\in L^2(h^{-1}\O)$
with a $(h^{-1}\L')$-periodic function $\tilde g$ on $\re^d$. 

Then we compute 
\begin{align*}
\jap{g,T_h^* f}&= 
\jap{T_h g, f}
=\int_{\re^d} \overline{T_h g(\x)}f(\x) d\x
=\omega_0^{-1/2} \int_{\re^d} \overline{\hat\f(h\x)\tilde g(\x)}f(\x) d\x\\
&=\omega_0^{-1/2} \sum_{\y\in h^{-1}\L'} \int_{h^{-1}\O}
\overline{\hat\f(h(\x+\y)) g(\x)}f(\x+\y) d\x\\
&=\int_{h^{-1}\O} \overline{g(\x)} \biggpare{\omega_0^{-1/2} \sum_{\y\in h^{-1}\L'} 
\overline{\hat\f(h(\x+\y))}f(\x+\y)} d\x,
\end{align*}
and this implies the assertion. 
\end{proof}

\begin{lem}\label{lem-1-JJ}
For $\m\in\co\setminus\re$, there is $C>0$ such that
\[
\bignorm{(1-J_hJ_h^*)(H_0-\m)^{-1}}_{\mathcal{B}(L^2(\re^d))}\leq Ch^2.
\]
\end{lem}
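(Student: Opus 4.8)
The plan is to conjugate everything by the Fourier transform and reduce the claim to a multiplier estimate together with an aliasing estimate. Since $J_h=\fourier^* T_h F_h$ and $F_h$ is unitary onto $L^2(h^{-1}\O)$, we have $\fourier J_hJ_h^*\fourier^* = T_hT_h^*$, and inserting the formulas for $T_h$ and $T_h^*$ obtained above yields, for $f\in\mathcal{S}(\re^d)$,
\[
\fourier J_hJ_h^*\fourier^* f(\x)=\omega_0^{-1}\hat\f(h\x)\sum_{\y\in h^{-1}\L'}\overline{\hat\f(h(\x+\y))}\,f(\x+\y).
\]
Because $\fourier(H_0-\m)^{-1}\fourier^*$ is multiplication by $(p_0(2\pi\x)-\m)^{-1}$ and $\fourier$ is unitary, it suffices (working first with $u$ having $\hat u\in C_c^\infty$ and extending by density) to bound in $L^2$, for $g(\x)=(p_0(2\pi\x)-\m)^{-1}\hat u(\x)$, the quantity
\[
\fourier(1-J_hJ_h^*)\fourier^* g(\x)=\bigpare{1-\omega_0^{-1}|\hat\f(h\x)|^2}g(\x)-\omega_0^{-1}\hat\f(h\x)\sum_{\y\neq 0}\overline{\hat\f(h(\x+\y))}\,g(\x+\y),
\]
where I have isolated the $\y=0$ term. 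I would treat the two terms separately.

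For the first (diagonal) term, Assumption~\ref{ass-ONS} gives $|\hat\f(\z)|^2=\omega_0$ on a neighborhood $\{|\z|<\rho_0\}$ of the origin, so $1-\omega_0^{-1}|\hat\f(h\x)|^2$ vanishes for $|\x|\leq\rho_0/h$, while the orthonormality relation \eqref{eq-ons} forces $|\hat\f|^2\leq\omega_0$ and hence $0\leq 1-\omega_0^{-1}|\hat\f(h\x)|^2\leq 1$ everywhere. On the complementary region $|\x|\geq\rho_0/h$, nondegeneracy of $p_0$ gives $p_0(2\pi\x)\geq\d|\x|^2\geq\d\rho_0^2 h^{-2}$, so for $h$ small $|(p_0(2\pi\x)-\m)^{-1}|\leq Ch^2$; this bounds the first term in $L^2$ by $Ch^2\norm{\hat u}=Ch^2\norm{u}$.

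The second (off-diagonal, i.e.\ aliasing) term is where the real work lies, and I expect it to be the main obstacle. The key is again Assumption~\ref{ass-ONS}: since $|\hat\f|^2=\omega_0$ on $\{|\z|<\rho_0\}$, relation \eqref{eq-ons} forces $\hat\f(\z+\y')=0$ there for every $\y'\in\L'\setminus\{0\}$. Applying this with $\z=h\x$ and then with $\z=h(\x+\y)$ (and $\y'=\pm h\y\in\L'\setminus\{0\}$) shows that on the support of the $\y$-summand one has \emph{both} $|\x|\geq\rho_0/h$ and $|\x+\y|\geq\rho_0/h$. Thus the factor $g(\x+\y)$ carries $(p_0(2\pi(\x+\y))-\m)^{-1}$, which is $O(h^2)$ on this region exactly as in the diagonal case. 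Moreover, since $\hat\f$ is compactly supported, say in $\{|\z|\leq R\}$, the two support conditions $h\x,\,h(\x+\y)\in\supp[\hat\f]$ require $|h\y|\leq 2R$, so only finitely many translates $\y\in h^{-1}\L'\setminus\{0\}$—a number $N$ independent of $h$—contribute. I would then estimate each summand by Cauchy--Schwarz, use $|\hat\f|\leq\sqrt{\omega_0}$ together with the change of variables $\x\mapsto\x+\y$ and the $O(h^2)$ resolvent bound, and sum the $N$ terms, obtaining a bound $Ch^2\norm{u}$ for the second term as well. Adding the two contributions gives the assertion. The delicate point throughout is the support bookkeeping—verifying via Assumption~\ref{ass-ONS} that all overlaps of $\hat\f(h\cdot)$ with its lattice translates are pushed out to $|\x|\gtrsim 1/h$, precisely where the resolvent supplies the $h^2$ gain, and that only finitely many translates intervene; the resolvent decay itself is routine.
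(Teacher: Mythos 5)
Your proposal is correct and follows essentially the same route as the paper: conjugating by $\fourier$ and $F_h$ to reduce to the explicit formula for $T_hT_h^*$, splitting off the $\y=0$ (diagonal) term, and using Assumption~\ref{ass-ONS} to push the supports of both the diagonal remainder and the finitely many aliasing terms into the region $|\x|\gtrsim h^{-1}$ where $(p_0(2\pi\x)-\m)^{-1}=O(h^2)$. The paper only sketches this (deferring to Lemma~2.2 of \cite{NaTad}); your write-up supplies the same argument with the support bookkeeping made explicit.
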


\begin{proof}
The proof is analogous to \cite{NaTad} Lemma~2.2, and we sketch the proof. We first note 
\[
\bignorm{(1-J_hJ_h^*)(H_0-\m)^{-1}}
=\norm{(1-T_h T_h^*)g_\m(\cdot)},
\]
where we denote $g_\m(\x)=(p_0(2\pi\x)-\m)^{-1}$. Then, using the above lemma, we have 
\begin{align*}
(1-T_hT_h^*)g_\m(\x) &= (1-\omega_0^{-1}|\hat\f(h\x)|^2)g_\m(\x)\\
&\quad -\omega_0^{-1}\hat\f(h\x)\sum_{n\neq 0, n\in\L'} \overline{\hat\f(h\x+n)}g_\m(\x+h^{-1}n)
\end{align*}
Thanks to Assumption~\ref{ass-ONS}, $(1-\omega_0^{-1}|\hat\f(h\x)|^2)g_\m(\x)$  
is supported in $|h\x|\geq \d$ with some $\d>0$. Since $|g_\m(\x)|\leq Ch^{2}$ on 
$\{\x\mid |\x|\geq \d h\}$ with some $C>0$, we learn 
\[
\sup_{\x\in\re^d} \bigabs{(1-\omega_0^{-1}|\hat\f(h\x)|^2)g_\m(\x)}\leq Ch^2.
\]
The same argument works for the other terms since the sum is finite and each term 
is supported in $|h\x+n|\geq \d$. These imply the assertion. 
\end{proof}

\begin{proof}[Proof of Theorem~\ref{thm-free-hamiltonian}]
The argument is essentially the same as the proof of \cite{NaTad} Lemma~2.3. 
We mainly describe the differences here. Naturally the sum over $n\in\ze^d$ in \cite{NaTad}
is replaced by the sum over $n\in\L'$. The symbol of $H_{0,h}$, i.e., the multiplier of 
$F_h H_{0,h} F_h^*$ is given by 
\[
p_{0,h}(2\pi\x)=2h^{-2} \sum_{j=1}^K  \m_j(1-\cos(h f_j\cdot2\pi\x)),
\]
and by the Taylor expansion, we have 
\[
p_{0,h}(2\pi\x) = \sum_{j=1}^K \m_j (f_j\cdot2\pi\x)^2 + O(h^2|\x|^4)=p_0(2\pi\x) +O(h^2|\x|^4). 
\]
Then we can carry out the same argument as in the proof of \cite{NaTad} Lemma~2.3. 
\end{proof}

As well as \cite{NaTad} Lemmas~2.4, 2.5, we have the following lemma. 
The proof is essentially the same, and we omit it.

\begin{lem}\label{lem-rel-bdd}
Under the Assumption~\ref{ass-V}, 
$V$ is $H$-bounded, and hence $H_0$ is also $H$-bounded.
Moreover $V_h$ is $H_h$-bounded uniformly in $h>0$, 
and hence $H_{0,h}$ is also $H_h$-bounded uniformly in $h>0$. 
\end{lem}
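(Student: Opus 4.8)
The plan is to reduce everything to a single relative bound and then read off all four assertions. Set $W=V+M$; since $V$ is bounded from below, we may take $M$ so large that $W\ge 1$, and (as the comparability in Assumption~\ref{ass-V} persists under increasing $M$) $W$ still satisfies the doubling estimate of Assumption~\ref{ass-V}. Because $H_0\ge 0$ (resp.\ $H_{0,h}\ge 0$), the operators $G:=H+M=H_0+W$ and $G_h:=H_h+M=H_{0,h}+W$ satisfy $G\ge W\ge 1$ and $G_h\ge W\ge 1$. I claim it suffices to prove, on a core (namely $C_c^\infty(\re^d)$ in the continuum and finitely supported sequences on the lattice), the estimate $\|Wu\|^2\le C(\|G u\|^2+\|u\|^2)$, respectively $\|Wu\|^2\le C(\|G_h u\|^2+\|u\|^2)$ with $C$ independent of $h$. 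Indeed this gives $V=W-M$ relatively $H$-bounded and $V_h=W-M$ relatively $H_h$-bounded uniformly; and since $H_0=H-V$ and $H_{0,h}=H_h-V_h$, the companion statements follow at once from the triangle inequality.

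For the estimate itself I would expand $\|G u\|^2=\|H_0 u\|^2+\|Wu\|^2+2\,\mathrm{Re}\langle H_0 u, Wu\rangle$ and discard the nonnegative term $\|H_0u\|^2$, so that everything reduces to a lower bound on the cross term. Writing $H_0=\sum_l \tilde D_l^2$ with $\tilde D_l$ the (formally self-adjoint, first order, constant coefficient) operators obtained from a real symmetric square root of the quadratic form $p_0$, integration by parts and the Leibniz rule give $2\,\mathrm{Re}\langle H_0u,Wu\rangle=2\sum_l\|W^{1/2}\tilde D_l u\|^2+2\,\mathrm{Re}\sum_l\langle \tilde D_l u,(\tilde D_l W)u\rangle$. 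The first sum is $\ge 0$; the second is the genuine ``commutator'' term in which a derivative lands on $W$. To control it I would use the slowly-varying bound $|\nabla W|\lesssim W$, which I make rigorous despite the mere continuity of $V$ by mollifying: for $W_\delta=W*\rho_\delta$ the doubling condition yields $|\nabla W_\delta|\le C W_\delta\lesssim W$ uniformly in $\delta\le1$, and for fixed compactly supported $u$ one passes to the limit $\delta\to0$ in $\langle H_0u,W u\rangle$. Then $\|(\tilde D_l W)u\|\lesssim\|Wu\|$, $\sum_l\|\tilde D_l u\|\lesssim\langle u,H_0u\rangle^{1/2}\le(\|u\|\,\|G u\|)^{1/2}$, and Young's inequality closes the continuum estimate.

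The discrete case runs along the same lines, with $\tilde D_l$ replaced by the difference operators in $H_{0,h}=\sum_j\mu_j\nabla_{h,j}^*\nabla_{h,j}$, so that the cross term becomes a nonnegative $W$-weighted kinetic energy plus a commutator contribution of the form $\langle u,(H_{0,h}W)u\rangle$, i.e.\ the discrete Laplacian of $W$ tested against $|u|^2$. The hard part will be to bound this term \emph{uniformly in $h$}: the naive estimate treats $\nabla_{h,j}W=h^{-1}(W(\cdot+hf_j)-W)$ using only the comparability $W(\cdot\pm hf_j)\asymp W$, which is of size $h^{-1}W$ and produces spurious negative powers of $h$. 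The point — and this is exactly where the slowly-varying hypotheses of Assumption~\ref{ass-V} (the uniform continuity of $(V+M)^{-1}$ together with comparability) are essential — is that summation by parts re-expresses this contribution through the first/second finite differences of $W$, whereupon the explicit factor $h^{-2}$ cancels and the leftover is absorbed by the $W$-weighted kinetic energy that was already produced, all with $h$-independent constants. This is carried out precisely as in \cite{NaTad} (Lemmas~2.4 and 2.5), so once the cancellation is arranged the uniform bound $\|Wu\|\le C(\|G_h u\|+\|u\|)$ follows by the same Cauchy--Schwarz and Young argument as above, completing the proof.
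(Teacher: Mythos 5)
Your overall architecture (reduce everything to $\|Wu\|\le C(\|Gu\|+\|u\|)$ with $W=V+M$, expand $\|Gu\|^2$, and control the cross term $\mathrm{Re}\langle H_0u,Wu\rangle$ by moving one factor of $\tilde D_l$ across and absorbing the commutator) is a workable alternative to the paper's argument, and you correctly identify that the only place Assumption~\ref{ass-V} can enter is a logarithmic-derivative bound on a smoothed potential. But precisely there the proposal has a genuine gap: the claim that $|\nabla W_\delta|\le C\,W_\delta$ holds \emph{uniformly in} $\delta\le1$, followed by passage to the limit $\delta\to0$, is false under Assumption~\ref{ass-V}. Since $\nabla W_\delta=W*\nabla\rho_\delta$ and $\|\nabla\rho_\delta\|_{L^1}=\delta^{-1}\|\nabla\rho\|_{L^1}$, the doubling condition only yields $|\nabla W_\delta|\le C\delta^{-1}W_\delta$, and no derivative information on $W$ is available ($V$ is merely continuous), so the $\delta^{-1}$ cannot be removed: for $W(x)=2+\min(|x|^{1/2},1)$, which satisfies Assumption~\ref{ass-V}, one has $\sup_x|\nabla W_\delta(x)|/W_\delta(x)\sim\delta^{-1/2}\to\infty$. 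Consequently the commutator bound $\|(\tilde D_lW_\delta)u\|\lesssim\|W_\delta u\|$ does not survive the limit $\delta\to0$, and your continuum estimate does not close as written.

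The repair --- and this is what the paper actually does, via \cite{NaTad} Lemmas~2.4--2.5 and, in full detail, in the proof of the analogous Lemma~\ref{lem-rel-bdd-elliptic} --- is to keep the mollification at the \emph{fixed} scale $\delta=1$: set $\tilde V=\eta*V$, so that $c^{-1}(V+M)\le\tilde V+M\le c(V+M)$ and $|\partial^\alpha\tilde V|\le C_\alpha(\tilde V+M)$ with constants depending only on the data of Assumption~\ref{ass-V}, and then prove relative boundedness of $\tilde V$ (which dominates $V$ pointwise up to a constant) with respect to $H=H_0+V$; the mollification parameter is never sent to zero. Your expansion can be salvaged in this form, e.g.\ from $\|Gu\|\,\|\tilde Wu\|\ge\mathrm{Re}\langle Gu,\tilde Wu\rangle\ge c^{-1}\|\tilde Wu\|^2-C\|\tilde Wu\|\,(\|u\|\,\|Gu\|)^{1/2}$ with $\tilde W=\tilde V+M$, but this is a restructuring, not a cosmetic fix. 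For comparison, the paper's own route is slightly different from yours: it uses the form bounds $\|\tilde V_h^{1/2}(H_h+M)^{-1/2}\|\le C$ together with the identity $\tilde V_hR=(W_hR^{1/2})(W_hR^{1/2})^*+W_hR\,[H_h,W_h]\,R$ with $W_h=\tilde V_h^{1/2}$ and the uniform boundedness of $[D_{h;j}^\pm,W_h]W_h^{-1}$. Finally, your description of the discrete case (``summation by parts cancels the $h^{-2}$'') is not quite the mechanism either: once $\tilde V$ is smooth with $|\nabla\tilde V^{1/2}|\le C\tilde V^{1/2}$, the mean value theorem combined with the doubling condition gives $|D_{h;j}^\pm\tilde V^{1/2}|\le C\tilde V^{1/2}$ uniformly in $h$ directly, which is all that is needed.
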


The next lemma is almost identical to \cite{NaTad} Lemma~2.6. 

\begin{lem}
Suppose $G$ is a bounded uniformly continuous function on $\re^d$. Then 
\[
\bignorm{GJ_h^* - J_h^* G}_{\mathcal{B}(L^2(\re^d),\ell^2(h\L))} \to 0
\quad \text{as }h\to 0.
\]
\end{lem}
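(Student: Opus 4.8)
The plan is to realize the difference $GJ_h^*-J_h^*G$ as an integral operator from $L^2(\re^d)$ to $\ell^2(h\L)$ and to bound its norm by a Schur test. First I would compute its kernel. Since $G$ acts on $\ell^2(h\L)$ as multiplication by the restriction $G|_{h\L}$ and on $L^2(\re^d)$ as multiplication by $G(x)$, the explicit formula for $J_h^*$ gives, for $z\in h\L$,
\[
(GJ_h^*-J_h^*G)u(z)=\frac{1}{\sqrt{\omega_0}\,h^d}\int_{\re^d}\overline{\f(h^{-1}(x-z))}\,\bigpare{G(z)-G(x)}\,u(x)\,dx,
\]
so the kernel is $K_h(z,x)=\omega_0^{-1/2}h^{-d}\,\overline{\f(h^{-1}(x-z))}\,(G(z)-G(x))$. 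The mechanism driving the proof is that $\f(h^{-1}(\,\cdot\,))$ concentrates on the scale $h$ near the diagonal $x=z$, where $G(z)-G(x)$ is small by uniform continuity, while the rapid (Schwartz) decay of $\f$ suppresses the contribution away from the diagonal. I would then apply the Schur test with the two natural measures ($dx$ on $\re^d$ and $\omega_0 h^d\sum_{z\in h\L}\d_z$ on $h\L$) and constant weights, which yields $\bignorm{GJ_h^*-J_h^*G}\le\sqrt{C_1 C_2}$ with
\[
C_1=\sup_{z\in h\L}\int_{\re^d}|K_h(z,x)|\,dx,\qquad
C_2=\sup_{x\in\re^d}\omega_0 h^d\sum_{z\in h\L}|K_h(z,x)|.
\]

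For $C_1$, the substitution $x=z+hy$ cancels the factor $h^{-d}$ and gives $C_1=\omega_0^{-1/2}\sup_z\int_{\re^d}|\f(y)|\,|G(z)-G(z+hy)|\,dy$. Given $\e>0$, uniform continuity of $G$ furnishes $\d>0$ such that $|G(z)-G(z+hy)|<\e$ whenever $|hy|<\d$, i.e.\ on $|y|<\d/h$; there the integral is at most $\e\,\norm{\f}_{L^1}$. On the complementary region $|y|\ge\d/h$ I would bound the difference crudely by $2\sup|G|$, so that this part is $\le 2\sup|G|\int_{|y|\ge\d/h}|\f(y)|\,dy$, which tends to $0$ as $h\to0$ because $\f\in L^1$. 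Hence $\limsup_{h\to0}C_1\le\omega_0^{-1/2}\e\,\norm{\f}_{L^1}$ for every $\e$, and so $C_1\to0$.

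The bound on $C_2$ is the main obstacle, since there the $x$-integral is replaced by a lattice sum: writing $z=h\z$ with $\z\in\L$ one gets $C_2=\omega_0^{1/2}\sup_x\sum_{\z\in\L}|\f(h^{-1}x-\z)|\,|G(h\z)-G(x)|$, a sum of a Schwartz function over a translate of $\L$. The two facts I would establish are (i) the uniform-in-translate bound $S_\f:=\sup_{w\in\re^d}\sum_{\z\in\L}|\f(w-\z)|<\infty$, and (ii) the tail estimate $\sup_{w}\sum_{\z\in\L,\,|w-\z|\ge R}|\f(w-\z)|\to0$ as $R\to\infty$; both follow from $|\f(y)|\le C_N\jap{y}^{-N}$ with $N>d$, by comparing the lattice sum to an integral (counting the lattice points of $\L$ in unit annuli, which is $O(R^{d-1})$ uniformly in the center). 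With these in hand I split the sum at $|x-h\z|=\d$, which is exactly $|h^{-1}x-\z|=\d/h$: the near part is $\le\e\,S_\f$ by uniform continuity, and the far part is controlled by the lattice tail (ii) at radius $\d/h\to\infty$. Thus $C_2\to0$, and combining with the first estimate gives $\bignorm{GJ_h^*-J_h^*G}\le\sqrt{C_1C_2}\to0$, as claimed. Compared with the $C_1$ computation the only genuinely new input is the uniform boundedness and tail decay of the lattice sum of $\f$; the remainder is the same near/far decomposition powered by uniform continuity of $G$ and rapid decay of $\f$.
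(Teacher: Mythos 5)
Your proof is correct: the kernel of $GJ_h^*-J_h^*G$ is computed correctly, the Schur test with the measures $dx$ and $\omega_0 h^d\sum_{z\in h\L}\d_z$ is applied properly, and the near/far splitting using uniform continuity of $G$ together with the Schwartz decay of $\f$ (including the uniform lattice-sum bound and its tail estimate) closes the argument. The paper itself omits the proof and defers to \cite{NaTad}, Lemma~2.6, whose argument is essentially this same direct kernel estimate, so your approach matches the intended one.
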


Now Theorem~\ref{thm-hamiltonian-with-potential} and 
Corollary~\ref{cor-spectrum-lattice} are proved by the same argument as
that of Subsection~2.3 and Appendix~A in \cite{NaTad}, respectively.

\subsection{Examples}
Here we discuss several typical examples. 

\begin{example}[Triangular lattice]\label{example-tri}
The lattice of the triangular lattice in $\re^2$ is generated by the pair of vectors 
$e_1=(1,0)$ and $e_2=(1/2,\sqrt{3}/2)$, i.e., $\L^{\mathrm{tr}}=\ze e_1+\ze e_2$. 
The graph of the triangular lattice has degree 6, and it is generated by $e_1$, $e_2$ and $e_3=(-1/2,\sqrt{3}/2)$, i.e., 
\[
\edges = ([0,e_1]+\L^{\mathrm{tr}})\cup ([0,e_2]+\L^{\mathrm{tr}})\cup ([0,e_3]+\L^{\mathrm{tr}})
\]
(as an undirected graph). Then the symbol of the 
(scaled) Laplace operator is given by 
\[
p_{0,h}(2\pi\x) = 2h^{-2} \sum_{j=1}^3 (1-\cos(he_j\cdot 2\pi\x)), 
\]
where we set $\m_j=1$ for all $j$. The symbol of the limit operator is 
\[
p_0(\x)=\sum_{j=1}^3 (e_j\cdot\x)^2 = \x_1^2 +(\x_1/2+\x_2\sqrt{3}/2)^2
+(-\x_1/2+\x_2\sqrt{3}/2)^2 = (3/2)|\x|^2, 
\]
i.e., the limit operator is $-(3/2)\triangle$. 
\end{example}

\begin{example}[Tetrahedral lattice]
The tetrahedral lattice in $\re^3$ is generated by the three vectors
\[
e_1=(1,0,0), \quad e_2=(1/2,\sqrt{3}/2,0), \quad e_3=(1/2,\sqrt{3}/6, \sqrt{2/3}).
\]
The graph has degree 12, and the set of edges is generated by six vectors: 
$e_1$, $e_2$, $e_3$, $(-1/2,\sqrt{3}/2,0)$ and
\[
(-1/2, \sqrt{3}/6, \sqrt{2/3}), \quad 
(0,- 1/\sqrt{3}, \sqrt{2/3}).
\]
If $\m_j=1$ for all $j$, we have the limit operator $-2\triangle$. 
\end{example}

\begin{example}[Octahedral lattice]

The lattice of the octahedral lattice in $\re^3$ is generated by the three vectors
\[
e_1=(1,0,0), \quad e_2=(0,1,0), \quad e_3=(1/2,1/2, 1/\sqrt{2}).
\]
The graph has degree 12, and the set of edges is generated by six vectors: 
$e_1$, $e_2$,  and $(\pm 1/2,\pm 1/2, 1/\sqrt{2})$. 
If $\m_j=1$ for all $j$, then the limit operator is again $-2\triangle$. 
\end{example}


\section{Hexagonal lattice}\label{sec-hexagonal}
\subsection{Identification operator for the hexagonal lattice}

In the following sections, we use the notation in Subsection~\ref{subsec-intro-hex}. 
Let $\Lambda^{\mathrm{tr}}_h$ be the triangular lattice generated by $e_1$ and $e_2$, i.e., 
\[
\Lambda^{\mathrm{tr}}_h = \bigset{n_1h e_1+n_2h e_2}{n_1,n_2\in\ze}, 
\]
where $e_j$ are introduced in Subsection~\ref{subsec-intro-hex}. 
We note $\Lambda^{\mathrm{tr}}_h \supset \Lambda^{\mathrm{hex}}_h$, and this is the same lattice 
considered in Example~\ref{example-tri}. 
The norm of $\ell^2(\Lambda^{\mathrm{tr}}_h)$ is given by 
\[
\norm{u}_{\ell^2(\Lambda^{\mathrm{tr}}_h)}^2 = \omega_0 h^2\sum_{z\in\Lambda^{\mathrm{tr}}_h} |u(z)|^2, 
\quad u\in \ell^2(\Lambda^{\mathrm{tr}}_h), 
\]
where $\omega_0=\sqrt{3}/2$, and then the natural norm of $\ell^2(\Lambda^{\mathrm{hex}}_h)$ is given by 
\[
\norm{u}_{\ell^2(\Lambda^{\mathrm{hex}}_h)}^2 = \frac{3}{2} \omega_0 h^2\sum_{z\in\Lambda^{\mathrm{hex}}_h} |u(z)|^2, 
\quad u\in \ell^2(\Lambda^{\mathrm{hex}}_h). 
\]
We set an interpolation operator $\Theta_h : \ell^2(\Lambda^{\mathrm{hex}}_h)\to \ell^2(\Lambda^{\mathrm{tr}}_h)$ by 
\[
\Theta_h u(z) = \begin{cases}u(z) \quad&\text{if }z\in \Lambda^{\mathrm{hex}}_h, \\
{\displaystyle \frac{1}{6}\sum_{|z'-z|=h} u(z')} \quad &\text{if }z\notin \Lambda^{\mathrm{hex}}_h
\end{cases}
\]
for $u\in \ell^2(\Lambda^{\mathrm{hex}}_h)$. 
Let $\f(x)\in\mathcal{S}(\re^2)$ be such that 
$\{\f_{h;z}(x)=\f(h^{-1}(x-z))\}_{z\in\Lambda^{\mathrm{tr}}_h}$
is an orthonormal system of $L^2(\re^2)$, $\hat\f\in C_0^\infty(\re^2)$ and that $|\hat\f(\x)|^2=\omega_0$ in a neighborhood of $0$, 
as in Assumption~\ref{ass-ONS}.  We denote the identification operator for the 
triangular lattice by
$J^{\mathrm{tr}}_h : \ell^2(\Lambda^{\mathrm{tr}}_h) \to L^2(\re^2)$, i.e., 
\[
J^{\mathrm{tr}}_h v(x) = \sqrt{\omega_0}\sum_{z\in\Lambda^{\mathrm{tr}}_h} \f_{h;z}(x) v(z), \quad x\in\re^2, 
\]
for $v\in \ell^2(\Lambda^{\mathrm{tr}}_h)$.  
We now define $J^{\mathrm{hex}}_h$ by 
\[
J^{\mathrm{hex}}_h = J^{\mathrm{tr}}_h \Theta_h\ : \ \ell^2(\Lambda^{\mathrm{hex}}_h)
\to L^2(\re^2). 
\]

We note, by Theorem~1.1 and Example~5.1, we have 
\begin{equation}\label{eq-hex-keystep1}
\bignorm{(H^{\mathrm{tr}}_0-\m)^{-1}-J^{\mathrm{tr}}_h (H^{\mathrm{tr}}_h-\m)^{-1}
(J^{\mathrm{tr}}_h)^*}_{\mathcal{B}(L^2(\re^2))} \leq Ch^2,
\end{equation}
where $H^{\mathrm{tr}}_0=-(3/2)\triangle$ on $L^2(\re^2)$ and 
$H^{\mathrm{tr}}_h$ is the Hamiltonian on $\ell^2(\L^{\mathrm{tr}}_h)$. 
We will show (see Lemma~\ref{lem-hex-key2}): 
\begin{equation}\label{eq-hex-keystep2}
\bignorm{(\tfrac12 H^{\mathrm{tr}}_h-\m)^{-1}-\Theta_h (H^{\mathrm{hex}}_h-\m)^{-1}
\Theta_h^*}_{\mathcal{B}(\ell^2(\L^{\mathrm{tr}}_h))} \leq Ch. 
\end{equation}
It is easy to see \eqref{eq-hex-keystep1} and \eqref{eq-hex-keystep2} imply the first estimate of Theorem~\ref{thm-hexa} since 
$H^{\mathrm{hex}}_0=\frac12 H^{\mathrm{tr}}_0$. 

\subsection{Operators $H^{\mathrm{hex}}_h$, $H^{\mathrm{tr}}_h$ and $\Theta_h$ in the Fourier space}

We recall the basis of $\Gamma_h$ is $hf_1$ and $hf_2$, where $f_1= (3/2,\sqrt{3}/2)$ and 
$f_2=(3/2,\sqrt{3}/2)$. The dual basis for $(f_1,f_2)$ is given by 
\[
f_1'=(1/3,\sqrt{3}/3)=(2/3)e_1, \quad f_2'=(1/3,-\sqrt{3}/3)=(2/3)e_2,
\]
and the then the dual lattice for $\Gamma_h$ is given by 
$\Gamma_h'=h^{-1}f_1'\ze +h^{-1}f_2'\ze$.
On the other hand, the dual basis for $(e_1,e_2)$ is given by
\[
e_1'=(1, \sqrt{3}/3) = (2/3)f_1, \quad e_2'=(1, -\sqrt{3}/3) = (2/3)f_2,
\]
and the dual lattice is $(\L^{\mathrm{tr}}_h)'=h^{-1}e_1'\ze+h^{-1}e_2'\ze$. We note 
\[
e_1'=2f_1'+f_2', \quad e_2'= f_1'+2f_2', 
\]
and in particular, $(\L^{\mathrm{tr}}_h)'$ is a submodule of $\Gamma_h'$, and 
$\#(\Gamma_h'/(\L^{\mathrm{tr}}_h)')=3$. 


We identify $u\in\ell^2(\Lambda^{\mathrm{tr}}_h)$ (resp. $v\in\ell^2(\Lambda^{\mathrm{hex}}_h)$) 
with 
\[
\begin{pmatrix} u(\cdot) \\ u(\cdot+he_3) \\ u(\cdot-he_2) \end{pmatrix} \in\ell^2(\Gamma_h; \co^3)
 \quad 
\text{(resp. }\begin{pmatrix} v(\cdot) \\ v(\cdot+he_3) \end{pmatrix} \in \ell^2(\Gamma_h; \co^2)\text{).}
\]
These identifications are unitary with the norms
\begin{align}\label{norms}
\norm{v}_{\ell^2(\Gamma_h;\co^n)}^2 = \frac{3\omega_0 h^2}{n} \sum_{z\in\Gamma_h} |v(z)|^2
\end{align}
for $v\in\ell^2(\Gamma_h;\co^n)$, $n=2,3$. We note $|\det(f_1,f_2)|=3\sqrt{3}/2=3\omega_0$. 

For $v\in\ell^2(\L_h;\co^n)$, we denote the Fourier transform for $\Gamma_h$ as 
\[
F_h v(\x)=3\omega_0 h^2 \sum_{y\in \Gamma_h} e^{-2\pi i y\cdot\x}v(y), 
\quad \x\in\O_h, 
\]
where $\O_h$ is a fundamental domain of $\Gamma_h'$. 
$F_h$ is a unitary from $\ell^2(\Gamma_h;\co^n)$ to $L^2(\O_h;\co^n)$ with the norm
\[
\norm{w}_{L^2(\O_h;\co^n)}^2 = \frac1n \int_{\O_h}|w(\x)|^2 d\x.
\] 
In general, we consider $\co^n$ as a Hilbert space with the inner product
\begin{equation}\label{eq-co-ip}
\jap{u,v}_{\co^n} =\frac1n \sum_{j=1}^n \overline{u_j} v_j, \quad u,v\in\co^n.
\end{equation}

Now we write down the representations of operators $H^{\mathrm{hex}}_h$, 
$H^{\mathrm{tr}}_h$ and $\Theta_h$ in the Fourier space $L^2(\O_h,\co^n)$. 
For simplicity, we denote 
\[
\phi_f(\x):=e^{2\pi ih f\cdot\x}, \quad \x\in\re^2
\]
for $f\in\re^2$. The multiplication by $\phi_f$ on $\O_h$ corresponds to the shift by $h f$ in 
$\L^{\mathrm{tr}}_h$, etc., when $f\in \L^{\mathrm{tr}}_1$. 

Now if we set
\[
\hat H^{\mathrm{tr}}_h(\x)
= h^{-2}{\small \begin{pmatrix}
6 & -1 - \phi_{f_1} - \phi_{f_2} & -1 - \phi_{f_2} - \phi_{f_2-f_1} \\
-1-\phi_{-f_1}-\phi_{-f_2} & 6 & -1-\phi_{-f_1}-\phi_{f_2-f_1} \\
-1 - \phi_{-f_2} - \phi_{f_1-f_2} & -1-\phi_{f_1}-\phi_{f_1-f_2} & 6
\end{pmatrix}}, 
\]
then it is easy to confirm $F_h H^{\mathrm{tr}}_h F_h^* =\hat H^{\mathrm{tr}}_h(\x)\cdot$ 
on $L^2(\O_h;\co^3)$. Similarly, if we set
\[
\hat H^{\mathrm{hex}}_h(\x)
= h^{-2}\begin{pmatrix}
3 & -1 - \phi_{f_1} - \phi_{f_2} \\
-1-\phi_{-f_1}-\phi_{-f_2} & 3 
\end{pmatrix}, 
\]
then one can show $F_h H^{\mathrm{hex}}_h F_h^* =\hat H^{\mathrm{hex}}_h(\x)\cdot$ 
on $L^2(\O_h;\co^2)$. We also set 
\[
\hat\Theta_h(\x) =
\begin{pmatrix} 1 & 0 \\ 0 & 1 \\ \frac16(1+\phi_{-f_2}+\phi_{f_1-f_2}) 
& \frac16(1+\phi_{f_1}+\phi_{f_1-f_2}) \end{pmatrix},
\]
and then we have $F_h \Theta_h F_h^* =\hat \Theta_h(\x)\cdot$ as an operator from 
$L^2(\O_h;\co^2)$ to $L^2(\O_h;\co^3)$. This also implies $F_h \Theta_h^* F_h^*$ is 
represented by the multiplication operator by
\[
\hat \Theta_h^*(\x) 
=\frac23 \begin{pmatrix} 1 & 0 & \frac16(1+\phi_{f_2}+\phi_{-f_1+f_2}) 
\\ 0 & 1 & \frac16(1+\phi_{-f_1}+\phi_{-f_1+f_2}) \end{pmatrix} 
\]
as an operator from $L^2(\O_h;\co^3)$ to $L^2(\O_h;\co^2)$. We note the multiplication factor 
2/3 appears because of the different definitions of inner products (see \eqref{eq-co-ip}). 

It is well-known that the spectrum of $H^{\mathrm{hex}}_h$ is computed by the eigenvalues 
of $\hat H^{\mathrm{hex}}_h(\x)$ as follows. 

\begin{lem}\label{lem-spec-hex}
Let $\alpha(\x)=1+\phi_{f_1}(\x)+\phi_{f_2}(\x)$. Then the eigenvalues and the eigenvectors of 
$\hat H^{\mathrm{hex}}_h(\x)$ are given by 
\begin{align*}
E^{\mathrm{hex}}_\pm(\x)&=h^{-2}(3\pm|\a(\x)|), \\
w^{\mathrm{hex}}_\pm(\x)&= \begin{pmatrix} 1 \\ \mp \overline{\a(\x)}/|\a(\x)| \end{pmatrix}, 
\end{align*}
respectively. In particular, the spectrum of $H^{\mathrm{hex}}_h$ is $[0, 6h^{-2}]$ 
and it is absolutely continuous. 
\end{lem}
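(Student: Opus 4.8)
The plan is to diagonalize the $2\times2$ Hermitian matrix $\hat H^{\mathrm{hex}}_h(\x)$ by hand, and then to read off the spectrum from its two eigenvalue bands via the fiber (Floquet) decomposition already established, namely that $F_h H^{\mathrm{hex}}_h F_h^* = \hat H^{\mathrm{hex}}_h(\x)\,\cdot$ on $L^2(\O_h;\co^2)$.

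First I would observe that, since $\phi_{-f}(\x)=e^{-2\pi i h f\cdot\x}=\overline{\phi_f(\x)}$, the two off-diagonal entries of $\hat H^{\mathrm{hex}}_h(\x)$ are complex conjugates: with $\alpha(\x)=1+\phi_{f_1}(\x)+\phi_{f_2}(\x)$ the matrix is
\[
\hat H^{\mathrm{hex}}_h(\x)=h^{-2}\begin{pmatrix} 3 & -\alpha(\x) \\ -\overline{\alpha(\x)} & 3\end{pmatrix},
\]
which is manifestly Hermitian. Factoring out $h^{-2}$ and writing the eigenvalue as $h^{-2}\lambda$, the characteristic equation is $(3-\lambda)^2-|\alpha(\x)|^2=0$, so $\lambda=3\pm|\alpha(\x)|$ and $E^{\mathrm{hex}}_\pm(\x)=h^{-2}(3\pm|\alpha(\x)|)$. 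Solving $(\hat H^{\mathrm{hex}}_h(\x)-E^{\mathrm{hex}}_\pm(\x))w=0$, the first row reads $\mp|\alpha|\,w_1-\alpha\,w_2=0$, so normalizing $w_1=1$ gives $w_2=\mp\overline{\alpha}/|\alpha|$, which is exactly $w^{\mathrm{hex}}_\pm$. At the points where $\alpha(\x)=0$ the two bands touch and the eigenvectors are not defined, but this is a discrete (in particular null) set in $\O_h$ and does not affect the spectral conclusion.

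For the spectrum I would use that $H^{\mathrm{hex}}_h$ is unitarily equivalent to multiplication by $\hat H^{\mathrm{hex}}_h(\x)$ on $L^2(\O_h;\co^2)$, so $\s(H^{\mathrm{hex}}_h)=\overline{\bigcup_{\x\in\O_h}\{E^{\mathrm{hex}}_+(\x),E^{\mathrm{hex}}_-(\x)\}}$, and it remains to find the range of $|\alpha(\x)|$. Parametrizing $\x=s_1 h^{-1}f_1'+s_2 h^{-1}f_2'$ and using $f_i\cdot f_j'=\d_{ij}$ gives $h f_j\cdot\x=s_j$, so with $\theta_j=2\pi s_j$ the pair $(\theta_1,\theta_2)$ sweeps the full torus $[0,2\pi)^2$ as $\x$ runs over the fundamental domain. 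Thus $\alpha=1+e^{i\theta_1}+e^{i\theta_2}$ attains modulus $3$ at $\theta_1=\theta_2=0$ and vanishes at $(\theta_1,\theta_2)=(2\pi/3,4\pi/3)$ (the three cube roots of unity summing to zero), and by continuity $|\alpha(\x)|$ takes every value in $[0,3]$. Consequently $E^{\mathrm{hex}}_-$ fills $[0,3h^{-2}]$ and $E^{\mathrm{hex}}_+$ fills $[3h^{-2},6h^{-2}]$, whose union is $[0,6h^{-2}]$.

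Finally, for absolute continuity I would invoke the standard criterion for fibered operators: after diagonalizing by the (measurable) eigenbasis, $H^{\mathrm{hex}}_h$ is unitarily equivalent to multiplication by the scalar functions $E^{\mathrm{hex}}_\pm(\x)$, whose spectral measures are the pushforwards of Lebesgue measure. Away from the discrete zero set of $\alpha$, the functions $|\alpha(\x)|$, hence $E^{\mathrm{hex}}_\pm(\x)$, are real-analytic and non-constant, so their gradients vanish only on a null set; by the coarea formula the pushforward measures then have $L^1_{\mathrm{loc}}$ densities, giving purely absolutely continuous spectrum. The only genuine work lies in verifying this non-degeneracy, i.e. that the level sets of $|\alpha|$ are null; everything else is elementary linear algebra together with the explicit fiber decomposition. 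I therefore expect the absolute-continuity step to be the main (though still routine) obstacle, with the eigenvalue/eigenvector computation and the band-range determination being straightforward.
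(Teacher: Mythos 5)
Your proof is correct. The paper omits the proof of this lemma entirely (``The proof is elementary, and we omit it''), and your argument --- explicit diagonalization of the Hermitian $2\times 2$ fiber, determination of the range of $|\alpha(\x)|$ over the torus via the parametrization $h f_j\cdot\x=s_j$, and the standard real-analyticity/coarea criterion for absolute continuity of fibered multiplication operators --- is exactly the intended elementary computation, with the degeneracy points $\alpha(\x)=0$ correctly dismissed as a null set.
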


The proof is elementary, and we omit it. 

The spectrum of $H^{\mathrm{tr}}_h$ is similarly straightforward, but it looks slightly 
complicated in our representation. If we consider $H^{\mathrm{tr}}_h$ as an operator 
on $\L^{\mathrm{tr}}_h$, then the (generalized) eigenvalues are computed using the 
Fourier transform and they are given by
\[
E^{\mathrm{tr}}(\x)= h^{-2}\biggpare{6-2\sum_{j=1}^3\cos(2\pi h e_j\cdot\x)}, 
\]
where $\x$ is an element of a fundamental domain for $(\L^{\mathrm{tr}}_h)'$ 
(or $\x\in \re^2/ (\L^{\mathrm{tr}}_h)'$), and the (generalized) 
eigenfunctions are given by 
\[
\psi_\x(z)=\exp(2\pi i \x\cdot z), \quad z\in\L^{\mathrm{tr}}_h
\]
(see Example~\ref{example-tri}). 

\begin{lem}\label{lem-spec-tr}
Let $\x\in\O_h$ and $E^{\mathrm{tr}}(\cdot)$  as above. Then the eigenvalues and the eigenvectors of 
$\hat H^{\mathrm{tr}}_h(\x)$ are given by 
\[
E^{\mathrm{tr}}_0(\x)=E^{\mathrm{tr}}(\x); \quad 
E^{\mathrm{tr}}_\pm(\x) = E^{\mathrm{tr}}(\x\pm h^{-1}f_1'),
\]
and 
\[
w^{\mathrm{tr}}_0(\x)= \begin{pmatrix} 1\\ \phi_{e_3}(\x)\\ \phi_{-e_2}(\x) \end{pmatrix};  
\quad 
w^{\mathrm{tr}}_\pm(\x)= \begin{pmatrix} 1\\ \phi_{e_3}(\x\pm h^{-1}f_1')
\\ \phi_{-e_2}(\x\pm h^{-1}f_1') \end{pmatrix}, 
\]
respectively. 
\end{lem}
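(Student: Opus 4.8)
The plan is to obtain the lemma from the band-folding relation between $H^{\mathrm{tr}}_h$ regarded on $\ell^2(\L^{\mathrm{tr}}_h)$ and on $\ell^2(\Gamma_h;\co^3)$, exploiting the inclusion $(\L^{\mathrm{tr}}_h)'\subset\Gamma_h'$ of index $3$. On $\ell^2(\L^{\mathrm{tr}}_h)$ the operator is diagonalized by the full Fourier transform, with generalized eigenfunction $\psi_\x(z)=\exp(2\pi i\x\cdot z)$ and eigenvalue $E^{\mathrm{tr}}(\x)$ for $\x\in\re^2/(\L^{\mathrm{tr}}_h)'$, as already recorded before the statement. The bridge to the $\co^3$ picture is the observation that $h^{-1}f_1'\in\Gamma_h'$: since $f_1'\cdot f_1=1$ and $f_1'\cdot f_2=0$, we have $\psi_{\x+h^{-1}f_1'}(z)=\psi_\x(z)$ for every $z\in\Gamma_h$, so the three functions $\psi_\x,\psi_{\x\pm h^{-1}f_1'}$ all carry the same $\Gamma_h$-quasimomentum $\x$ and hence, under the $\co^3$ identification, land in the single fiber over $\x$.

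First I would compute the three fiber components $u(z),u(z+he_3),u(z-he_2)$ at a base point $z\in\Gamma_h$ for $u=\psi_{\x+\nu}$, $\nu\in\{0,\pm h^{-1}f_1'\}$. Factoring out the common phase $\exp(2\pi i\x\cdot z)$ and recalling $\phi_f(\x)=\exp(2\pi ihf\cdot\x)$ produces precisely the vectors $w^{\mathrm{tr}}_0(\x)$ and $w^{\mathrm{tr}}_\pm(\x)$ of the statement. Since the $\Gamma_h$-Fourier transform composed with the inverse $\L^{\mathrm{tr}}_h$-Fourier transform is unitary and carries the diagonal multiplier $E^{\mathrm{tr}}$ to $\hat H^{\mathrm{tr}}_h(\x)$, the images of $\psi_\x,\psi_{\x\pm h^{-1}f_1'}$ form an eigenbasis of $\hat H^{\mathrm{tr}}_h(\x)$ with eigenvalues $E^{\mathrm{tr}}(\x)$ and $E^{\mathrm{tr}}(\x\pm h^{-1}f_1')$, which is the claim.

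Alternatively, one may argue entirely within the $\co^3$ picture, which I would include as the concrete verification. A direct matrix multiplication shows $\hat H^{\mathrm{tr}}_h(\x)\,w^{\mathrm{tr}}_0(\x)=E^{\mathrm{tr}}(\x)\,w^{\mathrm{tr}}_0(\x)$ for all $\x$; the entries of $\hat H^{\mathrm{tr}}_h$ are built from $\phi_{f_1},\phi_{f_2}$ and their combinations, which are all $\Gamma_h'$-periodic, so $\hat H^{\mathrm{tr}}_h(\x\pm h^{-1}f_1')=\hat H^{\mathrm{tr}}_h(\x)$. Replacing $\x$ by $\x\pm h^{-1}f_1'$ in the eigenvalue equation then leaves the matrix unchanged, turns $w^{\mathrm{tr}}_0(\x\pm h^{-1}f_1')$ into $w^{\mathrm{tr}}_\pm(\x)$, and turns $E^{\mathrm{tr}}(\x)$ into $E^{\mathrm{tr}}(\x\pm h^{-1}f_1')$, yielding the $\pm$ cases at once.

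To close the argument I would record that $\{0,\pm h^{-1}f_1'\}$ is a full set of coset representatives of $(\L^{\mathrm{tr}}_h)'$ in $\Gamma_h'$: the change of basis from $(f_1',f_2')$ to $(e_1',e_2')$ is $\left(\begin{smallmatrix}2&1\\1&2\end{smallmatrix}\right)$ with determinant $3$, the quotient is cyclic of order $3$ generated by $f_1'$, and $3h^{-1}f_1'=h^{-1}(e_1'+e_2')\in(\L^{\mathrm{tr}}_h)'$. The main obstacle is not conceptual but bookkeeping: keeping the two dual lattices and fundamental domains straight, matching each fiber component with the correct shift, and tracking the phases under $\x\mapsto\x\pm h^{-1}f_1'$. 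Here $\phi_{f_1}$ and $\phi_{f_2}$ are invariant, being $\Gamma_h'$-periodic, whereas $\phi_{e_3}$ and $\phi_{-e_2}$ acquire the cube-root-of-unity factors $e^{\mp2\pi i/3}$ (since $e_3\cdot f_1'=-1/3$ and $-e_2\cdot f_1'=-1/3$), which is exactly what distinguishes $w^{\mathrm{tr}}_\pm$ from $w^{\mathrm{tr}}_0$ and guarantees that the three eigenvectors span $\co^3$.
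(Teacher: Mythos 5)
Your argument is essentially the paper's proof: both exploit $[\Gamma_h':(\L^{\mathrm{tr}}_h)']=3$ to identify the three Fourier modes $\x,\,\x\pm h^{-1}f_1'$ of $\L^{\mathrm{tr}}_h$ that coincide on $\Gamma_h$, and read off their images under the $\co^3$-identification as the eigenvectors of $\hat H^{\mathrm{tr}}_h(\x)$ with eigenvalues $E^{\mathrm{tr}}(\x)$ and $E^{\mathrm{tr}}(\x\pm h^{-1}f_1')$. One small slip in your closing remark: $-e_2\cdot f_1'=+1/3$ (not $-1/3$), so under $\x\mapsto\x\pm h^{-1}f_1'$ the second and third components pick up \emph{conjugate} nontrivial cube roots of unity rather than the same one --- which is precisely what makes the three vectors mutually orthogonal in $\co^3$.
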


\begin{proof}
By our identification, the generalized eigenfunction $\g_\x$ is transformed to 
\[
\Psi_\x(z) = e^{2\pi i\x\cdot z}\begin{pmatrix}1 \\ \phi_{e_3}(\x) \\ \phi_{-e_2}(\x) \end{pmatrix}
\in \ell^\infty(\Gamma_h;\co^3), 
\quad z\in\Gamma_h. 
\]
Now we note $e^{2\pi i\x\cdot z}=e^{2\pi i\x'\cdot z}$ for any $z\in\Gamma_h$ 
if and only if $\x-\x'\in \Gamma_h'$. Since $\#(\Gamma_h'/(\L^{\mathrm{tr}}_h)')=3$, 
for each $\x_0\in\O_h$ there are 3 such elements in $\re^2/(\L^{\mathrm{tr}}_h)'$, 
and we can choose $\x_0$ and $\x_0\pm h^{-1}f_1'$ as the representative elements. 
Thus we have 3 eigenfunctions: 
\[
e^{2\pi i\x_0\cdot z}w^{\mathrm{tr}}_0(\x_0), 
\quad 
e^{2\pi i\x_0\cdot z}w^{\mathrm{tr}}_\pm(\x_0)
\]
with the eigenvalues $E^{\mathrm{tr}}_0(\x_0)$ and $E^{\mathrm{tr}}_\pm(\x_0)$ 
for each $\x_0\in\O_h$. By applying the Fourier transform, we learn 
$\d(\x-\x_0)w^{\mathrm{tr}}_0(\x_0)$ 
and $\d(\x-\x_0)w^{\mathrm{tr}}_\pm(\x_0)$
are (generalized) eigenfunctions of the multiplication operator $\hat H^{\mathrm{tr}}_h(\x)\cdot$ 
on $\re^2/\Gamma_h'$. Thus we have 
\begin{align*}
\d(\x-\x_0) \hat H^{\mathrm{tr}}_h(\x_0) w^{\mathrm{tr}}_\#(\x_0)
&=\hat H^{\mathrm{tr}}_h(\x) \d(\x-\x_0)w^{\mathrm{tr}}_\#(\x_0)\\
&= E^{\mathrm{tr}}_\#(\x_0) \d(\x-\x_0)w^{\mathrm{tr}}_\#(\x_0)\\
&= \d(\x-\x_0) E^{\mathrm{tr}}_\#(\x_0) w^{\mathrm{tr}}_\#(\x_0),
\end{align*}
i.e., $\hat H^{\mathrm{tr}}_h(\x_0) w^{\mathrm{tr}}_\#(\x_0)
= E^{\mathrm{tr}}_\#(\x_0) w^{\mathrm{tr}}_\#(\x_0)$, where 
$\#=0$ or $\pm$. These complete the proof. 
\end{proof}

\begin{rem}
In fact, by direct computations, we have 
\begin{align*}
\phi_{e_3}(\x\pm h^{-1}f_1')&= e^{\pm 2\pi ie_3\cdot f_1'}\phi_{e_3}(\x) 
= -e^{\mp\pi i/3}\phi_{e_3}(\x), \\
\phi_{-e_2}(\x\pm h^{-1}f_1')&= e^{\mp 2\pi ie_2\cdot f_1'}\phi_{-e_2}(\x) 
= -e^{\pm \pi i/3}\phi_{-e_2}(\x). 
\end{align*}
Using these expressions, we can easily confirm 
$w^{\mathrm{tr}}_0$ and $w^{\mathrm{tr}}_\pm$ form an orthonormal 
basis of $\co^3$ for each $\x$, which is expected from the general theory. 
\end{rem}

\subsection{Proof of Theorem~\ref{thm-hexa}}

Now we fix a fundamental domain for $\Gamma_h'$.  
We let $\O_h$ be the first Brillouin zone:
\[
\O_h := \bigset{\x\in\re^2}{ |x|\leq |\x-f'| \text{ for any }f'\in \Gamma'_h} .
\]


We denote the projection to $w^{\mathrm{tr}}_0(\x)$ in $\co^3$ by $\Pi^{\mathrm{tr}}(\x)$: 
\[
\Pi^{\mathrm{tr}}(\x) u = \jap{w^{\mathrm{tr}}_0(\x),u}_{\co^3} w^{\mathrm{tr}}_0(\x), 
\quad u\in \co^3, 
\]
and the projection to $w^{\mathrm{hex}}_-(\x)$ in $\co^2$ by $\Pi^{\mathrm{hex}}(\x)$: 
\[
\Pi^{\mathrm{hex}}(\x)v = \jap{w^{\mathrm{hex}}_-(\x),v}_{\co^2} w^{\mathrm{hex}}_-(\x), 
\quad v\in \co^2. 
\]
Then we have 

\begin{lem}\label{lem-tr-proj}
Let $\m\in\co\setminus\re$. There is $C>0$ such that 
\[
\bignorm{(\hat H^{\mathrm{tr}}_h(\x) -\m)^{-1}- (E^{\mathrm{tr}}_0(\x) -\m)^{-1}\Pi^{\mathrm{tr}}(\x)}_{\co^3\to\co^3}
\leq C h^2
\]
and
\[
\bignorm{(\hat H^{\mathrm{hex}}_h(\x) -\m)^{-1}- (E^{\mathrm{hex}}_-(\x) -\m)^{-1}\Pi^{\mathrm{hex}}(\x)}_{\co^2\to\co^2}
\leq C h^2.
\]
\end{lem}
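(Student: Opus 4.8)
The plan is to diagonalize the Hermitian matrices $\hat H^{\mathrm{hex}}_h(\x)$ and $\hat H^{\mathrm{tr}}_h(\x)$ via the eigendata of Lemmas~\ref{lem-spec-hex} and~\ref{lem-spec-tr}, expand each resolvent as a sum of rank-one terms over the orthonormal eigenbasis, and observe that after subtracting the term retained in the statement only the high-energy bands survive. Each spectral projection is an orthogonal projection for the inner product \eqref{eq-co-ip}, hence has operator norm one, so every leftover term is governed by the scalar factor $\bigabs{(E-\m)^{-1}}$. The whole estimate therefore reduces to a uniform lower bound of order $h^{-2}$ on the discarded bands.

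In the hexagonal case I would write
\[
(\hat H^{\mathrm{hex}}_h(\x)-\m)^{-1}
=(E^{\mathrm{hex}}_-(\x)-\m)^{-1}\Pi^{\mathrm{hex}}(\x)
+(E^{\mathrm{hex}}_+(\x)-\m)^{-1}\bigpare{1-\Pi^{\mathrm{hex}}(\x)},
\]
so that the difference in the statement is exactly the last summand. By Lemma~\ref{lem-spec-hex}, $E^{\mathrm{hex}}_+(\x)=h^{-2}(3+\bigabs{\a(\x)})\geq 3h^{-2}$, whence $\bigabs{(E^{\mathrm{hex}}_+(\x)-\m)^{-1}}\leq Ch^2$: for small $h$ one bounds $\bigabs{E^{\mathrm{hex}}_+-\m}\geq E^{\mathrm{hex}}_+-\abs{\m}\geq\tfrac32 h^{-2}$, and for $h$ bounded away from $0$ one uses $\bigabs{E^{\mathrm{hex}}_+-\m}\geq\abs{\Im\m}$. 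Since $1-\Pi^{\mathrm{hex}}(\x)$ has norm one, the second inequality follows.

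The triangular case is organized identically. With $\Pi^{\mathrm{tr}}_\pm$ the rank-one projections onto $w^{\mathrm{tr}}_\pm(\x)$,
\[
(\hat H^{\mathrm{tr}}_h(\x)-\m)^{-1}-(E^{\mathrm{tr}}_0(\x)-\m)^{-1}\Pi^{\mathrm{tr}}(\x)
=\sum_{\#=\pm}(E^{\mathrm{tr}}_\#(\x)-\m)^{-1}\Pi^{\mathrm{tr}}_\#(\x),
\]
and it suffices to establish $E^{\mathrm{tr}}_\pm(\x)\geq c\,h^{-2}$ for all $\x\in\O_h$, with $c>0$ independent of $h$. Writing $E^{\mathrm{tr}}(\y)=h^{-2}g(h\y)$ for the $h$-independent function $g(\th)=6-2\sum_{j=1}^3\cos(2\pi e_j\cdot\th)$ and using $\O_h=h^{-1}\O_1$, where $\O_1$ denotes the first Brillouin zone of $\Gamma'$, this lower bound is equivalent to $g(\z\pm f_1')\geq c>0$ for all $\z=h\x\in\overline{\O_1}$. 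Granting it, $\bigabs{(E^{\mathrm{tr}}_\pm-\m)^{-1}}\leq Ch^2$ follows precisely as in the hexagonal step.

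The lower bound on $g(\z\pm f_1')$ is the crux, and I expect it to be a matter of lattice geometry rather than analysis. Since $g\geq 0$ is continuous and vanishes exactly on $(\L^{\mathrm{tr}})'$, compactness of $\overline{\O_1}$ reduces the claim to showing $\z\pm f_1'\notin(\L^{\mathrm{tr}})'$ for every $\z\in\overline{\O_1}$. Now $(\L^{\mathrm{tr}})'\subset\Gamma'$ has index three, and the identities $e_1'=2f_1'+f_2'$, $e_2'=f_1'+2f_2'$ show that $\pm f_1'$ lies in a nontrivial coset; hence $(\L^{\mathrm{tr}})'\mp f_1'$ consists of \emph{nonzero} points of $\Gamma'$. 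On the other hand $\overline{\O_1}$, being the closed first Brillouin zone of $\Gamma'$, contains no nonzero point of $\Gamma'$, since $f'\in\Gamma'\setminus\{0\}$ with $\abs{f'}\leq\abs{f'-g'}$ for all $g'\in\Gamma'$ would give $\abs{f'}\leq 0$ on choosing $g'=f'$. Therefore $\z\pm f_1'\in(\L^{\mathrm{tr}})'$ is impossible on $\overline{\O_1}$, and the desired $g(\z\pm f_1')\geq c>0$ follows by compactness, completing the argument.
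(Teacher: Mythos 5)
Your proposal is correct and follows essentially the same route as the paper: the paper's proof is a two-sentence sketch that reduces both bounds to the uniform lower bounds $E^{\mathrm{tr}}_\pm(\x)\geq ch^{-2}$ on $\O_h$ (because the zero set of $E^{\mathrm{tr}}$ modulo $(\L^{\mathrm{tr}}_h)'$ is only the origin, which the shifts by $\pm h^{-1}f_1'$ avoid) and $E^{\mathrm{hex}}_+(\x)\geq 3h^{-2}$. Your spectral decomposition and the coset/Brillouin-zone argument simply make precise the details the paper leaves implicit.
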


\begin{proof}
We note $h^2 E^{\mathrm{tr}}(\x)$ vanishes only at 0 in $\re^2/(\L^{\mathrm{tr}}_h)'$ and 
the neighborhood of 0 is contained in $\O_h$ under our setting. Hence $h^2 E^{\mathrm{tr}}_\pm(\x)$ 
do not vanish on $\O_h$, and bounded from below by some $c>0$. 
The first estimate follows from this observation. The second estiate follows from the fact 
$E^{\mathrm{hex}}_+(\x)\geq 3h^{-2}$ for all $\x$. 
\end{proof}

The following elementary estimate follows from the fact: $|\x|\leq Ch^{-1}$ in $\O_h$, and useful 
in the following. 

\begin{lem} \label{lem-hex-ellipticity}
Let $\m\in\co\setminus \re$. Then there is $C>0$ such that 
\[
\bigabs{(E^{\mathrm{tr}}_0(\x) -\m)^{-1}}\leq C\jap{\x}^{-2}, \quad 
\bigabs{(E^{\mathrm{hex}}_-(\x) -\m)^{-1}}
\leq C \jap{\x}^{-2}
\]
for all $\x\in\O_h$.
\end{lem}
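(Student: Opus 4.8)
The statement asserts a uniform lower bound on the eigenvalues $E^{\mathrm{tr}}_0(\x)$ and $E^{\mathrm{hex}}_-(\x)$ by $c\jap{\x}^2$, expressed after inversion. Let me sketch my approach.

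Let me examine what these eigenvalues look like near the origin and near the boundary of $\O_h$.

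First I would treat $E^{\mathrm{hex}}_-$, which seems most directly accessible. From Lemma~\ref{lem-spec-hex} we have $E^{\mathrm{hex}}_-(\x)=h^{-2}(3-|\a(\x)|)$ where $\a(\x)=1+\phi_{f_1}(\x)+\phi_{f_2}(\x)$ and $\phi_f(\x)=e^{2\pi ihf\cdot\x}$. The plan is to Taylor-expand $|\a(\x)|$ around its maximum value $3$, which is attained when $\phi_{f_1}=\phi_{f_2}=1$, i.e. at $\x\equiv 0$ in $\re^2/\Gamma_h'$. A direct computation gives $|\a(\x)|^2 = 3 + 2\sum\cos(2\pi h f_i\cdot\x) + (\text{cross terms})$, and near $\x=0$ one finds $3-|\a(\x)| = c|\x|^2 h^2 + O(h^4|\x|^4)$ with $c>0$ (since the Hessian is positive definite — this is precisely the non-degeneracy that produces the limiting Laplacian). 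Hence $E^{\mathrm{hex}}_-(\x)\geq c'|\x|^2$ near the origin. The key geometric input, which I would cite from the proof of Lemma~\ref{lem-tr-proj}, is that $E^{\mathrm{hex}}_-$ vanishes \emph{only} at the origin inside $\O_h$; away from a fixed neighborhood of $0$ it is bounded below by a positive constant, and since $|\x|\leq Ch^{-1}$ on $\O_h$, the bound $E^{\mathrm{hex}}_-(\x)\geq c'\jap{\x}^2$ upgrades from $c'|\x|^2$ at the origin to $\jap{\x}^2$ uniformly. Combining with $|E^{\mathrm{hex}}_-(\x)-\m|\geq |\Im\m|$ handles the $\jap{\x}^{-2}$ bound: for $|\x|\leq 1$ use $|{\cdot}-\m|\geq|\Im\m|\geq |\Im\m|\jap{\x}^{-2}/2$, and for $|\x|\geq 1$ use $|{\cdot}-\m|\geq E^{\mathrm{hex}}_-(\x)-\Re\m\geq c'\jap{\x}^2/2$.

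The argument for $E^{\mathrm{tr}}_0(\x)=E^{\mathrm{tr}}(\x)=h^{-2}(6-2\sum_{j=1}^3\cos(2\pi he_j\cdot\x))$ is completely parallel. Near $\x=0$ the Taylor expansion gives $E^{\mathrm{tr}}(\x)=\sum_{j}(e_j\cdot 2\pi\x)^2 + O(h^2|\x|^4) = (3/2)|2\pi\x|^2 + O(h^2|\x|^4)$ by the computation in Example~\ref{example-tri}, so again $E^{\mathrm{tr}}_0(\x)\geq c''|\x|^2$ near the origin. As noted in the proof of Lemma~\ref{lem-tr-proj}, $E^{\mathrm{tr}}$ vanishes only at $0$ in $\re^2/(\L^{\mathrm{tr}}_h)'$ and that neighborhood sits inside $\O_h$, so the same dichotomy (positive lower bound away from the origin, quadratic vanishing at the origin, and $|\x|\leq Ch^{-1}$ on $\O_h$) yields $E^{\mathrm{tr}}_0(\x)\geq c''\jap{\x}^2$, and the resolvent bound follows as above.

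The main obstacle I anticipate is the uniformity in $h$ of the constant $C$, not the pointwise vanishing behavior. The Taylor remainders carry factors of $h^2$ but are multiplied by $|\x|^4$, which on $\O_h$ can be as large as $h^{-4}$; I must verify that the trigonometric lower bound $1-\cos\th\geq c\th^2$ (valid on $[-\pi,\pi]$) applies \emph{uniformly} over the rescaled variable $h\x$ ranging in the fixed compact set $\O_1$, rather than relying on a small-$\x$ expansion alone. Concretely I would rewrite, e.g., $E^{\mathrm{tr}}(\x) = h^{-2}\sum_j 2(1-\cos(2\pi he_j\cdot\x))$ and apply $1-\cos\th\geq (2/\pi^2)\th^2$ termwise for $|\th|\leq\pi$; since $\O_h = h^{-1}\O_1$ and $\O_1$ is the Brillouin zone where each $|2\pi h e_j\cdot\x|$ stays bounded by $\pi$ (up to the finitely many $e_j$, using that the $e_j$ span and the Brillouin-zone geometry controls the phases), this gives $E^{\mathrm{tr}}(\x)\geq c|\x|^2$ with $c$ \emph{independent of $h$}. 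This termwise, scale-invariant trigonometric estimate is cleaner than Taylor expansion and is where the real care is needed; once it is in place, the resolvent bounds are routine.
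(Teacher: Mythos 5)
Your argument is correct and is exactly the route the paper intends: the paper states this lemma without proof, remarking only that it is elementary and ``follows from the fact $|\x|\leq Ch^{-1}$ in $\O_h$,'' which is precisely the ingredient you use (quadratic vanishing at $\x=0$, a uniform positive lower bound for $h^2E$ away from the origin as already noted in the proof of Lemma~\ref{lem-tr-proj}, and $|\x|\leq Ch^{-1}$ to convert that into $E\geq c|\x|^2$, hence $c\jap{\x}^2$ up to the $|\Im\m|$ contribution near $\x=0$). Your extra care about $h$-uniformity via the rescaled variable $h\x\in\O_1$ is the right way to make the omitted details rigorous.
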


We now consider the operators $\hat\Theta_h(\x)$, $\hat\Theta_h(\x)^*$, 
$w^{\mathrm{tr}}_0(\x)$ and $w^{\mathrm{hex}}_-(\x)$ near $\x=0$. 
By the Taylor expansions, we learn 
\[
\hat\Theta_h(\x) =
\begin{pmatrix} 1 & 0 \\ 0 & 1 \\ \frac12 & \frac12 
\end{pmatrix} 
+\frac{2\pi i h}{2} \begin{pmatrix} 0 & 0 \\ 0 & 0 \\-e_2\cdot\x &  e_1\cdot\x
\end{pmatrix} 
+O(h^2|\x|^2), \\
\]
\[
\hat \Theta_h^*(\x) 
=\frac23 \begin{pmatrix} 1 & 0 & \frac12 \\ 0 & 1 & \frac12 \end{pmatrix} 
+\frac{2\pi i h}{3} \begin{pmatrix} 0 & 0 & e_2\cdot\x \\ 0 & 0 &  -e_1\cdot\x \end{pmatrix} 
+O(h^2|\x|^2), 
\]
\[
w^{\mathrm{tr}}_0(\x)= \begin{pmatrix} 1\\ 1 \\ 1\end{pmatrix} 
+2\pi i h \begin{pmatrix} 0\\ e_3\cdot\x \\ -e_2\cdot\x\end{pmatrix} 
+O(h^2|\x|^2), 
\]
\[
w^{\mathrm{hex}}_-(\x)= \begin{pmatrix} 1 \\ 1 \end{pmatrix} 
+2\pi i h \begin{pmatrix} 0 \\ e_3\cdot \x \end{pmatrix} 
+O(h^2|\x|^2). 
\]
Combining these with the above lemmas, we can show the following basic estimates:

\begin{lem}\label{lem-hex-key1}
For $\m\in\co\setminus\re$, there is $C>0$ such that
\begin{align*}
\norm{(1-\Theta_h \Theta_h^*)(H^{\mathrm{tr}}_h/2-\m)^{-1}}_{\mathcal{B}(\ell^2(\Lambda^{\mathrm{tr}}_h))}&\leq Ch^2 , \\
\norm{(1-\Theta_h^* \Theta_h)(H^{\mathrm{hex}}_h-\m)^{-1}}_{\mathcal{B}(\ell^2(\Lambda^{\mathrm{hex}}_h))}&\leq Ch^2 .
\end{align*}
\end{lem}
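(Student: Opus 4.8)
The plan is to pass everything to the Fourier representation of Subsection~4.2, where $\Theta_h$, $\Theta_h^*$ and the relevant resolvents all become multiplication operators on $L^2(\O_h;\co^n)$; consequently each operator norm equals the supremum over $\x\in\O_h$ of a matrix norm. Thus the first estimate reduces to bounding
\[
\sup_{\x\in\O_h}\bignorm{(1-\hat\Theta_h(\x)\hat\Theta_h^*(\x))(\hat H^{\mathrm{tr}}_h(\x)/2-\m)^{-1}}_{\co^3\to\co^3},
\]
and the second to the analogous supremum with $\hat\Theta_h^*\hat\Theta_h$ and $(\hat H^{\mathrm{hex}}_h-\m)^{-1}$ on $\co^2$.

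First I would invoke Lemma~\ref{lem-tr-proj} (whose proof applies verbatim to $\hat H^{\mathrm{tr}}_h/2$, since $E^{\mathrm{tr}}_\pm/2$ is still bounded below by $ch^{-2}$) to replace $(\hat H^{\mathrm{tr}}_h/2-\m)^{-1}$ by $(E^{\mathrm{tr}}_0(\x)/2-\m)^{-1}\Pi^{\mathrm{tr}}(\x)$ at the cost of an $O(h^2)$ error; the prefactor $1-\hat\Theta_h\hat\Theta_h^*$ is uniformly bounded because all entries of $\hat\Theta_h$, $\hat\Theta_h^*$ are bounded, so this substitution costs only $O(h^2)$. Since $\Pi^{\mathrm{tr}}(\x)$ is the orthogonal projection onto the unit vector $w^{\mathrm{tr}}_0(\x)$, the problem reduces to estimating the scalar-times-vector quantity
\[
\bigabs{(E^{\mathrm{tr}}_0(\x)/2-\m)^{-1}}\cdot\bignorm{(1-\hat\Theta_h(\x)\hat\Theta_h^*(\x))w^{\mathrm{tr}}_0(\x)}_{\co^3}.
\]

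I would then split $\O_h$ into two regimes. For $|\x|\geq\d/h$ the elliptic bound of Lemma~\ref{lem-hex-ellipticity} gives $|(E^{\mathrm{tr}}_0/2-\m)^{-1}|\leq C\jap{\x}^{-2}\leq Ch^2$, while the vector factor is uniformly bounded, so the product is $O(h^2)$. For $|\x|\leq\d/h$ I would insert the Taylor expansions recorded just before the lemma, writing $\hat\Theta_h(\x)=A_0+hA_1+O(h^2|\x|^2)$, $\hat\Theta_h^*(\x)=A_0^*+hA_1^*+O(h^2|\x|^2)$ and $w^{\mathrm{tr}}_0(\x)=w_0+hw_1+O(h^2|\x|^2)$. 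A direct check shows the zeroth-order term vanishes, $(1-A_0A_0^*)w_0=0$, since $A_0A_0^*w_0=w_0$ with $w_0=(1,1,1)^t$. The crucial point—and the main obstacle—is that the $O(h)$ term also cancels, i.e.
\[
w_1=A_0A_0^*w_1+A_0A_1^*w_0+A_1A_0^*w_0.
\]
Verifying this identity is a finite computation that hinges on the linear relation $e_1+e_2+e_3=0$ among the generators; it yields $\bignorm{(1-\hat\Theta_h\hat\Theta_h^*)w^{\mathrm{tr}}_0}_{\co^3}\leq Ch^2|\x|^2$ on this region, and combined with $|(E^{\mathrm{tr}}_0/2-\m)^{-1}|\leq C\jap{\x}^{-2}$ this gives the $O(h^2)$ bound because $|\x|^2\jap{\x}^{-2}\leq 1$.

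Finally, the second estimate is proved by the same scheme: use the second display of Lemma~\ref{lem-tr-proj} and Lemma~\ref{lem-hex-ellipticity} to reduce to $|(E^{\mathrm{hex}}_-(\x)-\m)^{-1}|\cdot\bignorm{(1-\hat\Theta_h^*(\x)\hat\Theta_h(\x))w^{\mathrm{hex}}_-(\x)}_{\co^2}$, split into the same two $\x$-regimes, and verify the vanishing of the zeroth- and first-order Taylor terms of $(1-\hat\Theta_h^*\hat\Theta_h)w^{\mathrm{hex}}_-$. I expect the first-order cancellation here to follow again from $e_1+e_2+e_3=0$ together with $\a(0)=3$, so that the non-smoothness of $w^{\mathrm{hex}}_-$ at $\a=0$ plays no role near $\x=0$.
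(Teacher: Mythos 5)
Your proposal is correct and follows essentially the same route as the paper: pass to the Fourier representation, use Lemmas~\ref{lem-tr-proj} and \ref{lem-hex-ellipticity} to reduce to the rank-one projections onto $w^{\mathrm{tr}}_0$ and $w^{\mathrm{hex}}_-$, and cancel the zeroth- and first-order Taylor terms using $e_1+e_2+e_3=0$ (the paper organizes this cancellation through the two intermediate identities $\hat\Theta_h(\x)w^{\mathrm{hex}}_-(\x)=w^{\mathrm{tr}}_0(\x)+O(h^2|\x|^2)$ and $\hat\Theta_h^*(\x)w^{\mathrm{tr}}_0(\x)=w^{\mathrm{hex}}_-(\x)+O(h^2|\x|^2)$ rather than expanding $\hat\Theta_h\hat\Theta_h^*w^{\mathrm{tr}}_0$ directly, but the computation is the same). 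Your explicit splitting of $\O_h$ into the regions $|\x|\le\d/h$ and $|\x|\ge\d/h$ is a sensible extra precaution for the second estimate, where $w^{\mathrm{hex}}_-$ degenerates near the zeros of $\a(\x)$; the paper leaves this point implicit.
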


\begin{proof}
For $\x\in\O_h$, using the above observations, we have 
\begin{align}
\hat\Theta_h(\x) w^{\textrm{hex}}_-(\x) 
&= \begin{pmatrix} 1 & 0 \\ 0 & 1 \\ \frac{1}{2} & \frac{1}{2} \end{pmatrix}
\begin{pmatrix} 1 \\ 1 \end{pmatrix} 
+2\pi ih\begin{pmatrix} 1 & 0 \\ 0 & 1 \\ \frac12 & \frac12 \end{pmatrix} 
\begin{pmatrix} 0 \\ e_3\cdot \x \end{pmatrix} \nonumber\\
& +\frac{2\pi i h}{2} \begin{pmatrix} 0 & 0 \\ 0 & 0 \\-e_2\cdot\x &  e_1\cdot\x\end{pmatrix} 
\begin{pmatrix} 1 \\ 1 \end{pmatrix} +O(h^2|\x|^2) \nonumber \\
&= \begin{pmatrix} 1 \\ 1 \\ 1 \end{pmatrix}
+2\pi i h \begin{pmatrix} 0\\ e_3\cdot\x \\ -e_2\cdot\x\end{pmatrix}  +O(h^2|\x|^2) \nonumber\\
&= w^{\textrm{tr}}_0(\x) + O(h^2|\x|^2) , \label{eq-ThwZ}
\end{align}
\begin{align}
\hat\Theta_h^*(\x) w^{\textrm{tr}}_0(\x) 
&= \frac23 \begin{pmatrix} 1 & 0 & \frac{1}{2} \\ 0 & 1 & \frac{1}{2} \end{pmatrix} 
\begin{pmatrix} 1 \\ 1 \\ 1 \end{pmatrix}
+2\pi i h \frac23\begin{pmatrix} 1 & 0 & \frac{1}{2} \\ 0 & 1 & \frac{1}{2} \end{pmatrix} 
\begin{pmatrix} 0\\ e_3\cdot\x \\ -e_2\cdot\x\end{pmatrix}
\nonumber\\
& +\frac{2\pi i h}{3} \begin{pmatrix} 0 & 0 & e_2\cdot\x \\ 0 & 0 &  -e_1\cdot\x \end{pmatrix} 
\begin{pmatrix} 1 \\ 1 \\ 1\end{pmatrix} 
+O(h^2|\x|^2)\nonumber\\
&= \begin{pmatrix} 1 \\ 1 \end{pmatrix}
+2\pi i h \begin{pmatrix} 0 \\ e_3\cdot\x \end{pmatrix}+O(h^2|\x|^2)\nonumber \\
&= w^{\textrm{hex}}_-(\x) + O(h^2|\x|^2).\label{eq-ThSwZ}
\end{align}
In particular, we obtain 
\begin{equation}\label{eq-ThThS}
\hat\Theta_h(\x)\hat\Theta_h(\x)^* w^{\textrm{tr}}_0(\x) 
=w^{\textrm{tr}}_0(\x) +O(h^2|\x|^2),
\end{equation}
and 
\begin{equation}\label{eq-ThSTh}
\hat\Theta_h(\x)^*\hat\Theta_h(\x) w^{\textrm{hex}}_-(\x) 
=w^{\textrm{hex}}_-(\x) +O(h^2|\x|^2). 
\end{equation}
Thus we have, using \eqref{eq-ThThS}, Lemmas~\ref{lem-tr-proj} and \ref{lem-hex-ellipticity}, 
\begin{align*}
&\norm{(1-\Theta_h \Theta_h^*)(\hat H^{\mathrm{tr}}_h/2-\m)^{-1}}_{\mathcal{B}(\ell^2(\Lambda^{\mathrm{tr}}_h))} \\
&= 2\norm{(1-\hat\Theta_h \hat\Theta_h^*)(\hat H^{\mathrm{tr}}_h(\cdot)-2\m)^{-1}}_{\mathcal{B}(L^2(\O_h;\co^3))} \\
&= \frac23\sup_{\x\in\O_h} \bigabs{(E^{\mathrm{tr}}_0(\x)-2\mu)^{-1}(1-\hat\Theta_h(\x) \hat\Theta_h^*(\x)) \Pi^{\textrm{tr}}(\x)} + O(h^2) \\
&= \sup_{\x\in\O_h} \bigabs{O(\jap{\x}^{-2}) O(h^2|\x|^2)}+ O(h^2) 
 = O(h^2).
\end{align*}
The other estimate is similarly proved using \eqref{eq-ThSTh}, Lemmas~\ref{lem-tr-proj} 
and \ref{lem-hex-ellipticity}. 
\end{proof}

\begin{lem} \label{lem-hex-key2}
Let $\m\in\co\setminus\re$. Then there is $C>0$ such that 
\[
\norm{\Theta_h^* (H^{\mathrm{tr}}_h/2-\m)^{-1}- (H^{\mathrm{hex}}_h-\m)^{-1} 
\Theta_h^*}_{\mathcal{B}(\ell^2(\Lambda^{\mathrm{tr}}_h), \ell^2(\Lambda^{\mathrm{hex}}_h))}
\leq Ch^2. 
\]
\end{lem}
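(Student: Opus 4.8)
The plan is to pass to the Fourier representation and reduce the estimate to a uniform bound on matrix symbols. Since $F_h$ is unitary and $F_h\Theta_h^* F_h^*=\hat\Theta_h^*(\x)\cdot$, $F_h H^{\mathrm{tr}}_h F_h^*=\hat H^{\mathrm{tr}}_h(\x)\cdot$, $F_h H^{\mathrm{hex}}_h F_h^*=\hat H^{\mathrm{hex}}_h(\x)\cdot$, the operator norm to be bounded equals
\[
\esssup_{\x\in\O_h}\bignorm{\hat\Theta_h^*(\x)\bigpare{\tfrac12\hat H^{\mathrm{tr}}_h(\x)-\m}^{-1}-\bigpare{\hat H^{\mathrm{hex}}_h(\x)-\m}^{-1}\hat\Theta_h^*(\x)}_{\co^3\to\co^2}.
\]
First I would invoke Lemma~\ref{lem-tr-proj} (applied with $\m$ replaced by $2\m$ for the triangular factor) to replace each resolvent by its leading rank-one part: writing $a(\x)=(E^{\mathrm{tr}}_0(\x)/2-\m)^{-1}$ and $b(\x)=(E^{\mathrm{hex}}_-(\x)-\m)^{-1}$, one has $(\tfrac12\hat H^{\mathrm{tr}}_h-\m)^{-1}=a\Pi^{\mathrm{tr}}+O(h^2)$ and $(\hat H^{\mathrm{hex}}_h-\m)^{-1}=b\Pi^{\mathrm{hex}}+O(h^2)$. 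Since $\hat\Theta_h^*(\x)$ is uniformly bounded, the $O(h^2)$ remainders contribute $O(h^2)$ to the symbol, and it remains to estimate $a\,\hat\Theta_h^*\Pi^{\mathrm{tr}}-b\,\Pi^{\mathrm{hex}}\hat\Theta_h^*$.

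Next I would establish that $\hat\Theta_h^*$ intertwines the two spectral projections up to a controlled error. Using $\Pi^{\mathrm{tr}}(\x)u=\jap{w^{\mathrm{tr}}_0(\x),u}_{\co^3}w^{\mathrm{tr}}_0(\x)$ together with \eqref{eq-ThSwZ}, and $\Pi^{\mathrm{hex}}(\x)v=\jap{w^{\mathrm{hex}}_-(\x),v}_{\co^2}w^{\mathrm{hex}}_-(\x)$ together with \eqref{eq-ThwZ} (to rewrite $\jap{w^{\mathrm{hex}}_-,\hat\Theta_h^*u}_{\co^2}=\jap{\hat\Theta_h w^{\mathrm{hex}}_-,u}_{\co^3}$), both $\hat\Theta_h^*\Pi^{\mathrm{tr}}$ and $\Pi^{\mathrm{hex}}\hat\Theta_h^*$ are seen to equal $\jap{w^{\mathrm{tr}}_0(\x),\cdot}_{\co^3}\,w^{\mathrm{hex}}_-(\x)$ modulo $O(h^2|\x|^2)$; hence
\[
\hat\Theta_h^*(\x)\Pi^{\mathrm{tr}}(\x)=\Pi^{\mathrm{hex}}(\x)\hat\Theta_h^*(\x)+R(\x),\quad \norm{R(\x)}\leq Ch^2|\x|^2.
\]
Substituting this yields $(a-b)\,\Pi^{\mathrm{hex}}\hat\Theta_h^*+aR$. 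The term $aR$ is $O(h^2)$ because $|a(\x)|\leq C\jap{\x}^{-2}$ by Lemma~\ref{lem-hex-ellipticity} while $\norm{R(\x)}\leq Ch^2\jap{\x}^2$, so the two powers of $\jap{\x}$ cancel.

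Finally I would bound the scalar factor $a(\x)-b(\x)=\dfrac{E^{\mathrm{hex}}_-(\x)-E^{\mathrm{tr}}_0(\x)/2}{(E^{\mathrm{tr}}_0(\x)/2-\m)(E^{\mathrm{hex}}_-(\x)-\m)}$ uniformly by $Ch^2$, which I expect to be the main obstacle. I would split $\O_h$ according to the size of $|h\x|$. For $|h\x|\leq\d$ I would Taylor-expand both dispersion relations in the scaled variable $t=h\x$: writing $E^{\mathrm{tr}}_0/2=h^{-2}G_{\mathrm{tr}}(t)$ and $E^{\mathrm{hex}}_-=h^{-2}G_{\mathrm{hex}}(t)$, a direct computation shows that $G_{\mathrm{tr}}$ and $G_{\mathrm{hex}}$ vanish to second order at $t=0$ with the same Hessian (the common value $3\pi^2|\x|^2$, consistent with $H^{\mathrm{hex}}_0=\tfrac12H^{\mathrm{tr}}_0=-\tfrac34\triangle$), so $G_{\mathrm{tr}}(t)-G_{\mathrm{hex}}(t)=O(|t|^4)$ and therefore $|E^{\mathrm{hex}}_--E^{\mathrm{tr}}_0/2|\leq Ch^2|\x|^4$; combined with $|a|,|b|\leq C\jap{\x}^{-2}$ this gives $|a-b|\leq Ch^2|\x|^4\jap{\x}^{-4}\leq Ch^2$. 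For $|h\x|\geq\d$ one has $\jap{\x}\geq ch^{-1}$, so Lemma~\ref{lem-hex-ellipticity} yields $|a|,|b|\leq C\jap{\x}^{-2}\leq Ch^2$ directly, whence $|a-b|\leq Ch^2$; this region needs no smoothness of $G_{\mathrm{hex}}$ and so sidesteps the Dirac points where $|\a(\x)|$ vanishes. Collecting the three contributions bounds the symbol by $Ch^2$ uniformly on $\O_h$, which by unitarity of $F_h$ proves the lemma. The delicate point throughout is the bookkeeping of the $\jap{\x}$-weights: every error carries a factor $|\x|^2$ that must be paid for by the $\jap{\x}^{-2}$ decay of a resolvent factor, and the verification that the quadratic parts of the two dispersions coincide is what makes the scheme close at the sharp rate $O(h^2)$.
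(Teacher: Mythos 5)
Your proposal is correct and follows essentially the same route as the paper's proof: reduction to the matrix symbols, replacement of both resolvents by their rank-one parts via Lemma~\ref{lem-tr-proj}, the observation that $\hat\Theta_h^*\Pi^{\mathrm{tr}}$ and $\Pi^{\mathrm{hex}}\hat\Theta_h^*$ agree with the operator $\jap{w^{\mathrm{tr}}_0,\cdot}_{\co^3}\,w^{\mathrm{hex}}_-$ up to $O(h^2|\x|^2)$, and the resolvent identity combined with $E^{\mathrm{hex}}_--E^{\mathrm{tr}}_0/2=O(h^2|\x|^4)$ and the $\jap{\x}^{-2}$ bounds of Lemma~\ref{lem-hex-ellipticity}. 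Your extra splitting of $\O_h$ according to the size of $|h\x|$ is just a more explicit justification of the Taylor-expansion step that the paper leaves as an elementary computation.
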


\begin{proof}
We first note 
\[
E^{\mathrm{tr}}_0(\x)= \frac32|2\pi\x|^2 + O(h^2|\x|^4)
\]
and 
\[
E^{\mathrm{hex}}_-(\x)= \frac34|2\pi\x|^2 +O(h^2|\x|^4)
\]
which are proved by elementary computations as well as in Subsection~\ref{subsec-lattice-proof}. We have, in particular, 
\begin{equation}\label{eq-EV-diff}
E^{\mathrm{hex}}_-(\x) - E^{\mathrm{tr}}_0(\x)/2 =O(h^2|\x|^4).
\end{equation}

By Lemma~\ref{lem-tr-proj}, we have 
\[
\hat\Theta_h^*(\hat H^{\mathrm{tr}}_h(\x)/2-\m)^{-1} 
=(E^{\mathrm{tr}}_0(\x)/2-\m)^{-1} \hat\Theta_h(\x)^*\Pi^{\mathrm{tr}}(\x) +O(h^2),
\]
and 
\[
(\hat H^{\mathrm{hex}}_h(\x)-\m)^{-1} \hat\Theta_h^*(\x) 
= (E^{\mathrm{hex}}_-(\x) -\m)^{-1} \Pi^{\mathrm{hex}}(\x) \hat\Theta_h^*(\x) +O(h^2). 
\]
If we denote 
\[
Bu = w^{\textrm{hex}}_-(\x)\jap{w^{\textrm{tr}}_0(\x),u}_{\co^3}, \quad u\in \co^3, 
\]
then 
\[
\hat\Theta_h(\x)^*\Pi^{\mathrm{tr}}(\x) =B+O(h^2|\x|^2), \quad 
 \Pi^{\mathrm{hex}}(\x) \hat\Theta_h^*(\x) =B+O(h^2|\x|^2). 
\]
Then, using Lemma~\ref{lem-hex-ellipticity}, and \eqref{eq-EV-diff}, we learn, 
\begin{align*}
&\hat\Theta_h^*(\hat H^{\mathrm{tr}}_h(\x)/2-\m)^{-1} - 
(\hat H^{\mathrm{hex}}_h(\x)-\m)^{-1} \hat\Theta_h^*(\x) \\
& \quad = \bigpare{(E^{\mathrm{tr}}_0(\x)/2-\m)^{-1} - (E^{\mathrm{hex}}_-(\x) -\m)^{-1}}
B +O(h^2)\\
&\quad =(E^{\mathrm{tr}}_0(\x)/2-\m)^{-1}(E^{\mathrm{hex}}_-(\x) - E^{\mathrm{tr}}_0(\x)/2)
(E^{\mathrm{hex}}_-(\x) -\m)^{-1}B +O(h^2)\\
&\quad =O(\jap{\x}^{-2})O(h^2|\x|^4)O(\jap{\x}^{-2}) +O(h^2) = O(h^2). 
\end{align*}
This is equivalent to the conclusion of the lemma. 
\end{proof}

Combining these lemmas, we can now conclude the convergence of the continuum limit 
for $H^{\mathrm{tr}}_h$ and $H^{\mathrm{hex}}_h$. 

\begin{lem}\label{lem-hex-key3}
Let $\m\in\co\setminus\re$. Then there is $C>0$ such that
\begin{equation*}
\bignorm{(H^{\mathrm{tr}}_h/2-\m)^{-1}-\Theta_h (H^{\mathrm{hex}}_h-\m)^{-1}
\Theta_h^*}_{\mathcal{B}(\ell^2(\L^{\mathrm{tr}}_h))} \leq Ch^2, 
\end{equation*}
and 
\begin{equation*}
\bignorm{\Theta_h^*(H^{\mathrm{tr}}_h/2-\m)^{-1}\Theta_h - (H^{\mathrm{hex}}_h-\m)^{-1}
}_{\mathcal{B}(\ell^2(\L^{\mathrm{hex}}_h))} \leq Ch^2. 
\end{equation*}
\end{lem}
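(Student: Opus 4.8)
The plan is to obtain both estimates by a purely algebraic combination of Lemmas~\ref{lem-hex-key1} and~\ref{lem-hex-key2}. Two preliminary facts are needed. First, $\Theta_h$ is bounded uniformly in $h$: in the Fourier picture $F_h\Theta_h F_h^*$ is multiplication by $\hat\Theta_h(\x)$, whose entries are fixed affine combinations of the unit-modulus functions $\phi_f(\x)$, so $\sup_{\x\in\O_h}\norm{\hat\Theta_h(\x)}\leq C$ and hence $\norm{\Theta_h}\leq C$ with $C$ independent of $h$. Second, since $H^{\mathrm{tr}}_h$ and $H^{\mathrm{hex}}_h$ are self-adjoint, the operator $1-\Theta_h^*\Theta_h$ is self-adjoint, and the hypothesis is symmetric under $\m\mapsto\bar\m$, so taking the adjoint of the second line of Lemma~\ref{lem-hex-key1} gives the right-sided version
\[
\bignorm{(H^{\mathrm{hex}}_h-\m)^{-1}(1-\Theta_h^*\Theta_h)}_{\mathcal{B}(\ell^2(\Lambda^{\mathrm{hex}}_h))}\leq Ch^2.
\]

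For the first estimate I would begin with the first line of Lemma~\ref{lem-hex-key1}, written as $(H^{\mathrm{tr}}_h/2-\m)^{-1}=\Theta_h\Theta_h^*(H^{\mathrm{tr}}_h/2-\m)^{-1}+O(h^2)$, then insert Lemma~\ref{lem-hex-key2} into the trailing factor, $\Theta_h^*(H^{\mathrm{tr}}_h/2-\m)^{-1}=(H^{\mathrm{hex}}_h-\m)^{-1}\Theta_h^*+O(h^2)$, and multiply on the left by $\Theta_h$. The uniform bound on $\Theta_h$ preserves the order of the remainder, so the right-hand side becomes $\Theta_h(H^{\mathrm{hex}}_h-\m)^{-1}\Theta_h^*+O(h^2)$, which is the claim.

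For the second estimate I would apply Lemma~\ref{lem-hex-key2} to the leading factor of $\Theta_h^*(H^{\mathrm{tr}}_h/2-\m)^{-1}\Theta_h$, giving $(H^{\mathrm{hex}}_h-\m)^{-1}\Theta_h^*\Theta_h+O(h^2)$ (the trailing $\Theta_h$ again only rescales the remainder). Finally I would remove $\Theta_h^*\Theta_h$ by means of the right-sided estimate recorded above, namely $(H^{\mathrm{hex}}_h-\m)^{-1}\Theta_h^*\Theta_h=(H^{\mathrm{hex}}_h-\m)^{-1}+O(h^2)$, obtaining $(H^{\mathrm{hex}}_h-\m)^{-1}+O(h^2)$, as desired.

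There is no genuine analytic difficulty in this lemma; all of the substance is already contained in Lemmas~\ref{lem-hex-key1} and~\ref{lem-hex-key2}. The only points demanding attention are bookkeeping: confirming the uniform-in-$h$ boundedness of $\Theta_h$ so that the $O(h^2)$ error terms remain $O(h^2)$ after multiplication by $\Theta_h$ or $\Theta_h^*$, and correctly generating the opposite-side forms of the cited lemmas via adjunction, where the self-adjointness of the operators and the $\m\leftrightarrow\bar\m$ symmetry of the hypotheses are used.
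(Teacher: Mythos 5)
Your proposal is correct and is essentially the argument the paper intends: the paper states Lemma~\ref{lem-hex-key3} without proof as an immediate algebraic consequence of Lemmas~\ref{lem-hex-key1} and~\ref{lem-hex-key2}, and your combination (including the uniform bound on $\Theta_h$ and the adjoint/$\m\leftrightarrow\bar\m$ trick to get the right-sided version of Lemma~\ref{lem-hex-key1}) supplies exactly the routine bookkeeping that was left implicit.
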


\begin{proof}[Proof of Theorem~\ref{thm-hexa}] 
We recall $J^{\mathrm{hex}}_h = J^{\mathrm{tr}}_h \Theta_h$, and 
$H^\mathrm{hex}_0 =H^{\mathrm{tr}}_0/2$. Then we have 
\begin{align*}
&(H^{\mathrm{hex}}_0-\m)^{-1}-J^{\mathrm{hex}}_h (H^{\mathrm{hex}}_h-\m)^{-1} 
(J^{\mathrm{hex}}_h)^* \\
&\quad = (H^{\mathrm{tr}}_0/2-\m)^{-1}-J^{\mathrm{tr}}_h (H^{\mathrm{tr}}_h/2-\m)^{-1} 
(J^{\mathrm{tr}}_h)^* \\
&\qquad + J^{\mathrm{tr}}_h \bigpare{(H^{\mathrm{tr}}_h/2-\m)^{-1} -\Theta_h (H^{\mathrm{hex}}_h/2-\m)^{-1} \Theta_h^*}(J^{\mathrm{tr}}_h)^*.  
\end{align*}
Thus we conclude 
\begin{align*}
&\bignorm{(H^{\mathrm{hex}}_0-\m)^{-1}-J^{\mathrm{hex}}_h (H^{\mathrm{hex}}_h-\m)^{-1} 
(J^{\mathrm{hex}}_h)^*} \\
&\quad \leq  2\bignorm{(H^{\mathrm{tr}}_0-2\m)^{-1}-J^{\mathrm{tr}}_h (H^{\mathrm{tr}}_h-2\m)^{-1} 
(J^{\mathrm{tr}}_h)^*} \\
&\qquad + \bignorm{J^{\mathrm{tr}}_h}^2  \bignorm{(H^{\mathrm{tr}}_h/2-\m)^{-1} -\Theta_h (H^{\mathrm{hex}}_h-\m)^{-1} \Theta_h^*}\\
&\quad \leq Ch^2
\end{align*}
by Lemma~\ref{lem-hex-key3} and Theorem~\ref{thm-free-hamiltonian}. 
The other estimate is proved similarly. 
\end{proof}


\section{Elliptic operators: Proof of Theorem~\ref{thm-elliptic-convergence}}\label{sec-elliptic-proof}

General strategy of the proof of Theorem~\ref{thm-elliptic-convergence} is as follows. 
We use the same notation as in Section~\ref{sec-standard-lattices} with $\Lambda=h\ze^d$. 
As usual, we estimate 
\[
J_h^*(P-z)^{-1} - (P_h^\pm -z)^{-1} J_h^*
= (P_h^\pm -z)^{-1} (P_h^\pm J_h^* -J_h^*P)(P-z)^{-1}.
\]
We use the matrix notation: 
\[
\pmb{D}_h^\pm : u \in \ell^2(h\ze^d) \mapsto \begin{pmatrix} D_{h;1}^\pm u \\ \vdots \\ D_{h;d}^\pm u \end{pmatrix}
\in [\ell^2(h\ze^d)]^{\oplus d},
\]
and 
\[
\pmb{A}_h= (a_{j,k}(x))_{j,k=1}^d : [\ell^2(h\ze^d)]^{\oplus d} \to [\ell^2(h\ze^d)]^{\oplus d},
\]
so that 
\begin{equation}\label{eq-Ph-decomp}
P_h^\pm = (\pmb{D}_h^\pm)^* \pmb{A}_h \pmb{D}_h^\pm+V_h \quad \text{on }\ell^2(h\ze^d). 
\end{equation}

Analogously to the decomposition \eqref{eq-Ph-decomp}, we set
\[
\pmb{D} : u \in H^1(\re^d) \mapsto \begin{pmatrix} D_1 u \\ \vdots \\ D_d u\end{pmatrix}
\in [L^2(\re^d)]^{\oplus d},
\]
and 
\[
\pmb{A}= (a_{j,k}(x))_{j,k=1}^d : [L^2(\re^d)]^{\oplus d} \to [L^2(\re^d)]^{\oplus d},
\]
so that 
\begin{equation}\label{eq-P-decomp}
P= \pmb{D}^* \pmb{A} \pmb{D} +V \quad \text{on } L^2(\re^d)
\end{equation}
with the form domain $\mathcal{Q}(P)=\mathcal{D}(P^{1/2})=H^1(\re^d)\cap\mathcal{Q}(V)$. 
Then we consider 
\begin{align*}
P_h^\pm J_h^* -J_h^*P
&= (\pmb{D}_h^\pm)^* \pmb{A}_h \pmb{D}_h^\pm  J_h^* -J_h^* \pmb{D}^* \pmb{A} \pmb{D}  
+ V_h J_h^* - J_h^* V \\
&=(\pmb{D}_h^\pm)^* \pmb{A}_h (\pmb{D}_h^\pm  J_h^* -J_h^* \pmb{D}) 
+ (\pmb{D}_h^\pm)^* (\pmb{A}_h J_h^* -J_h^*  \pmb{A})  \pmb{D}\\
&\quad + (J_h \pmb{D}_h^\pm- \pmb{D} J_h)^*  \pmb{A} \pmb{D}
 + V_h J_h^* - J_h^* V
\end{align*}
where we denote $J_h^*\mathbf{1}_{d}=\mathrm{diag}(J_h^*)$ by the same symbol $J_h^*$ for simplicity.  
We will show
\begin{equation}\label{eq-elli-key-1}
\bignorm{(\pmb{D}_h^\pm  J_h^* -J_h^* \pmb{D})(P-z)^{-1}}_{\mathcal{B}(L^2(\re^d),
[\ell^2(h\ze^d)]^{\oplus d})}\leq Ch,
\end{equation}
\begin{equation}\label{eq-elli-key-2}
\bignorm{(\pmb{A}_h J_h^* -J_h^*  \pmb{A})  \pmb{D}(P-z)^{-1}}_{\mathcal{B}(L^2(\re^d),
[\ell^2(h\ze^d)]^{\oplus d})}\leq C h^\a, 
\end{equation}
\begin{equation}\label{eq-elli-key-3}
\bignorm{(J_h \pmb{D}_h^\pm- \pmb{D} J_h)(P_h^\pm -\bar{z})^{-1}}_{\mathcal{B}(\ell^2(h\ze^d),[L^2(\re^d)]^{\oplus d})}
\leq Ch, 
\end{equation}
and 
\begin{equation}\label{eq-elli-key-4}
\bignorm{(V_h-z)^{-1} J_h^*-J_h^* (V-z)^{-1}}_{\mathcal{B}(L^2(\re^d),
\ell^2(h\ze^d))}\leq C h^\a,
\end{equation}
for $h\in (0,1]$. Combining these with Lemma~\ref{lem-rel-bdd-elliptic}, we conclude
\begin{equation}\label{eq-elli-diff-est}
\bignorm{J_h^*(P-z)^{-1} - (P_h^\pm -z)^{-1} J_h^*}_{\mathcal{B}(L^2(\re^d),\ell^2(h\ze^d))}
\leq Ch^\a,
\end{equation}
and Theorem~\ref{thm-elliptic-convergence} follows as in Section~\ref{sec-standard-lattices}.

We first consider the relative boundedness. We denote
\begin{align*}
Q:=\pmb{D}^* \pmb{A} \pmb{D}
\end{align*}
and
\begin{align*}
Q_h^\pm := (\pmb{D}_h^\pm)^* \pmb{A}_h \pmb{D}_h^\pm .
\end{align*}

\begin{lem}\label{lem-rel-bdd-elliptic}
$Q$ and $V$ are $P$-bounded. Moreover, $Q_h^\pm$ and $V_h$ are $P_h^\pm$-bounded uniformly in $h\in(0,1]$, i.e. 
\begin{equation}\label{eq-Qh-Phbdd}
\sup_{h\in(0,1]}\norm{Q_h^\pm(P_h^\pm-z)^{-1}} <  \infty ,
\end{equation}
\begin{equation}\label{eq-Vh-Phbdd}
\sup_{h\in(0,1]}\norm{V_h^\pm(P_h^\pm-z)^{-1}} <  \infty .
\end{equation}
\end{lem}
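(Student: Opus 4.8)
The plan is to reduce both statements to the relative boundedness of the potential alone, prove the form-level bounds uniformly in $h$, and then upgrade them to the operator bounds.

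\textbf{Reduction.} Since $\pmb A_h$ is uniformly elliptic (Assumption~\ref{ass-elli}), $\jap{u,Q_h^\pm u}=\jap{\pmb D_h^\pm u,\pmb A_h\pmb D_h^\pm u}\ge c_0\norm{\pmb D_h^\pm u}^2\ge 0$, so $Q_h^\pm\ge 0$; combined with $V\ge -M$ from Assumption~\ref{ass-V} this shows $P_h^\pm$ is self-adjoint and bounded below, and likewise for $P=\pmb D^*\pmb A\pmb D+V$. Writing $Q_h^\pm=P_h^\pm-V_h$ gives
\[
Q_h^\pm(P_h^\pm-z)^{-1}=I+z(P_h^\pm-z)^{-1}-V_h(P_h^\pm-z)^{-1},
\]
so it suffices to bound $V_h(P_h^\pm-z)^{-1}$ (and $V(P-z)^{-1}$), and the bound on $Q_h^\pm$ (resp. $Q$) follows. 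After enlarging $M$ I set $W_h:=V_h+M\ge 1$, so the goal becomes $\sup_{h}\norm{W_h(P_h^\pm-z)^{-1}}<\infty$.

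\textbf{Form bounds.} For $g\in\ell^2(h\ze^d)$ put $u=(P_h^\pm-z)^{-1}g$. The imaginary part of $\jap{u,(P_h^\pm-z)u}=\jap{u,g}$ gives $\norm u\le|\Im z|^{-1}\norm g$, and the real part gives
\[
\jap{u,Q_h^\pm u}+\norm{W_h^{1/2}u}^2=\Re\jap{u,g}+(\Re z+M)\norm u^2\le C\norm g^2 ,
\]
both left-hand terms being nonnegative. Hence $\norm{(Q_h^\pm)^{1/2}(P_h^\pm-z)^{-1}}$ and $\norm{W_h^{1/2}(P_h^\pm-z)^{-1}}$ are bounded uniformly in $h\in(0,1]$; the continuous case is identical.

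\textbf{Upgrading to the operator bound.} From $(P_h^\pm-z)u=g$ one has $W_hu=g-Q_h^\pm u+(z+M)u$, hence
\[
\norm{W_hu}^2=\Re\jap{W_hu,g}+\Re\bigl[(z+M)\jap{W_hu,u}\bigr]-\Re\jap{W_hu,Q_h^\pm u}.
\]
The first two terms are $\le\tfrac14\norm{W_hu}^2+C\norm g^2$ by Cauchy--Schwarz and the form bound, so everything reduces to the cross term $\Re\jap{Q_h^\pm u,W_hu}$. Using $(\pmb D_h^\mp)^*=\pmb D_h^\pm$ and the discrete Leibniz rule
\[
\pmb D_h^\pm(W_hu)=\mathbf W_h^\pm\,\pmb D_h^\pm u+(\pmb D_h^\pm W_h)\,u ,
\]
where $\mathbf W_h^\pm$ is the diagonal matrix of the shifted potentials, I split $\jap{Q_h^\pm u,W_hu}=\jap{\pmb A_h\pmb D_h^\pm u,\mathbf W_h^\pm\pmb D_h^\pm u}+\jap{\pmb A_h\pmb D_h^\pm u,(\pmb D_h^\pm W_h)u}$. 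The real part of the first summand is a nonnegative weighted Dirichlet form up to a correction coming from the difference of the shifted potentials, which is controlled by the Lipschitz continuity of $(V+M)^{-1}$; the second summand is the delicate one.

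\textbf{The main obstacle.} Everything hinges on the potential-gradient term $\jap{\pmb A_h\pmb D_h^\pm u,(\pmb D_h^\pm W_h)u}$. The Lipschitz bound on $(V+M)^{-1}$ yields $\bigabs{\pmb D_h^\pm W_h}\le CW_h^2$ with $C$ independent of $h$ — this is exactly what cancels the naive $h^{-1}$ produced by differencing the potential — but the resulting factor $W_h^2$ cannot be absorbed into $\e\norm{W_hu}^2$ plus the kinetic energy by a direct Cauchy--Schwarz. I would close the estimate by a partition of unity $\sum_k\zeta_k^2=1$ on which the potential is frozen to a constant $W_k$: for a constant potential the elliptic estimate is trivial, since $\norm{(Q_h^\pm+W_k)v}^2=\norm{Q_h^\pm v}^2+2W_k\jap{Q_h^\pm v,v}+W_k^2\norm v^2\ge W_k^2\norm v^2$, so $\sum_kW_k^2\norm{\zeta_ku}^2\le\sum_k\norm{(Q_h^\pm+W_k)(\zeta_ku)}^2$, and the right-hand side is compared with $\norm{(P_h^\pm+M)u}^2$ modulo the frozen-potential error $\sum_k\norm{\zeta_k(W_h-W_k)u}^2$ and the first-order commutators $[Q_h^\pm,\zeta_k]u$, the latter being controlled by the form bounds. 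The genuine difficulty is that a \emph{fixed} patch scale gives only $\abs{W_h-W_k}\lesssim W_k$ — of the same order as the quantity to be absorbed — so one must let the patch size shrink like $1/W_k$ where the potential is large and use ellipticity to let the kinetic term dominate there, carrying out this potential-dependent scale splitting uniformly in $h\in(0,1]$. This is precisely the elliptic estimate for the discrete operators. The fact that neighbouring points of $h\ze^d$ are at distance $h\le 1$ is what makes the scale-one doubling in Assumption~\ref{ass-V}, and hence the whole argument, uniform in $h$.
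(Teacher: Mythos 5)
Your reduction to the bound on $V_h(P_h^\pm-z)^{-1}$ and the two uniform form bounds (on $\norm{W_h^{1/2}(P_h^\pm-z)^{-1}}$ and $\norm{\pmb{D}_h^\pm(P_h^\pm-z)^{-1}}$) are correct and coincide with the first half of the paper's proof. The gap is in the upgrade from the form bound to the operator bound. You commute $Q_h^\pm$ with the full potential $W_h=V_h+M$, whose discrete gradient is controlled only by $W_h^2$ (via the Lipschitz hypothesis on $(V+M)^{-1}$), and you correctly observe that the resulting term $\jap{\pmb{A}_h\pmb{D}_h^\pm u,(\pmb{D}_h^\pm W_h)u}\lesssim\norm{\pmb{D}_h^\pm u}\,\norm{W_h^2u}$ cannot be absorbed into $\e\norm{W_hu}^2$ plus the controlled quantities. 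The frozen-coefficient partition of unity you then sketch does not close the argument: the patches must shrink like $1/W_k$, which for unbounded $V$ (allowed under Assumption~\ref{ass-V}, e.g.\ $V(x)=e^{|x|}$) eventually falls below the lattice spacing $h$, where localization on $h\ze^d$ is meaningless; moreover the commutators $[Q_h^\pm,\zeta_k]$ carry a factor $|\pmb{D}_h^\pm\zeta_k|\sim W_k$ and hence reintroduce the weighted kinetic term $\norm{W_h\pmb{D}_h^\pm u}$, which is of exactly the same uncontrolled order as the quantity you are trying to bound, not something ``controlled by the form bounds'' as asserted. As written, the key step is an announced plan, not a proof.

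The missing idea is the paper's square-root/mollification trick, which makes the whole localization unnecessary (and also dispenses with the Lipschitz hypothesis, which the lemma does not need). Replace $V$ by $\tilde V=\y*V$; Assumption~\ref{ass-V} alone gives $c_1^{-1}V\le\tilde V\le c_1V$ and $|\pa_x\tilde V|\le C\tilde V$, hence, setting $W_h:=\tilde V_h^{1/2}$, the commutator $[\pmb{D}_h^\pm,W_h]$ is bounded by $CW_h$ --- \emph{one} power of $W_h$, not two, precisely because one differences the square root. Writing
\[
\tilde V_h(P_h^\pm)^{-1}=\bigpare{W_h(P_h^\pm)^{-1/2}}\bigpare{W_h(P_h^\pm)^{-1/2}}^*+W_h(P_h^\pm)^{-1}[P_h^\pm,W_h](P_h^\pm)^{-1}
\]
and expanding $[P_h^\pm,W_h]=[\pmb{D}_h^\pm,W_h]^*\pmb{A}_h\pmb{D}_h^\pm+(\pmb{D}_h^\pm)^*\pmb{A}_h[\pmb{D}_h^\pm,W_h]$, every factor is bounded uniformly in $h$ by the two form bounds you already established, and the estimate closes directly.
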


\begin{proof}
We only prove \eqref{eq-Vh-Phbdd}. The other statements follow easily from it. 
The proof is quite similar to that of Lemma \ref{lem-rel-bdd}, or Lemmas~2.4, 2.5 of \cite{NaTad}, 
but we need to be slightly more careful since the free Hamiltonian $H_{0,h}$ is replaced by $Q_h^\pm$. 
We include the proof for the completeness. 

We may assume $V\geq1$ without loss of generality.
Then it follows from $V_h\geq0$ and uniform ellipticity of $\pmb{A}$ that
\[
\jap{u,P_h^\pm u} \geq \jap{u,Q_h^\pm u} = \jap{\pmb{D}_h^\pm u,\pmb{A}_h \pmb{D}_h^\pm u} \geq c_0 \norm{\pmb{D}_h^\pm u}^2.
\]
This implies $Q_h^\pm\geq0$ and
\[
\jap{u,P_h^\pm u} \geq \jap{u,V_h u} = \norm{V_h^{1/2} u}^2 .
\]
Thus we have 
\begin{equation}\label{eq-form-bdd-1}
\sup_{h\in(0,1]} \norm{V_h^{1/2} (P_h^\pm)^{-1/2}}<\infty
\end{equation}
and
\begin{equation}\label{eq-form-bdd-2}
\sup_{h\in(0,1]} \norm{\pmb{D}_h^\pm (P_h^\pm)^{-1/2}}<\infty.
\end{equation}
We set $\tilde{V}:=\y * V$, where $\y\in C_c^\infty(\re^d)$ is fixed so that $\y(x)\geq0$, $\supp[\y]\subset\bra{|x|\leq1}$ and $\int \eta(x) dx=1$ hold.
We note that $\tilde{V}$ satisfies for $x\in\re^d$ and $\a\in\ze_+^d$
\begin{equation}\label{eq-tilde-V-1}
c_1^{-1} V(x)\leq \tilde V(x)\leq c_1 V(x), 
\end{equation}
\begin{equation}\label{eq-tilde-V-2}
\bigabs{\pa_x^\a \tilde V(x) }\leq C_\a \tilde V(x)
\end{equation}
with some $c_1>0$ and $C_\a>0$.
\eqref{eq-tilde-V-1} implies that it suffices to show $\tilde V_h$ is $P_h^\pm$-bounded uniformly in $h$.
We write $W_h(x)=\tilde V_h(x)^{1/2}\geq 1$, and compute 
\[
\tilde V_h (P_h^\pm)^{-1}
=(W_h(P_h^\pm)^{-1/2})(W_h(P_h^\pm)^{-1/2})^* + W_h (P_h^\pm)^{-1}[P_h^\pm,W_h](P_h^\pm)^{-1}.
\]
The first term in the right hand side is uniformly bounded thanks to \eqref{eq-form-bdd-1}.
For the second term, we learn
\[
[W_h, P_h^\pm]= [W_h, (\pmb{D}_h^\pm)^* \pmb{A}_h \pmb{D}_h^\pm]
= [\pmb{D}_h^\pm,W_h]^* \pmb{A}_h \pmb{D}_h^\pm + (\pmb{D}_h^\pm)^* \pmb{A}_h [\pmb{D}_h^\pm,W_h]. 
\]
Since $[\pmb{D}_h^\pm,W_h] W_h^{-1}$ is bounded uniformly in $h$ by \eqref{eq-tilde-V-2}, we can show that, with the help of \eqref{eq-form-bdd-2}, the second term is also bounded uniformly in $h$, which concludes the uniform $P_h$-boundedness of $\tilde V_h$.
\end{proof}

We next prepare a standard elliptic estimate. We denote

\[
H_{0;h}=\sum_{j=1}^d D_{h;j}^+ D_{h;j}^- = \sum_{j=1}^d D_{h;j}^- D_{h;j}^+.
\]

\begin{lem}\label{lem-elliptic-estimate}
There are $c_1, c_2>0$ such that 
\[
\norm{P_h^\pm u}^2 \geq c_1 \norm{H_{0;h} u}^2 -c_2 \norm{u}^2
\]
for $u\in \ell^2(h\ze^d)$, $0<h\leq 1$. 
\end{lem}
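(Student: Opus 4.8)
The plan is to split off the principal part $Q_h^\pm=(\pmb{D}_h^\pm)^*\pmb{A}_h\pmb{D}_h^\pm$ and to absorb the potential $V_h$ into the relative boundedness already proved in Lemma~\ref{lem-rel-bdd-elliptic}. That lemma yields $\sup_{h\in(0,1]}\norm{Q_h^\pm(P_h^\pm-z)^{-1}}<\infty$, hence a bound $\norm{Q_h^\pm u}^2\leq C(\norm{P_h^\pm u}^2+\norm{u}^2)$ uniform in $h$. Consequently it suffices to establish the \emph{potential-free} elliptic estimate
\[
\norm{Q_h^\pm u}^2\geq c\norm{H_{0;h}u}^2-C\norm{u}^2,\quad u\in\ell^2(h\ze^d),\ 0<h\leq1,
\]
and then combine the two squared inequalities to conclude. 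I route $V_h$ through Lemma~\ref{lem-rel-bdd-elliptic} rather than handling it directly here, because the discrete gradient of $V$ is not controlled pointwise, whereas the relative boundedness argument tames it through the mollified potential $\tilde V$.

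For the potential-free estimate I would test $Q_h^\pm$ against $H_{0;h}u$ and use Cauchy--Schwarz, $\norm{Q_h^\pm u}\,\norm{H_{0;h}u}\geq\Re\jap{Q_h^+u,H_{0;h}u}$. Treating the $+$ case and writing $H_{0;h}=\sum_l D_{h;l}^-D_{h;l}^+$, the fact that the difference operators mutually commute together with $(D_{h;j}^-)^*=D_{h;j}^+$ lets me rewrite
\[
\jap{Q_h^+u,H_{0;h}u}=\sum_{j,k,l}\jap{D_{h;l}^+(a_{jk}D_{h;k}^+u),\,D_{h;l}^+D_{h;j}^+u}.
\]
Expanding $D_{h;l}^+(a_{jk}D_{h;k}^+u)=a_{jk}\,D_{h;l}^+D_{h;k}^+u+[D_{h;l}^+,a_{jk}]D_{h;k}^+u$ separates this into a principal term $I$ and a commutator term $II$.

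The principal term $I=\sum_{j,k,l}\jap{a_{jk}D_{h;l}^+D_{h;k}^+u,\,D_{h;l}^+D_{h;j}^+u}$ is, for fixed $x$ and $l$, the Hermitian form of the real symmetric matrix $\pmb{A}(x)$ applied to the vector $(D_{h;l}^+D_{h;j}^+u(x))_{j}$, so the uniform ellipticity of Assumption~\ref{ass-elli} gives $I\geq c_0\sum_{l,j}\norm{D_{h;l}^+D_{h;j}^+u}^2$. A Plancherel computation with $F_h$ then supplies the exact identity $\sum_{l,j}\norm{D_{h;l}^+D_{h;j}^+u}^2=\norm{H_{0;h}u}^2$, since, writing $d_j^+$ for the Fourier symbol of $D_{h;j}^+$, the symbol $\sum_l|d_l^+|^2$ of $H_{0;h}$ squared equals $\sum_{l,j}|d_l^+d_j^+|^2$; hence $I\geq c_0\norm{H_{0;h}u}^2$. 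For $II$, each commutator $[D_{h;l}^+,a_{jk}]$ is, up to a shift, multiplication by the discrete derivative $D_{h;l}^+a_{jk}$, bounded by $\norm{\pa_l a_{jk}}_\infty$ uniformly in $h$ by the mean value theorem and Assumption~\ref{ass-elli}. Using $\norm{\pmb{D}_h^+u}^2=\jap{u,H_{0;h}u}$ this gives $\abs{II}\leq C\norm{\pmb{D}_h^+u}\,\norm{H_{0;h}u}\leq C\norm{u}^{1/2}\norm{H_{0;h}u}^{3/2}$, which Young's inequality absorbs into $\tfrac{c_0}{2}\norm{H_{0;h}u}^2+C\norm{u}^2$. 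This yields $\Re\jap{Q_h^+u,H_{0;h}u}\geq\tfrac{c_0}{2}\norm{H_{0;h}u}^2-C\norm{u}^2$ and hence the potential-free estimate; the $-$ case is identical with $+\leftrightarrow-$.

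The hard part will be the variable-coefficient principal part. The crucial device is to test against $H_{0;h}u$ rather than to square $Q_h^\pm$ directly, so that $\pmb{A}(x)$ appears as a pointwise positive quadratic form and its ellipticity stays visible; the exact Fourier identity $\sum_{l,j}\norm{D_{h;l}^+D_{h;j}^+u}^2=\norm{H_{0;h}u}^2$ is then what makes the full discrete Hessian comparable to $\norm{H_{0;h}u}$ with constants uniform in $h$. The rest is bookkeeping: checking that the commutator bounds are uniform in $h$ and that all rearrangements are legitimate, which they are since $Q_h^\pm$ and $H_{0;h}$ are bounded operators on $\ell^2(h\ze^d)$ for each fixed $h$.
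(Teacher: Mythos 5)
Your proposal is correct and follows essentially the same route as the paper: reduce to $V_h\equiv 0$ via Lemma~\ref{lem-rel-bdd-elliptic}, pair $Q_h^\pm u$ with $H_{0;h}u$, split into a principal term controlled by pointwise uniform ellipticity of the discrete Hessian plus a commutator term bounded via $\|\pa_x a_{j,k}\|_\infty$, and absorb with Cauchy--Schwarz/Young. The only cosmetic differences are that you use the pure second differences $D_{h;l}^+D_{h;j}^+$ and Plancherel for the identity $\sum_{l,j}\|D_{h;l}^+D_{h;j}^+u\|^2=\|H_{0;h}u\|^2$, where the paper uses the mixed differences $D_{h;k}^+D_{h;j}^-$ and summation by parts.
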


\begin{proof}
It follows from Lemma~\ref{lem-rel-bdd-elliptic} that it suffices to show the assertion with $P_h^\pm$ replaced by $Q_h^\pm$, that is, we may assume $V_h\equiv 0$.

We consider the $+$-case. 
We note, by the Schwarz inequality, 
\[
\jap{P_h^+ u, H_{0;h}u} \leq \norm{P_h^+ u}\,\norm{H_{0;h}u}
\leq \frac{1}{4\e}\norm{P_h^+ u}^2 +\e\norm{H_{0;h}u}^2
\]
with any $\e>0$. On the other hand, we have 
\begin{align*}
\jap{P_h^+ u, H_{0;h}u} 
&=\sum_{j=1}^d \jap{D_{h;j}^- P_h^+ u, D_{h;j}^- u} \\
&= \sum_{j=1}^d \jap{D_{h;j}^- u,  P_h^+ D_{h;j}^- u} 
+\sum_{j=1}^d \jap{[D_{h;j}^-,P_h^+]u, D_{h;j}^-u} = \mathrm{I} +\mathrm{II}. 
\end{align*}
We compute each terms: 
\begin{align*}
\mathrm{I} &= \sum_{j,k,\ell=1}^d \jap{(D_{h;j}^- u), D_{h;k}^- a_{k,\ell} D_{h;\ell}^+ (D_{h;j}^-u)}\\
&=\sum_{j,k,\ell=1}^d \jap{(D_{h;k}^+ D_{h;j}^- u),  a_{k,\ell} (D_{h;\ell}^+ D_{h;j}^-u)}\\
&\geq c_0 \sum_{j,k=1}^d\jap{D_{h;k}^+ D_{h;j}^- u,  D_{h;k}^+ D_{h;j}^-u} 
=c_0\sum_{j,k=1}^d\jap{D_{h;k}^+ D_{h;k}^- u,  D_{h;j}^+ D_{h;j}^- u} \\
&= c_0 \jap{H_{0;h}u, H_{0;h}u}
=c_0\norm{H_{0;h}u}^2, 
\end{align*}
where we have used the ellipticity in the third line. We note 
\[
[\pmb{D}_h^-, P_h^+]= [\pmb{D}_h^-, \pmb{D}_h^- \pmb{A}_h \pmb{D}_h^+]
=\pmb{D}_h^- [\pmb{D}_h^-, \pmb{A}_h] \pmb{D}_h^+, 
\]
and $[D_{h;j}^-, a_{k,\ell}]$ are uniformly bounded operators by Assumption~\ref{ass-elli}. 
Thus we have 
\begin{align*}
|\mathrm{II}| &\leq C\sum_{j,k,\ell=1}^d \norm{D_{h;k}^+ D_{h;j}^- u}\, 
\norm{D_{h;\ell}^+ u} \leq C' \norm{H_{0;h}u}\, \norm{H_{0;h}^{1/2}u}\\
&\leq C' \norm{H_{0;h}u}(\e' \norm{H_{0;h}u}+(1/4\e')\norm{u})\\
&=C'\e'\norm{H_{0;h}u}^2+(C'/4\e')\norm{H_{0;h}u}\norm{u}\\
&\leq 2C'\e' \norm{H_{0;h}u}^2 +(C'/16\e'^2)\norm{u}^2. 
\end{align*}
Thus we have 
\[
\frac{1}{4\e}\norm{P_h^+ u}^2 +\e\norm{H_{0;h}u}^2
\geq c_0\norm{H_{0;h} u}^2 -2C'\e' \norm{H_{0;h}u}^2 -(C'/16\e'^2)\norm{u}^2. 
\]
We choose $\e,\e'>0$ so that 
\[
c_0/2 -2C'\e' -\e > 0, 
\]
e.g., $\e'= c_0/8C'$, $\e=c_0/5$, and we have 
\[
\norm{P_h^+ u}^2 \geq (2\e c_0)\norm{H_{0;h}u}^2 - (\e C'/4\e'^2)\norm{u}^2, 
\]
and this completes the proof with $c_1=2\e c_0$ and $c_2=\e C'/4{\e'}^2$. 
\end{proof}

\begin{lem}\label{lem-diff-esti}
Let $z\in\co\setminus\re$. Then there is $C>0$ such that 
\[
\bignorm{(\pmb{D}_h^\pm  J_h^* -J_h^* \pmb{D})(H_0-z)^{-1}}_{\mathcal{B}(L^2(\re^d),
[\ell^2(h\ze^d)]^{\oplus d})}\leq Ch
\]
and 
\[
\bignorm{(J_h \pmb{D}_h^\pm- \pmb{D} J_h)(H_{0;h} -\bar{z})^{-1}}_{\mathcal{B}(\ell^2(h\ze^d),[L^2(\re^d)]^{\oplus d})}\leq Ch
\]
uniformly for $h\in (0,1]$, where $H_0=\pmb{D}\cdot\pmb{D}=-\triangle$. 
\end{lem}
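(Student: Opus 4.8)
The plan is to conjugate both operators to the Fourier side, where they become explicit multipliers, and to reduce each estimate to a pointwise symbol bound. It suffices to treat each component $D_{h;j}^\pm J_h^*-J_h^* D_j$ and $J_h D_{h;j}^\pm-D_j J_h$ separately. On $\ell^2(h\ze^d)$ (so $\omega_0=1$) the difference operators are Fourier multipliers: with $q^+_{h;j}(\x)=\frac1{ih}(e^{2\pi ih\x_j}-1)$ and $q^-_{h;j}(\x)=\frac1{ih}(1-e^{-2\pi ih\x_j})$ one has $F_h D_{h;j}^\pm F_h^*=q^\pm_{h;j}(\x)\cdot$, while $\fourier D_j\fourier^*=2\pi\x_j$ and $\fourier H_0\fourier^*=|2\pi\x|^2$. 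The symbols $q^\pm_{h;j}$ are $h^{-1}\L'$-periodic, and the elementary inequality $|e^{i\theta}-1-i\theta|\le\theta^2/2$ supplies the single analytic input
\[
\bigabs{q^\pm_{h;j}(\x)-2\pi\x_j}\le Ch|\x|^2,\qquad\x\in\re^d,
\]
uniformly in $h$; this is what yields the gain $O(h)$.

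For the first estimate I insert $\fourier^*\fourier$ between the difference operator and $(H_0-z)^{-1}$ and use the formula for $T_h^*=F_h J_h^*\fourier^*$ from the lemma above. Conjugation turns $D_{h;j}^\pm J_h^*-J_h^* D_j$ into a ``gather'' operator over the cosets $\x+h^{-1}\L'$, and because the resolvent is appended on the right and $q^\pm_{h;j}$ is periodic, the resolvent factor and the difference symbol are evaluated at the \emph{same} argument. The resulting kernel is the single multiplier
\[
m(\x)=\overline{\hat\f(h\x)}\bigpare{q^\pm_{h;j}(\x)-2\pi\x_j}(|2\pi\x|^2-z)^{-1},\qquad\x\in\re^d,
\]
summed over each coset. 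The symbol bound, the resolvent decay $\bigabs{(|2\pi\x|^2-z)^{-1}}\le C\jap\x^{-2}$, and boundedness of $\hat\f$ give $|m(\x)|\le Ch|\x|^2\jap\x^{-2}\le Ch$. Since $\hat\f$ is compactly supported, at most $O(1)$ cosets contribute at any point, so a Cauchy--Schwarz estimate bounds the operator norm by $C\sup_\x|m(\x)|\le Ch$.

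For the second estimate the computation is parallel but the geometry is different. Using $T_h=\fourier J_h F_h^*$, the operator $\fourier(J_h D_{h;j}^\pm-D_j J_h)(H_{0;h}-\bar z)^{-1}F_h^*$ acts on $w\in L^2(h^{-1}\O)$ by
\[
\x\mapsto\hat\f(h\x)\bigpare{q^\pm_{h;j}(\x)-2\pi\x_j}(p_{0,h}(\x)-\bar z)^{-1}\tilde w(\x),
\]
where $p_{0,h}(\x)=2h^{-2}\sum_k(1-\cos(2\pi h\x_k))$ is the periodic symbol of $H_{0;h}$ and $\tilde w$ the periodic extension of $w$. Expanding the $L^2(\re^d)$-norm as a sum over cosets, the weight $(p_{0,h}(\x)-\bar z)^{-1}$ is now anchored at the base point $\x\in h^{-1}\O$, whereas the factor $q^\pm_{h;j}-2\pi(\cdot)_j$ sits at $\x+\y$. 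One first records $p_{0,h}(\x)\ge c|\x|^2$ on $h^{-1}\O$ (from $1-\cos\theta\ge\frac2{\pi^2}\theta^2$ for $|\theta|\le\pi$), hence $\bigabs{(p_{0,h}(\x)-\bar z)^{-1}}\le C\jap\x^{-2}$ uniformly in $h$.

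The main obstacle, and the only genuine difference between the two estimates, is the control of the off-diagonal ($\y\neq0$) aliasing terms of this last sum, where the decaying weight and the growing difference symbol are evaluated at different points. Here I would use that $\hat\f$ may be taken supported in a ball $B(0,\rho)$ with $\rho<1$ meeting $\L'$ only at $0$ (the construction in the Remark following Assumption~\ref{ass-ONS}). Then for $\y\neq0$ the factor $\hat\f(h(\x+\y))$ forces $h\x+n\in B(0,\rho)$ with $n=h\y\in\ze^d\setminus\{0\}$, so $|h\x|\ge1-\rho$, i.e.\ $|\x|\ge ch^{-1}$; on this region $\jap\x^{-4}\le Ch^4$, which defeats the crude bound $\bigabs{q^\pm_{h;j}(\x+\y)-2\pi(\x+\y)_j}^2\le Ch^2|\x+\y|^4\le Ch^{-2}$ available on $\supp\hat\f(h\,\cdot\,)$, giving $Ch^4\cdot h^{-2}=Ch^2$. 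For small $\x$ only the $\y=0$ term survives, and there $\jap\x^{-4}\cdot h^2|\x|^4\le Ch^2$. Taking the supremum over $\x$ and a square root yields the bound $Ch$, so both estimates hold for $h\in(0,1]$.
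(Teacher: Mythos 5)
Your argument is correct and follows essentially the same route as the paper's proof, which reduces both bounds to the pointwise symbol estimate $|D_{h;j}^\pm(\x)-D_j(\x)|\le Ch|\x|^2$ together with the $O(\jap{\x}^{-2})$ decay of the two resolvent symbols on $\supp\hat\varphi(h\,\cdot\,)$, deferring the aliasing bookkeeping to Lemma~2.3 of Nakamura--Tadano, which you instead carry out explicitly. The only cosmetic point is that the support property your off-diagonal estimate actually uses is $\dist(\supp\hat\varphi,\ze^d\setminus\{0\})>0$ rather than $\supp\hat\varphi\subset B(0,\rho)$ with $\rho<1$ (the latter fails for $d\ge 4$ with the Brillouin-zone construction), but the former does hold for the paper's choice of $\hat\varphi$ and suffices verbatim for your argument.
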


\begin{proof}
The proof is essentially the same as that of Lemma~2.3 of \cite{NaTad}, as well as the discussion in Section~\ref{subsec-lattice-proof}. Instead of considering $H_{0;h}(\x)-H_0(\x)$, we consider
$D_{h;j}^\pm(\x) - D_j(\x)$, where 
\[
D_{h;j}^\pm(\x) =\pm h^{-1}(1-e^{\pm 2\pi hi\x_j}), \quad\text{and}\quad 
D_j(\x)=2\pi i \x_j.
\]
It is easy to observe
\[
|D_{h;j}^\pm(\x)-D_j(\x)|\leq C h |\x|^2, 
\]
and $(H_0(\x)-z)^{-1}=O(|\x|^{-2})$ and $(H_{0;h}^\pm(\x)-z)=O(|\x|^{-2})$ 
on $\supp[\hat\f(h\x)]$. These imply the assertions. 
\end{proof}

\begin{cor}\label{cor-elli-diff}
Estimates \eqref{eq-elli-key-1} and \eqref{eq-elli-key-3} hold. 
\end{cor}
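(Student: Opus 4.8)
The idea is to reduce \eqref{eq-elli-key-1} and \eqref{eq-elli-key-3} to the free-operator estimates of Lemma~\ref{lem-diff-esti} by inserting the free resolvents and absorbing the full operators afterwards. The two ingredients that make this work are that $H_0=-\triangle$ is $P$-bounded in the continuum and that $H_{0;h}$ is $P_h^\pm$-bounded \emph{uniformly} in $h\in(0,1]$ in the discrete setting.

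For \eqref{eq-elli-key-1}, with the given $z\in\co\setminus\re$, I would write
\[
(\pmb{D}_h^\pm J_h^* - J_h^* \pmb{D})(P-z)^{-1}
= \bigpare{(\pmb{D}_h^\pm J_h^* - J_h^* \pmb{D})(H_0-z)^{-1}}\,
\bigpare{(H_0-z)(P-z)^{-1}}.
\]
The first factor is $O(h)$ by the first estimate of Lemma~\ref{lem-diff-esti}, so it suffices to bound $(H_0-z)(P-z)^{-1}$, i.e.\ to show $H_0$ is $P$-bounded. This follows from facts already at hand: by Lemma~\ref{lem-rel-bdd-elliptic}, $V$ is $P$-bounded, hence for $u\in\mathcal{D}(P)$ one has $Qu=Pu-Vu\in L^2(\re^d)$ with $\norm{Qu}\leq C(\norm{Pu}+\norm{u})$; since $Q=\pmb{D}^*\pmb{A}\pmb{D}$ is a uniformly elliptic divergence-form operator with bounded $C^1$ coefficients (Assumption~\ref{ass-elli}), standard global elliptic regularity gives $\norm{H_0u}\leq C(\norm{Qu}+\norm{u})$, and combining the two yields $\norm{H_0u}\leq C(\norm{Pu}+\norm{u})$. (Alternatively, the continuum estimate $\norm{H_0u}\leq C(\norm{Qu}+\norm{u})$ can be reproved by the very computation of Lemma~\ref{lem-elliptic-estimate}, with the difference operators $D_{h;j}^\pm$ replaced by $D_j$.)

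For \eqref{eq-elli-key-3} I would factor analogously
\[
(J_h \pmb{D}_h^\pm- \pmb{D} J_h)(P_h^\pm-\bar z)^{-1}
= \bigpare{(J_h \pmb{D}_h^\pm- \pmb{D} J_h)(H_{0;h}-\bar z)^{-1}}\,
\bigpare{(H_{0;h}-\bar z)(P_h^\pm-\bar z)^{-1}}.
\]
The first factor is $O(h)$ by the second estimate of Lemma~\ref{lem-diff-esti}, and the second factor is bounded uniformly in $h$ directly from Lemma~\ref{lem-elliptic-estimate}: that lemma gives $\norm{H_{0;h}u}\leq C(\norm{P_h^\pm u}+\norm{u})$ with $C$ independent of $h$, so taking $u=(P_h^\pm-\bar z)^{-1}f$ and using $P_h^\pm(P_h^\pm-\bar z)^{-1}=1+\bar z(P_h^\pm-\bar z)^{-1}$ together with the self-adjointness bound $\norm{(P_h^\pm-\bar z)^{-1}}\leq|\Im z|^{-1}$ yields $\norm{(H_{0;h}-\bar z)(P_h^\pm-\bar z)^{-1}}\leq C$ uniformly in $h\in(0,1]$. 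The two sign cases are identical, and multiplying the two factors gives \eqref{eq-elli-key-1} and \eqref{eq-elli-key-3}.

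The step I expect to carry the real content is the uniform discrete relative boundedness, but this is precisely what Lemma~\ref{lem-elliptic-estimate} already supplies; the continuum counterpart is the only remaining nontrivial point. Its subtlety is handling a possibly unbounded $V$, which I would circumvent by borrowing the $P$-boundedness of $V$ from Lemma~\ref{lem-rel-bdd-elliptic} rather than estimating $V$ directly. Once both relative-boundedness statements are in place, the corollary is merely the product of a factor that is $O(h)$ by Lemma~\ref{lem-diff-esti} and a uniformly bounded factor.
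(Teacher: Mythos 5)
Your proof is correct and follows essentially the same route as the paper: factor through the free resolvents, apply Lemma~\ref{lem-diff-esti} to get the $O(h)$ factor, and bound $(H_0-z)(P-z)^{-1}$ and $(H_{0;h}-\bar z)(P_h^\pm-\bar z)^{-1}$ via the relative-boundedness results (Lemmas~\ref{lem-rel-bdd-elliptic} and \ref{lem-elliptic-estimate}). The paper states this in two sentences; you have merely spelled out the same argument in more detail.
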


\begin{proof}
Since $(H_0-z)(P-z)^{-1}$ is bounded in $L^2(\re^d)$, \eqref{eq-elli-key-1} follows from the first inequality of Lemma~\ref{lem-diff-esti}. Similarly, \eqref{eq-elli-key-3} follows from the second inequality of 
Lemma~\ref{lem-diff-esti} since $(H_{0;h}-z)(P_h^\pm-z)^{-1}$ is uniformly bounded in $L^2(\re^d)$ by virtue of  Lemmas~\ref{lem-rel-bdd-elliptic}~and~\ref{lem-elliptic-estimate}. 
\end{proof}

\begin{proof}[Proof of Theorem~\ref{thm-elliptic-convergence}]
By Lemma~2.6 of \cite{NaTad}, we immediately conclude \eqref{eq-elli-key-4} and 
\[
\norm{\pmb{A}_h J_h^* -J_h^*  \pmb{A} }_{\mathcal{B}(L^2(\re^d),[\ell^2(h\ze^d)]^{\oplus d})}\leq C h^\a,
\quad h\in (0,1], 
\]
with any $\a\in (0,1)$. Since $\pmb{D}(P-z)^{-1}$ is bounded in $L^2(\re^d)$, \eqref{eq-elli-key-2} also follows immediately. Now we have \eqref{eq-elli-key-1} -- \eqref{eq-elli-key-4}, and they imply 
\eqref{eq-elli-diff-est}. We also have 
\[
\bignorm{(1-J_h J_h^*)(P-z)^{-1}}_{\mathcal{B}(L^2(\re^d))}\leq C h^2
\]
as well as Lemma~\ref{lem-1-JJ}. Combining these, we conclude the assertion as in Section~2.3. 
\end{proof}


\end{document}